\begin{document}

\title{Exact Reconstruction of Euclidean Distance Geometry Problem Using Low-rank Matrix Completion}
\author{Abiy Tasissa,
        \thanks{A. Tasissa is with the Department of Mathematics, Rensselaer Polytechnic Institute, Troy, NY 12180,  ({\tt tasisa@rpi.edu}).}
        \and
        Rongjie Lai
            \thanks{R. Lai is with the Department of Mathematics, Rensselaer Polytechnic Institute, Troy, NY 12180, 
         U.S.A. ({\tt lair@rpi.edu}). This work is partially supported by NSF DMS--1522645 and an NSF CAREER award, DMS--1752934.}
        }
\date{}
\maketitle
%\IEEEpeerreviewmaketitle
%\maketitle

\begin{abstract}

The Euclidean distance geometry problem arises in a wide variety of applications, from determining molecular conformations in computational chemistry to localization in sensor networks. When the distance information
is incomplete, the problem can be formulated as a nuclear norm minimization problem. 
In this paper, this minimization program is recast as a matrix completion problem of a
low-rank $r$ Gram matrix with respect to a suitable basis. 
The well known restricted isometry property can not be satisfied in this scenario. 
Instead, a dual basis approach is introduced to theoretically analyze the reconstruction problem.  
If the Gram matrix satisfies certain coherence conditions with parameter $\nu$, the main result shows
that the underlying configuration of $n$ points
can be recovered with very high probability from $O(nr\nu\log^{2}(n))$ uniformly random samples. 
Computationally, simple and fast algorithms are designed to solve the Euclidean distance geometry problem. 
Numerical tests on different three dimensional data and protein molecules validate
effectiveness and efficiency of the proposed algorithms.
\end{abstract}

\begin{IEEEkeywords}
Euclidean distance geometry, low-rank matrix completion, nuclear norm minimization, dual basis, random matrices, Gram matrix.
\end{IEEEkeywords}

\section{Introduction}

The Euclidean distance geometry problem, EDG hereafter, aims at constructing the configuration of points given partial information on pairwise inter-point distances. The problem has applications in diverse areas, such as molecular conformation in computational chemistry
\cite{glunt1993molecular,trosset1997applications,fang2013using}, dimensionality reduction in machine learning \cite{tenenbaum2000global}
and localization in sensor networks \cite{ding2010sensor,biswas2006semidefinite}. For instance, in the molecular conformation problem,
the goal is to determine the structure of protein given partial inter-atomic distances obtained
from nuclear magnetic resonance (NMR) spectroscopy experiments. Since the structure of protein
determines its physical and chemical properties, the molecular conformation problem
is crucial for biological applications such as drug design. In recent works, novel applications of EDG like solving
partial differential equations (PDEs) on manifolds represented as incomplete inter-point distance information have been explored in~\cite{lai2017solve}.

For the mathematical setup of the problem, consider a set of $n$ points $\Pb=\{\p_{1},\p_{2},...,\p_{n}\}^{T} \in 
\real^{n\times r}$.
The squared distance between any two points $\p_{i}$ and $\p_{j}$ is given by $d_{i,j}^{2} = ||\p_{i}-\p_{j}||_{2}^{2}
= \p_{i}^{T}\p_{i}+\p_{j}^{T}\p_{j}-2\p_{i}^{T}\p_{j}$. The Euclidean distance matrix $\D =[d_{i,j}^{2}]$ can be written compactly
as follows
\begin{equation}\label{eq:dist_matrix}
\D = \bm{1}\,\textrm{diag}(\Pb\Pb^{T})^{T}+\textrm{diag}(\Pb\Pb^{T})\,\bm{1}^{T}-2\Pb\Pb^{T}
\end{equation}
where $\bm{1}$ is a column vector of ones and $\textrm{diag}(\cdot)$ is a column vector
of diagonal entries of the matrix in consideration. $\X=\Pb\Pb^{T}$ is the inner product matrix well known as the Gram matrix. 
If the complete distance information is available, the coordinates can be recovered from the
eigendecomposition of the Gram matrix using classical multidimensional scaling (MDS)~\cite{torgerson1952multidimensional}. It should be noted that this solution is unique up to rigid transformations and translations.

In practice, the distance matrix is incomplete and finding the underlying point
coordinates with partial information is in general not possible.  
One approach proposed in \cite{lai2017solve} is based on a matrix completion applied to the Gram matrix
which we briefly summarize here.
Assume that the entries of $\D$ are sampled uniformly at random.
By construction, $\X$ is symmetric and positive semidefinite. The solution for $\X$ is unique up to translations. This ambiguity is fixed by
considering the constraint that the centroid of the points
is located at the origin, $\sum_{k=1}^{n}\p_{k} = \bm{0}$, which leads
to $\X\cdot \bm{1} = \bm{0}$.
Assuming that $\X$ is low-rank, $r\ll n$, the work in \cite{lai2017solve} considers the following nuclear
norm minimization program to recover $\X$.
\begin{align}
\label{eq:nnm_EDG}
\textrm{minimize } \quad &||\X||_{*} \nonumber\\
 \textrm{subject to }\quad &\X_{i\,,i}+\X_{j\,,j}-2\X_{i\,,j} = \D_{i\,,j}\quad (i,j)\in \Omega
\\
&\X\cdot \bm{1}=\bm{0}\,;\,\X =\X^{T}\,;\, \X\succeq \bm{0} \nonumber
\end{align}
Here $||\X||_{*}$ denotes the nuclear norm and $\Omega\subset\{(i,j)|i,j=1,...,n, i<j\}$, $|\Omega|=m$, denotes the random set 
that consists of all the sampled indices. One characterization of the Euclidean distance matrix, due to Gower \cite{gower1985properties}, states that $\D$ is an Euclidean distance matrix
if and only if $\D = \X_{i\,,i}+\X_{j\,,j}-2\X_{i\,,j} = \D_{i\,,j}\,\,\forall i,j$
for some positive semidefinite matrix $\X$ satisfying
$\X\cdot \bm{1}=\bm{0}$. As such, the
above nuclear norm minimization can be interpreted as a regularization of the rank
with a prior assumption that the true Gram matrix is low rank. An alternative approach based on the matrix completion method is direct completion
of the distance matrix \cite{candes2009exact,moreira2017novel,recht2010guaranteed}.
Compared to this approach,
an advantage of the above minimization program can be seen by comparing the rank of
the Gram matrix $\X$ and the distance matrix $\D$. Using \eqref{eq:dist_matrix}, the rank of $\D$ is at most $r+2$ while
the rank of $\X$ is simply $r$. Using matrix completion theory, it can be surmised that
relatively less number of samples are required
for the Gram matrix completion. Numerical experiments in \cite{lai2017solve} confirm
this observation. In \cite{javanmard2013localization}, the authors consider
a theoretical analysis of a specific instance of localization problem and propose an algorithm similar to 
\eqref{eq:nnm_EDG}. The paper considers a random geometric model and derives 
interesting results of bound of errors in reconstructing point coordinates. While
the localization problem and EDG problem share a common theme, we remark that the EDG problem is different
and our analysis adopts the matrix completion framework. 
The main task of this paper is a theoretical analysis of the above minimization problem. 
In particular, under appropriate conditions, we will show that the above nuclear norm minimization 
recovers the underlying inner product matrix.

\paragraph{Related Work}
The EDG problem has been well studied in various contexts. Early works establish mathematical properties of Euclidean distance
matrices (EDM) \cite{gower1985properties} and prove conditions for a symmetric hollow matrix to be EDM
\cite{schoenberg1935remarks,young1938discussion}. Particularly, the important result due to Schoenberg states that
a symmetric hollow matrix $\D$ is EDM if and only if the matrix $-\frac{1}{2}\bm{J}\bm{D}\bm{J}$
is positive semidefinite. $\bm{J}$ is the geometric centering matrix defined as $\bm{J}=\bm{I}-\frac{1}{n}\bm{1}\bm{1}^{T}$
\cite{schoenberg1935remarks}. In follow-up theoretical papers, the EDM completion problem
was considered. Given a symmetric hollow matrix with some entries specified,
the EDM completion problem asks if the partial matrix can be completed to an EDM. 
A work of Hendrickson relates the EDM completion problem to a graph realization problem
and provides necessary graph theoretic conditions on the partial matrix that ensures a unique solution~\cite{hendrickson1992conditions}.
Bakonyi and Johnson establish a link between EDM completion and the graph of partial distance matrix.
Specifically, they show that there is a completion to a distance matrix if the graph is chordal~\cite{bakonyi1995euclidian}.
The connection between the EDM completion problem and the positive semidefinite matrix completion
was established in \cite{johnson1995connections} and \cite{laurent1998connection}.  
In \cite{alfakih2003uniqueness}, Alfakih establishes necessary and sufficient conditions for the solution of EDM completion problem to be unique. The emphasis in the above theoretical works is analytic characterization of 
EDMs and the EDM completion problem mostly employing graph theoretic methods. 
In the numerical side, a variety of algorithms using different optimization techniques have been developed to solve the EDM completion problem ~\cite{fang2013using,alfakih1999solving,fang2012euclidean,hendrickson1995molecule,more1997global,trosset2000distance,zou1997stochastic}.
The above review is by no means exhaustive and we recommend the interested
reader to refer to \cite{dokmanic2015euclidean} and \cite{liberti2014euclidean}. 
This paper adopts the low-rank matrix recovery framework. While the well-known matrix completion
theory can be applied to reconstruct $\D$ \cite{candes2009exact,recht2010guaranteed}, it
can not be directly used to analyze \eqref{eq:nnm_EDG}. In particular,
the measurements in \eqref{eq:nnm_EDG} are with respect to the
Gram matrix while the measurements in the work of \cite{candes2009exact,recht2010guaranteed}
are entry wise sampling of the distance matrix. 
The emphasis in this paper is on theoretical understanding of the nuclear norm minimization
formulation for the EDG problem as stated in \eqref{eq:nnm_EDG}. In particular,
the goal is to provide rigorous probabilistic guarantees that give
precise estimates for the minimum number of measurements needed for a certain success probability
of the recovery algorithm.

\paragraph{Main Challenges}
The random linear constraints in \eqref{eq:nnm_EDG} can be equivalently written as
a linear map $\mathcal{L}$ acting on $\X$. A first approach to showing uniqueness for the problem 
\eqref{eq:nnm_EDG} is to check if $\mathcal{L}$ satisfies the restricted isometry property (RIP)~\cite{recht2010guaranteed}. However, the RIP does not hold for the completion problem~\eqref{eq:nnm_EDG}. This can be simply verified by choosing any $(i,j)\notin \Omega$ and construct a matrix $\X$ with $\X_{i,j} = \X_{j,i} = \X_{i,i}=\X_{j,j}=1$
and zero everywhere else. Then, it is clear that $\mathcal{L}(\X) = \bm{0}$ implying that the RIP
condition does not hold. In general, RIP based analysis is not suitable for deterministic structured measurements.
Adopting the framework introduced in \cite{gross2011recovering},
the nuclear norm minimization problem in \eqref{eq:nnm_EDG}
can be written as a matrix completion problem with respect to a basis
$\{\w_{\alphab}\}$. It, however, turns out that $\w_{\alphab}$ is not
orthogonal. With non-orthogonal basis, the measurements $\langle \w_{\alphab},\X \rangle$ are not
compatible with the expansion coefficients of the true Gram matrix. A possible remedy is orthogonalization
of the basis, say using the Gram-Schmidt orthogonalization.
Unfortunately, the orthogonalization process does not preserve the structure of the basis $\w_{\alphab}$.
This has the implication that the modified minimization problem \eqref{eq:nnm_EDG} no longer corresponds
to the original problem. As such, the lack of orthogonality is critical in this problem.
In addition, it is necessary that the solution of \eqref{eq:nnm_EDG}
is symmetric positive semidefinite satisfying the constraint $\X\cdot \bm{1}=\bm{0}$.
On the basis of the above considerations, an alternative approach
is considered to show that \eqref{eq:nnm_EDG} admits a unique solution.
The analysis presented in this paper is not based on RIP but
on the dual certificate approach introduced in \cite{candes2009exact}.
Our proof was inspired by the work of David Gross \cite{gross2011recovering}
where the author generalizes the matrix completion problem to any orthonormal basis.
In the case of the EDG problem, 
one main challenge is that the sampling operator, an important operator in matrix completion,
is no longer self-adjoint. This necessitates modifications and alternative proofs to some of the technical statements that appear
in \cite{gross2011recovering}.

\paragraph{Major Contributions}
In this paper, a dual basis approach is introduced to show that \eqref{eq:nnm_EDG}
has a unique solution under appropriate sampling conditions. First, the
minimization problem in \eqref{eq:nnm_EDG} is written as matrix
completion problem with respect to a basis $\w_{\alphab}$.
Second, by introducing a dual basis $\{\v_{\alphab}\}$ to
$\{\w_{\alphab}\}$, one can ensure that the measurements $\langle \X\,,\w_{\alphab}\rangle$
in \eqref{eq:nnm_EDG} are compatible with expansion coefficients
of the true Gram matrix $\M$.  
The two main contributions of this paper are as follows.
\begin{enumerate}
  \item A dual basis approach is introduced to address the EDG problem. Under certain assumptions, we show
that the nuclear norm minimization problem succeeds in recovering the underlying low rank solution.
Precisely, the main result states that if $|\Omega| = m\ge O(nr\log^{2}n)$, the nuclear
norm minimization program \eqref{eq:nnm_EDG} recovers the underlying inner product matrix
with very high probability, $1 - n^{-\beta}$ for $\beta>1$(see more details in Theorem \ref{thm1}). 
The minimization problem has a positive semidefinite constraint.
The proof describes how this constraint is handled.  In addition to the Bernstein bound which appears in matrix completion analysis, 
a crucial part in the proof uses the operator Chernoff bound.  We would like to emphasize that this part of the proof and the estimate using Chernoff  bound
is simple and could be useful in a broader setting. 

\item We develop simple and fast algorithms to solve the EDG problem under two instances.
The first instance considers the case of exact partial information. The second
instance considers the more realistic setup of a noisy partial information. Numerical tests on various data show that the algorithms are accurate and efficient.

 \end{enumerate}
%\paragraph{Outline}
The outline of the paper is as follows. In section \ref{sec:Dualbasis}, we introduce a dual basis approach and formulate a well-defined matrix completion 
problem for the EDG problem.  We conclude the section by proposing coherence conditions for the EDG problem and explaining
the sampling scheme. In section \ref{sec:Proof}, the proof of exact reconstruction is presented.
In brief terms, the main parts are summarized as follows. From convex optimization theory,
showing uniqueness of the EDG problem is equivalent to showing that there exists
a dual certificate, denoted by $\Y$, satisfying certain conditions. $\Y$ is constructed
using the golfing scheme proposed in \cite{gross2011recovering}.
Next,  we show that these conditions
hold with very high probability by employing concentration inequalities. This implies that there is a unique solution
with very high probability. 
In section \ref{sec:Algorithms}, fast numerical algorithms for the EDG matrix completion problem
are proposed. Section \ref{sec:Results} validates the efficiency and effectiveness of the proposed numerical algorithms. Finally, we conclude the work in the last section.

\paragraph{Notation}
To make our notation consistent in the paper, we summarize notations used in this paper in Table \ref{tab:notation}.
\begin{table*}[h!]
\renewcommand{\arraystretch}{1.3}
\caption{Notations}
\label{tab:notation}
\centering
\begin{tabular}{|cc||cc|}
\hline
$\x$ & Vector & $||\x||_{2}$ & $l_2$ norm\\
$\X$ & Matrix & $||\X||_{F}$  &  Frobenius norm\\
$\mathcal{X}$ & Operator & $||\X||$ & Operator norm \\ 
$\X^T$ & Transpose & $||\X||_{*}$ & Nuclear norm\\
$\textrm{Tr}(\X)$ & Trace & $||\mathcal{A}||$ & $\sup_{||\X||_{F}=1}||\mathcal{A}\X||_{F}$.\\
$\langle \X\,,\Y\rangle$ & $\textrm{Trace}(\X^T\Y)$ & $\lambda_{\max},\lambda_{\min}$ & Maximum, Minimum eigenvalue\\
$\bm{1}$ & A vector or matrix of ones & $\textrm{Sgn }\X$ & $\U\textrm{sign }(\Sigma) \V^T$; here $[\U,\Sigma,\V]= \text{svd}(\X)$\\
$\bm{0}$ & A vector or matrix of zeros & $\Omega$, $\mathbb{I}$ & Random sampled set, Universal set\\\hline
\end{tabular}
\end{table*}

\section{Dual basis formulation}
\label{sec:Dualbasis}
The aim of this section is to show that the EDG problem \eqref{eq:nnm_EDG} can be equivalently stated as a matrix completion problem with
respect to a special designed basis. The matrix completion problem is an instance of the low rank matrix recovery problem
where the goal is to recover a low rank matrix given random partial linear observations. The natural minimization problem for low rank recovery problem 
is rank minimization with linear constraints. 
Unfortunately, this problem is NP hard \cite{recht2010guaranteed}
motivating other solutions. In a series of remarkable theoretical papers \cite{candes2009exact,recht2010guaranteed,gross2011recovering,candes2010power,recht2011simpler},
it was shown that, under certain conditions, the solutions of the NP hard rank minimization problem can be
obtained by solving the convex nuclear norm minimization problem which is computationally tractable~\cite{cai2010singular,carmi2013compressed}. Our theoretical analysis is inspired by the
work~\cite{gross2011recovering}  where the author extends the nuclear norm minimization formulation to
recovering a low rank matrix given that a few of its coefficients in a fixed orthonormal basis
are known. 

\paragraph{Dual basis}
To write the EDG problem \eqref{eq:nnm_EDG} as a matrix completion problem with
respect to an appropriate operator basis, let us introduce few notations. 
We write $\Us =\{ \alphab=(\alpha_1,\alpha_2)~|~1\le \alpha_1<\alpha_2\le n\}$ and define  
\begin{equation}\label{eq:basis}
\w_{\alphab} = \e_{\alpha_{1},\,\alpha_{1}}+\e_{\alpha_{2},\,\alpha_{2}}-\e_{\alpha_{1},\,\alpha_{2}}-\e_{\alpha_{2},\,\alpha_{1}} 
\end{equation}
where $\e_{\alpha_1,\,\alpha_2}$ is a matrix whose entries are all
zero except a $1$ in the $(\alpha_1,\,\alpha_2)$-th entry. It is clear that $\{\w_{\alphab}\}$ forms a basis of the linear space
$\S =\{\X\in \real^{n\times n} ~|~ \X=\X^{T}\,\&\,\X\cdot \bm{1}=\bm{0} \}$ and 
the number of basis is $L = \frac{n(n-1)}{2}$.
For conciseness and ease in later analysis, we further define
$\Sp = \S \cap \{\X\in \real^{n\times n} ~|~ \X\succeq \bm{0}\}$. 
Therefore, for any given subsample $\Omega$ of $\Us$, the EDG problem \eqref{eq:nnm_EDG}
can be written as the following nuclear norm minimization problem
\begin{align}
\underset{\X\in \Sp}{\textrm{minimize }} \quad &||\X||_{*} \nonumber\\
  \textrm{subject to }\quad &\langle \w_{\alphab},\X \rangle = \langle \w_{\alphab}, \,\M \rangle \quad 
	\forall \alphab \in \Omega \label{eq:nnm_EDG_basis}
\end{align}
where $\M$ is the true underlying rank $r$ Gram matrix satisfying $\M_{i,\,i}+\M_{j,\,j}-2\M_{i,\,j}=\D_{i,\,j}$ $\forall (i,j)$.
The EDG problem can now be equivalently
interpreted as a matrix completion problem with respect to the basis  $\w_{\alphab}$.

By construction, $\w_{\alphab}$ is symmetric and satisfies $\w_{\alphab}\cdot \bm{1}=\bm{0}$.
The latter condition naturally enforces the constraint $\X\cdot \bm{1}=\bm{0}$.
It is clear that any $\X\in\S$ can be expanded in the basis 
$\{\w_{\alphab}\}$ as $\X = \sum_{\betab\in \Us}\c_{\betab}\w_{\betab}$. 
After minor algebra, $\c_{\betab} = \sum_{\alphab\in \Us} \langle \X\,,\w_{\alphab}\rangle\H^{\alphab,\,\betab}$
where we define $\H_{\alphab,\,\betab} = \langle \w_{\alphab}\,,\w_{\betab}\rangle$ 
and $\H^{\alphab,\,\betab} = \H^{-1}_{\alphab,\,\betab}$. Note that, since 
$\{\w_{\alphab}\}$ is a basis, the inverse $\H^{-1}$ is well defined.
This results in the following representation of $\X$.
\begin{equation}\label{eq:x_expansion}
\X = \sum_{\betab\in\Us} \sum_{\alphab\in\Us} \H^{\alphab,\,\betab}\langle \w_{\alphab}\,,\X\rangle \w_{\betab}
\end{equation}
The crux of the dual basis approach is to simply consider \eqref{eq:x_expansion}
and rewrite it as follows 
\begin{equation}\label{eq:x_expansion_dual}
\X = \sum_{\alphab\in\Us} \langle \X\,,\w_{\alphab}\rangle \v_{\alphab}
\end{equation}
where $\v_{\alphab}= \sum_{\betab} \H^{\alphab,\,\betab} \w_{\betab}$. It can
be easily verified that $\{\v_{\alphab}\}$ is a dual basis of $\{\w_{\alphab}\}$ satisfying $\langle \v_{\alphab}\,,\w_{\betab} \rangle =
\delta_{\alphab,\,\betab}$. Let $\W =[\w_1,\w_2,...,\w_L]$
and $\V = [\v_1,\v_2,...,\v_{L}]$ denote the matrix of vectorized basis matrices and vectorized dual basis matrices
respectively. The following basic relations are useful in later analysis, $\H = \W^T\W$, $\H^{-1} = \V^T\V$ and $\V = \W\H^{-1}$.
In the context of the EDG completion problem, the dual basis approach ensures that the expansion coefficients
match the measurements while preserving the condition that the matrix in
consideration is symmetric with zero row sums. With this,
\eqref{eq:nnm_EDG_basis} turns into a well formulated matrix completion problem with respect to
the basis $\w_{\alphab}$. An alternative way to rewrite \eqref{eq:nnm_EDG_basis}
makes use of the sampling operator defined as follows.
\begin{equation} \label{eq:dual_sampling}
\Po:\X \in \S\longrightarrow \frac{L}{m}\sum_{\alphab\in\Omega}\langle \X\,,\w_{\alphab} \rangle \v_{\alphab}
\end{equation}
where we assume that $\Omega$ are sampled uniformly at random from $\Us$ with replacement and 
$m = |\Omega|$ is the size of $\Omega$.
The scaling factor $\displaystyle \frac{L}{m}$ is for ease in later analysis.
It can be easily verified that $\displaystyle\frac{m^2}{L^2}\Po^{2} = \frac{m}{L}\Po$.
The adjoint operator of $\Po$, $\Po^{*}$, can be simply derived and is given by
\begin{equation} 
\Po^{*}:\X \in \S\longrightarrow \frac{L}{m}\sum_{\alphab\in\Omega} \langle \X\,,\v_{\alphab} \rangle \w_{\alphab}
\end{equation}
Using the sampling operator $\Po$, we can write $\eqref{eq:nnm_EDG_basis}$ as follows 
\begin{align}
&\underset{\X\in \Sp}{\textrm{minimize }} \quad \|\X\|_{*} \nonumber\\
&\textrm{subject to }\quad \Po(\X) = \Po(\M)  \label{eq:m_completion}
\end{align}
%For the convenience of later analysis, we also use the following alternative representation of \eqref{eq:dual_sampling}. Let $\x$ denote the vectorized form of $\X$.
%Introduce a sampling matrix $\mathcal{S}\in \real^{L\times L}$,
%a diagonal matrix of $0$'s and $1$'s with the $1$'s 
%located at the indices belonging to $\Omega$. With this, a compact form of the sampling operator is given by
%\begin{equation} \label{eq:dual_sampling_compact}
%\Po(\x) = \frac{L}{m} \V\S\W^{T}\x
%\end{equation}
%Here on, the two forms \eqref{eq:dual_sampling} and \eqref{eq:dual_sampling_compact}
%will be used interchangeably depending on the context. 
In addition to the sampling operator,
the restricted frame operator appears in the analysis and is defined as follows
\begin{equation} \label{eq:frame_operator}
\F:\X \in \S \longrightarrow \frac{L}{m}\sum_{\alphab\in\Omega}\langle \X\,,\w_{\alphab} \rangle \w_{\alphab}
\end{equation}
Note that the restricted frame operator is self-adjoint and positive semidefinite. 

\paragraph{Coherence}

Pathological situations can arise in \eqref{eq:m_completion} when
$\M$ has very few non-zero coefficients. In the context of EDG,
in extreme cases, this happens if only the diagonal elements of $\D$ are  sampled and/or
there are many overlapping points. This motivates the
notion of coherence first introduced in \cite{candes2009exact}.
Let $\M = \sum_{k=1}^{r} \lambda_{k} \u_{k} \u_{k}^{T}$,
where the eigenvectors have been chosen to be orthonormal and $\lambda_k \geq 0 $ as $\M\in\Sp$. 
We write $\U = \textrm{span } \{\u_1,...,\u_r\}$ as the column space
of $\M$, $\U^{\perp} = \textrm{span} \{\u_{r+1} , . . . , \u_n \}$ as the orthogonal complement of $\U$ and further 
denote $\P_{\U}$ and $\P_{\U^{\perp}}$ as the orthogonal projections
onto $\U$ and $\U^{\perp}$, respectively. 
Let $\T=\{\U\Z^T+\Z\U^{T}:\Z\in \real^{n\times r} \}$ be the tangent space of the rank $r$ matrix in $\Sp$ at $\M$.
The orthogonal projection onto $\T$ is given by
\begin{equation} \label{eq:pt_defn}
\P_{\T}\,\X = \P_{\U}\X + \X \P_{\U} - \P_{\U} \X \P_{\U}
\end{equation}
It can be readily verified that $\P_{\T^{\perp}}\,\X = \X - \P_{\T}\X = \P_{\U^{\perp}}\X\P_{\U^{\perp}}$.
The coherence conditions can now be defined as follows. 
\begin{definition}
The aforementioned rank r matrix $\M\in \Sp$ has coherence $\nu$ with respect to basis $\{\w_{\alphab}\}_{\alphab\in \Us}$ if
the following estimates hold
\begin{align}
\underset{\alphab\in \Us}{\max} \,\, &\sum_{\betab\in \Us}\langle \P_{\T}\, \w_{\alphab}\,,\w_{\betab} \rangle^{2} \le 2\nu \frac{r}{n}
\label{eq:coherencew1}\\
\underset{\alphab\in \Us}{\max} \,\, &\sum_{\betab\in \Us}\langle \P_{\T}\, \v_{\alphab}\,,\w_{\betab} \rangle^{2} \le 4\nu \frac{r}{n}
\label{eq:coherencev1}\\
\underset{\alphab\in \Us}{\max} \,\, &\langle \w_{\alphab}\,,\U\U^{T}\rangle^{2} \le \frac{1}{4}\nu \frac{r}{n^{2}} \label{eq:jointcoherence1}
\end{align}
where $\{\v_{\alphab}\}_{\alphab\in \Us}$ is the dual basis of $\{\w_{\alphab}\}_{\alphab\in \Us}$.
\end{definition}

\noindent
\begin{remark}
If $\{\w_{\alphab}\}$ is an orthonormal basis, it follows trivially that the dual basis
$\{\v_{\alphab}\}$ is also $\{\w_{\alphab}\}$. For this specific case, the above coherence
conditions are equivalent, up to constants, to the coherence conditions in
\cite{gross2011recovering}. However, in the most general setting, 
the coherence conditions \eqref{eq:coherencew1} and \eqref{eq:coherencev1} depart from their
orthonormal counterparts. This is because these conditions depend on the spectrum
of the correlation matrix $\H$. Since the analysis in the sequel makes repeated use
of the coherence conditions, the above equations are further simplified to convenient forms
as allows. First, using \eqref{eq:coherencew1}, consider a bound on $||\P_{\T}\, \w_{\alphab}||_{F}^{2}$. Using Lemma \ref{H_prop}
and Lemma \ref{norm_equivalent}, one obtains 
\begin{equation*} 
||\P_{\T}\, \w_{\alphab}||_{F}^{2}\le 
\underset{\alphab\in \Us}{\max} \,\, \sum_{\betab\in \Us}\langle \P_{\T}\, \w_{\alphab}\,,\w_{\betab} \rangle^{2} \le 2\nu \frac{r}{n}
\longrightarrow ||\P_{\T}\, \w_{\alphab}||_{F}^{2}\le 2\nu \frac{r}{n}
\end{equation*}
Using the above inequality and Lemma \ref{H_prop}, one obtains the following bound
for $||\P_{\T}\, \v_{\alphab}||_{F}$.
\begin{equation*}
||\P_{\T}\,\v_{\alphab}||_{F} \le \sum_{\betab\in \Us} ||\H^{\alphab,\,\betab}\,\P_{\T}\, \w_{\betab}||_{F} = 
\sum_{\betab\in \Us} |\H^{\alphab,\,\betab}|\,\,||\P_{\T}\, \w_{\betab}||_{F} \le 2\sqrt{\frac{2\nu r}{n}}                   
  \end{equation*}
	Next, consider a bound on $\langle \v_{\alphab}\,,\U\U^{T}\rangle^{2} $.
	Using \eqref{eq:jointcoherence1} and Lemma \ref{H_prop}, one arrives at the
	following inequality.
	\[
 \langle \v_{\alphab}\,,\U\U^{T}\rangle^{2}	= \langle \sum_{\betab\in \Us} \H^{\alphab,\,\betab}\w_{\betab}\,,\U\U^{T}\rangle^{2}	\le
\underset{\betab\in \Us}{\max} \,\, \langle \w_{\betab}\,,\U\U^{T}\rangle^{2}\,\left(\sum_{\betab\in \Us} |\H^{\alphab,\,\betab}|\right)^{2}
\le \frac{\nu r}{n^2}
	\]
All in all, the simplified forms of the coherence conditions are summarized as follows.
\begin{align}
\underset{\alphab\in \Us}{\max} \,\, &||\P_{\T}\, \w_{\alphab}||_{F}^{2} \le 2\nu \frac{r}{n}\label{eq:coherencew}\\
\underset{\alphab\in \Us}{\max} \,\, &||\P_{\T} \, \v_{\alphab}||_{F}^{2} \le 8\nu \frac{r}{n}\label{eq:coherencev}\\
\underset{\alphab\in \Us}{\max} \,\, &\langle \v_{\alphab}\,,\U\U^{T}\rangle^{2} \le \frac{\nu r}{n^{2}} \label{eq:jointcoherence}
\end{align}
\end{remark}

\noindent
The simplified coherence conditions presented above are the same as the standard coherence assumptions up to constants 
(see \cite{gross2011recovering,recht2011simpler}). If the matrix $\M$ has coherence $\nu$ with respect to the standard basis, comparable bounds
could be derived for the above coherence conditions. This is true since the basis $\{\w_{\alphab}\}$ is not ``far'' from the standard basis.
For a precise statement of this fact, we refer the interested reader to Lemma \ref{coherence_conditions_derivation}.
The implication of this fact is that an incoherent $\M$ with respect to the standard basis is also incoherent with respect to the EDG basis. Intuitively, the coherence parameter is fundamentally about the concentration of information in the underlying matrix. For a matrix with low coherence(incoherent), each measurement
is equally as informative as the other. In contrast, the information is concentrated on few measurements for a coherent matrix. 

\begin{remark}[Coherence and Geometry]
In the context of the EDG problem, an interesting question is the relationship, if any, between coherence and the geometry of
the underlying points. One analytical task is to relate the coherence, using the compressive sensing definition, of the
Gram matrix to the coherence conditions \eqref{eq:coherencew1}-\eqref{eq:jointcoherence1}. 
Computationally, numerical tests indicate that points arising from random distributions or points which are
a random sample of structured points have comparable coherence values irrespective of the underlying distribution. However, since estimating coherence
accurately and efficiently for large $n$ is computationally expensive, the numerical tests were limited to relatively small values of $n$.
The question of the geometry of points and how it relates to coherence is important to us and future work will explore
this problem through analysis and extensive numerical tests. 

\end{remark}

\paragraph{Sampling Model} The sampling scheme is an important element of the matrix
completion problem. For the EDG completion problem, it is assumed that the basis
vectors are sampled uniformly at random with replacement. Previous
works have considered uniform sampling with replacement \cite{gross2011recovering,recht2011simpler}
and Bernoulli sampling \cite{candes2009exact}. The implication of our choice is that
the sampling process is independent. This property is essential as we make use of concentration inequalities for i.i.d matrix valued random variables. 

\begin{remark}
For the EDG completion problem, we have considered uniform sampling with out replacement. In this case, the sampling process is no longer independent. The implication is that one could not simply use concentration inequalities for i.i.d matrix valued random
variables. However, as noted in the classic work of Hoeffding \cite[section 6]{hoeffding1963probability}, the results derived for the case of the sampling with replacement
also hold true for the case of sampling without replacement. An analogous 
argument for matrix concentration inequalities, resulting from the operator Chernoff bound technique
\cite{ahlswede2002strong}, under uniform sampling with out replacement is shown in \cite{gross2010note}.
The analysis based on this choice leads to a ``slightly lower'' number of measurements $m$. Since
the gain is minimal, which will be made apparent in later analysis, we use the uniform sampling with
replacement model.   
\end{remark}

\section{Proof of Main Result}
\label{sec:Proof}
The main goal of this section is to show that the nuclear norm minimization problem~\eqref{eq:m_completion} admits a unique solution. Thus, the optimization problem provides the desired reconstruction.
%Since $\X\in \Sp$, the nuclear norm of $\X$ equates to its trace and \eqref{eq:m_completion}
%can be written as a trace minimization problem.
%\begin{align}
%&\underset{\X\in \Sp}{\textrm{minimize }} \quad  \Tr(\X) \nonumber\\
%&\textrm{subject to }\quad \Po(\X) = \Po(\M)  \label{eq:m_completion_trace}
%\end{align}
Under certain conditions, our main result guarantees a unique solution for the minimization problem \eqref{eq:m_completion}. 
More precisely, we have the following theorem. 
\begin{theorem}\label{thm1}
Let $\M\ \in \real^{n\times n}$ be a matrix of rank $r$ that obeys the coherence conditions
\eqref{eq:coherencew1}, \eqref{eq:coherencev1} and \eqref{eq:jointcoherence1} with coherence 
$\nu$. 
Assume $m$ measurements, $\{\langle \M\,,\w_{\alphab}\rangle\}_{\alphab\in\Omega}$, are sampled uniformly at random with replacement.
For $\beta>1$, if
  \[
  m\ge nr \log_{2}\left(4\sqrt{8Lr}n\right)\left[96\left(\nu+\frac{1}{nr}\right)\left(\beta\log(n)+\log\left(4\log_{2}(4L\sqrt{r})\right)\right)\right]
 \]
 the solution to  \eqref{eq:m_completion}, equivalently \eqref{eq:nnm_EDG} and \eqref{eq:nnm_EDG_basis},
is unique and equal to $\M$ with probability at least $1-n^{-\beta} $. 
\end{theorem}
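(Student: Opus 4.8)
The plan is to follow the dual-certificate strategy of Cand\`es--Recht and Gross, adapted to the non-orthogonal EDG basis $\{\w_{\alphab}\}$ with its dual $\{\v_{\alphab}\}$. The first step is the standard reduction from convex optimization duality: since $\X = \M$ is feasible for \eqref{eq:m_completion}, a sufficient condition for $\M$ to be the unique minimizer is the existence of a dual certificate $\Y$ in the range of $\Po^{*}$ (so that $\Y$ is ``supported'' on the sampled set $\Omega$) with $\P_{\T}\Y = \textrm{Sgn}\,\M$ and $\|\P_{\T^{\perp}}\Y\| < 1$, together with the injectivity of $\Po$ restricted to $\T$ (equivalently, a lower bound on $\P_{\T}\F\P_{\T}$ when acting on $\T$). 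I would state and prove these two ingredients — (i) $\|\P_{\T}\F\P_{\T} - \P_{\T}\|_{\T\to\T}$ is small, and (ii) existence of $\Y$ with the above properties — and then combine them in the usual way: for any feasible perturbation $\Delta$, write $\|\M+\Delta\|_{*} \ge \|\M\|_{*} + \langle \textrm{Sgn}\,\M + \P_{\T^{\perp}}(\text{sign of }\Delta), \Delta\rangle$ and use that $\Delta$ lies in the kernel of $\Po$ (hence is orthogonal to $\Y$) to force $\P_{\T}\Delta = 0$ and then $\Delta = 0$ via the restricted injectivity. Because $\Po$ is \emph{not} self-adjoint here, I expect to need a little care: the certificate should be built so that $\langle \Y, \Delta\rangle = 0$ uses $\Y \in \mathrm{range}(\Po^{*})$ together with $\Po\Delta = 0$, and the pairing between $\w$- and $\v$-bases has to be tracked throughout.

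Next I would construct $\Y$ by the golfing scheme. Partition $\Omega$ (or rather resample) into $\ell \approx \log_{2}(4\sqrt{8Lr}\,n)$ independent batches $\Omega_1,\dots,\Omega_\ell$, let $\F_k$ be the corresponding restricted frame operators, and set $\Q_0 = \textrm{Sgn}\,\M = \U\U^{T}$ (using $\P_{\T}\U\U^T = \U\U^T$), $\Q_k = (\P_{\T} - \P_{\T}\F_k\P_{\T})\Q_{k-1}$, and $\Y = \sum_{k=1}^{\ell} \F_k \Q_{k-1}$ — adjusting by $\v$ vs.\ $\w$ where the non-orthogonality demands. The recursion gives $\P_{\T}\Y - \textrm{Sgn}\,\M = -\Q_\ell$, so I need $\|\Q_\ell\|_F$ exponentially small, and I need $\|\P_{\T^{\perp}}\Y\| = \|\sum_k \P_{\T^{\perp}}\F_k\Q_{k-1}\|$ bounded below $1$. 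The contraction $\|\Q_k\|_F \le \tfrac12\|\Q_{k-1}\|_F$ follows from step (i) applied per batch; $\|\Q_k\|_{\infty}$-type control (bounding $\langle \Q_k, \w_{\alphab}\rangle$ uniformly) propagates similarly and feeds the operator-norm bound on $\P_{\T^{\perp}}\F_k\Q_{k-1}$ via a Bernstein-type inequality for the sum of independent centered matrices $\tfrac{L}{m_k}\langle \Q_{k-1},\w_{\alphab}\rangle\v_{\alphab}$. Summing the geometric series of these operator-norm bounds keeps $\|\P_{\T^{\perp}}\Y\|$ below $1$ provided each batch has size $m_k \gtrsim nr\nu(\beta\log n + \log\log L)$, which is exactly the per-batch budget dictating the stated $m$.

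For the concentration estimates I would invoke matrix Bernstein for the off-diagonal / operator-norm pieces (controlling $\|\P_{\T}\F_k\P_{\T}-\P_{\T}\|$ and $\|\P_{\T^{\perp}}\F_k\Q_{k-1}\|$ via the coherence bounds \eqref{eq:coherencew}--\eqref{eq:jointcoherence}, which supply the needed variance and uniform bounds on $\langle \P_{\T}\w_{\alphab},\cdot\rangle$ and $\langle\v_{\alphab},\U\U^T\rangle$), and separately the \emph{operator Chernoff bound} of Ahlswede--Winter to handle the positive-semidefinite lower tail of $\P_{\T}\F\P_{\T}$ restricted to $\T$ — this is the ingredient the introduction flags as ``simple and could be useful in a broader setting,'' and it is what secures restricted injectivity cleanly without a self-adjoint sampling operator. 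A union bound over the $\ell \approx \log_{2}(4\sqrt{8Lr}n)$ batches (and the two events per batch) turns each per-batch failure probability $\lesssim \ell^{-1}n^{-\beta}$ into the overall $n^{-\beta}$.

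\textbf{The main obstacle} I anticipate is precisely the loss of self-adjointness of $\Po$: every place where the orthonormal-basis proof of \cite{gross2011recovering} silently uses $\Po = \Po^{*}$ or $\langle \w_{\alphab},\X\rangle$ being the true expansion coefficient must be rewritten in terms of the $\w/\v$ pairing and the correlation matrix $\H$. Concretely, the golfing certificate naturally wants to live in $\mathrm{span}\{\v_{\alphab}:\alphab\in\Omega\}$ (to be orthogonal to $\ker\Po$) while the sign matrix and the $\P_{\T}$-projection are most naturally expanded against $\w_{\alphab}$; reconciling these, and controlling the extra factors of $\|\H\|,\|\H^{-1}\|$ that appear (bounded via Lemma~\ref{norm_equivalent} / Lemma~\ref{H_prop}) without letting them blow up the sample complexity, is the delicate part. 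The PSD constraint $\X\succeq\bm 0$ itself is comparatively benign — it only shrinks the feasible set, so uniqueness within $\S$ implies uniqueness within $\Sp$ — but I would still note explicitly, as the theorem's proof promises, how the certificate argument is compatible with it (e.g.\ that the constructed $\M$ is automatically in $\Sp$ and the perturbation argument never needs to leave $\S$).
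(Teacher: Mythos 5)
Your plan is the paper's plan in outline: restricted injectivity of the sampling on $\T$ via the operator Chernoff bound, a golfing-scheme certificate whose per-batch estimates come from Bernstein-type inequalities fed by the coherence conditions, and a union bound over $\log_{2}(4\sqrt{8Lr}\,n)$ batches. Two of your intermediate statements would fail as written, though, and both failures come from exactly the non-self-adjointness you flag. First, your ingredient (i) --- that $\|\P_{\T}\F\P_{\T}-\P_{\T}\|_{\T\to\T}$ is small --- is false: because $\{\w_{\alphab}\}$ is not a tight frame, $E[\F\X]=\sum_{\alphab}\langle\X,\w_{\alphab}\rangle\w_{\alphab}$ is the frame operator whose spectrum on $\S$ fills $[1,2n]$ (Lemmas \ref{H_prop} and \ref{norm_equivalent}), so $\P_{\T}\F\P_{\T}$ does not concentrate around $\P_{\T}$. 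The paper's whole design is to need only the one-sided bound $\lambda_{\min}(\P_{\T}\F\P_{\T})>\tfrac12$ (operator Chernoff with $\mu_{\min}\ge 1$), and Theorem \ref{uniqueness_conditions}(a) is engineered so that this lower bound, together with the crude upper bound $\sum_{\alphab}\langle\X,\w_{\alphab}\rangle^{2}\le 2n\|\X\|_{F}^{2}$, suffices. You do invoke the Chernoff lower tail separately, so this is repairable, but the two-sided statement should be dropped, not proved.

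Second, the golfing recursion must use the $\v$-coefficients, i.e.\ $\R^{*}_{j}\Q_{j-1}=\frac{L}{m_j}\sum_{\alphab\in\Omega_j}\langle\Q_{j-1},\v_{\alphab}\rangle\w_{\alphab}$, rather than $\F_{j}\Q_{j-1}=\frac{L}{m_j}\sum_{\alphab\in\Omega_j}\langle\Q_{j-1},\w_{\alphab}\rangle\w_{\alphab}$: only the former is an unbiased estimator of $\Q_{j-1}$ (via the dual expansion $\X=\sum_{\alphab}\langle\X,\v_{\alphab}\rangle\w_{\alphab}$), so only the former makes $\Q_{j}=(\P_{\T}-\P_{\T}\R^{*}_{j}\P_{\T})\Q_{j-1}$ contract; with $\F_{j}$ the iterates can grow by factors up to $2n$. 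You gesture at ``adjusting by $\v$ vs.\ $\w$,'' but this is precisely the adjustment, and the contraction is then obtained by vector Bernstein applied to the fixed matrix $\Q_{j-1}$ (Lemma \ref{injective}), not by an operator-norm bound. Finally, golfing only yields an approximate certificate, $\|\P_{\T}\Y-\textrm{Sgn}\,\M\|_{F}\le\frac{1}{4n\sqrt{8L}}$, so your ``force $\P_{\T}\bDelta=0$'' shortcut (valid only for an exact certificate) must be replaced by the explicit case split on whether $\|\bDelta_{\T}\|_{F}$ exceeds $n\sqrt{8L}\,\|\bDelta_{\T^{\perp}}\|_{F}$, which is where the threshold $\frac{1}{4n\sqrt{8L}}$ and hence the choice of $l$ come from.
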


\noindent

The optimization problem in \eqref{eq:m_completion} is a convex minimization problem for which
the optimality of a feasible solution $\X$ is equivalent to the sufficient KKT conditions.
For details, we refer the interested reader to \cite{candes2009exact} where the
the authors derive a compact form of these optimality
conditions. The over all structure of the proof is as follows. 
First, we show that if certain deterministic optimality and uniqueness conditions hold, 
then it certifies that $\M$ is a unique solution to the minimization problem.
The remaining part of the proof will focus on 
showing that these conditions hold with very high probability
under certain conditions. This in turn will imply that $\M$ is a
unique solution with very high probability for an appropriate choice
of $m$.  

The proof of Theorem \ref{thm1} closely follows the approach in \cite{gross2011recovering}. For ease of
presentation, the proof is divided into several intermediate results.
Readers familiar with matrix completion proof in \cite{gross2011recovering,recht2011simpler}
will recall that one crucial part of the proof is bounding the spectral norm of
$||\P_{\T}\Po\P_{\T}-\P_{\T}||$. The interpretation of this bound is that the operator
$\P_{\T}\Po\P_{\T}$ is almost isometric to $\P_{\T}$. In the case of
our problem, $\Po$ is no longer self-adjoint and the equivalent 
statement is a bound on $||\P_{\T}\Po^{*}\Po\P_{\T}-\P_{\T}||$.
Unfortunately, the term $||\P_{\T}\Po^{*}\Po\P_{\T}-\P_{\T}||$ is
not amenable to simpler analysis as standard concentration inequalities result sub-optimal
success rates. The idea is instead to consider the operator $\P_{\T}\F\P_{\T}$ and show
that the minimum eigenvalue is bounded away from $0$ with high probability using the operator Chernoff
bound. The implication is that 
the operator $\P_{\T}\F\P_{\T}$ is full rank on the space $\T$.  For non orthogonal measurements, 
$\P_{\T}\F\P_{\T}$ can be interpreted as the analogue of the operator $\P_{\T}\Po\P_{\T}$.  In \cite{candes2010power},
for the standard matrix completion problem with measurement basis $\bm{e}_{ij}$, it
was argued theoretically that a lower bound for $m$ is $O(nr\nu\log\,n)$. 
Theorem \ref{thm1} requires on the order of $nr\nu\log^{2}\,n$ measurements
which is $\log(n)$ multiplicative factor away from
this sharp lower bound. We remark that, despite the aforementioned technical challenges, our result is of the same order as the result in \cite{gross2011recovering,recht2011simpler} which consider the low rank recovery problem with any orthogonal basis and the matrix completion problem respectively. 
We start the proof by stating and proving Theorem \ref{uniqueness_conditions} which 
rigorously shows that if appropriate conditions as discussed earlier hold, then $\M$ is a unique
solution to  \eqref{eq:m_completion}.

\begin{theorem}\label{uniqueness_conditions}
Given $\X\in \Sp$, we write $\bDelta = \X - \M$ as the deviation from the underlying low rank matrix $\M$. Let $\bDelta_{\T}$ and $\bDelta_{\T^{\perp}}$ be the orthogonal projection of
 $\bDelta$ to $\T$ and $\T^{\perp}$, respectively. For any given $\Omega$ with $|\Omega|=m$, the following two statements hold.
\begin{enumerate}
\item[(a).]
If $\displaystyle ||\bDelta_{\T}||_{F}^{2} \ge \left(n\sqrt{8L}  
||\bDelta_{\T^{\perp}}||_{F}\right)^{2}$ and $ \lambda_{\min}\left(\P_{\T}\,\F\,\P_{\T}\right) >\frac{1}{2}$, then $\Po \bDelta \neq 0$.
\item [(b).]
  If $\displaystyle ||\bDelta_{\T}||_{F}^{2} < \left(n\sqrt{8L}||\bDelta_{\T^{\perp}}||_{F}\right)^{2}$
  for $\bDelta \in \ker \Po$, and there exists a  $\Y \in \textrm{range}\,\, \Po^{*}$ satisfying,
\begin{equation}\label{eq:yconds}
||\P_{\T}\Y-\textrm{Sgn}\,\M||_{F} \le \frac{1}{4n\sqrt{8L} }\quad \textrm {and} \quad ||\P_{\T^{\perp}}\Y||\le \frac{1}{2}
\end{equation}
then $\|\X\|_{*}= \|\M + \bDelta\|_{*}> \|\M\|_{*}$.

\end{enumerate}
\end{theorem}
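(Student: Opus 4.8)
The plan is to prove the two parts by the standard matrix-completion dichotomy, but routing everything through the self-adjoint, positive semidefinite operator $\F$ because $\Po$ is not self-adjoint. The preliminary fact I would record is that $\Po\bDelta=0$ forces $\langle\bDelta,\w_{\alphab}\rangle=0$ for every $\alphab\in\Omega$: pairing $\Po\bDelta=\frac{L}{m}\sum_{\alphab\in\Omega}\langle\bDelta,\w_{\alphab}\rangle\v_{\alphab}$ with each $\w_{\betab}$ and using $\langle\v_{\alphab},\w_{\betab}\rangle=\delta_{\alphab,\betab}$ leaves a nonzero multiple of $\langle\bDelta,\w_{\betab}\rangle$. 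Hence $\bDelta\in\ker\Po$ implies $\langle\F\bDelta,\bDelta\rangle=\frac{L}{m}\sum_{\alphab\in\Omega}\langle\bDelta,\w_{\alphab}\rangle^{2}=0$, which is exactly what lets $\F$ stand in for $\Po$ in part (a).

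For (a), assume for contradiction $\Po\bDelta=0$, so $\langle\F\bDelta,\bDelta\rangle=0$. Writing $\bDelta=\bDelta_{\T}+\bDelta_{\T^{\perp}}$ and expanding, $0=\langle\F\bDelta_{\T},\bDelta_{\T}\rangle+2\langle\F\bDelta_{\T},\bDelta_{\T^{\perp}}\rangle+\langle\F\bDelta_{\T^{\perp}},\bDelta_{\T^{\perp}}\rangle$. Applying the Cauchy--Schwarz inequality for the positive semidefinite bilinear form $(\X,\Y)\mapsto\langle\F\X,\Y\rangle$ to the cross term, the right side is bounded below by $\big(\sqrt{\langle\F\bDelta_{\T},\bDelta_{\T}\rangle}-\sqrt{\langle\F\bDelta_{\T^{\perp}},\bDelta_{\T^{\perp}}\rangle}\big)^{2}$, so the two diagonal quantities must be equal. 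I would then bound $\langle\F\bDelta_{\T},\bDelta_{\T}\rangle$ below by $\frac{1}{2}\|\bDelta_{\T}\|_{F}^{2}$, using $\bDelta_{\T}\in\T$ together with $\lambda_{\min}(\P_{\T}\F\P_{\T})>\frac{1}{2}$, and $\langle\F\bDelta_{\T^{\perp}},\bDelta_{\T^{\perp}}\rangle$ above by $4L\|\bDelta_{\T^{\perp}}\|_{F}^{2}$, using $\langle A,B\rangle^{2}\le\|A\|_{F}^{2}\|B\|_{F}^{2}$ and $\|\w_{\alphab}\|_{F}^{2}=4$. Combining yields $\|\bDelta_{\T}\|_{F}<\sqrt{8L}\,\|\bDelta_{\T^{\perp}}\|_{F}$, which (as $n\ge1$) contradicts the hypothesis $\|\bDelta_{\T}\|_{F}\ge n\sqrt{8L}\,\|\bDelta_{\T^{\perp}}\|_{F}$; hence $\Po\bDelta\neq0$. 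The factor $n$ is precisely the slack that makes this crude bound suffice without any coherence input here.

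For (b), I would use the subgradient inequality for the nuclear norm at $\M$: for every $\W$ with $\P_{\T}\W=0$ and $\|\W\|\le1$, $\|\M+\bDelta\|_{*}\ge\|\M\|_{*}+\langle\textrm{Sgn }\M+\W,\bDelta\rangle$. Choosing $\W$ to be the sign matrix of $\bDelta_{\T^{\perp}}$ — its singular vectors lie in $\U^{\perp}$, so $\W\in\T^{\perp}$ and $\langle\W,\bDelta\rangle=\|\bDelta_{\T^{\perp}}\|_{*}$ — and using $\textrm{Sgn }\M=\U\U^{T}\in\T$, the inner product becomes $\langle\textrm{Sgn }\M,\bDelta_{\T}\rangle+\|\bDelta_{\T^{\perp}}\|_{*}$. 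Writing $\langle\textrm{Sgn }\M,\bDelta_{\T}\rangle=\langle\textrm{Sgn }\M-\P_{\T}\Y,\bDelta_{\T}\rangle+\langle\Y,\bDelta_{\T}\rangle$, the first term is $\ge-\frac{1}{4n\sqrt{8L}}\|\bDelta_{\T}\|_{F}$ by the first bound in \eqref{eq:yconds}; for the second, $\langle\Y,\bDelta\rangle=0$ because $\textrm{range}\,\Po^{*}=(\ker\Po)^{\perp}$, so $\langle\Y,\bDelta_{\T}\rangle=-\langle\P_{\T^{\perp}}\Y,\bDelta_{\T^{\perp}}\rangle\ge-\frac{1}{2}\|\bDelta_{\T^{\perp}}\|_{*}$ by the second bound in \eqref{eq:yconds}. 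Collecting, $\|\M+\bDelta\|_{*}\ge\|\M\|_{*}+\frac{1}{2}\|\bDelta_{\T^{\perp}}\|_{*}-\frac{1}{4n\sqrt{8L}}\|\bDelta_{\T}\|_{F}$, and the hypothesis $\|\bDelta_{\T}\|_{F}<n\sqrt{8L}\,\|\bDelta_{\T^{\perp}}\|_{F}\le n\sqrt{8L}\,\|\bDelta_{\T^{\perp}}\|_{*}$ makes the last term at most $\frac{1}{4}\|\bDelta_{\T^{\perp}}\|_{*}$, giving $\|\M+\bDelta\|_{*}>\|\M\|_{*}+\frac{1}{4}\|\bDelta_{\T^{\perp}}\|_{*}\ge\|\M\|_{*}$; the hypothesis of (b) forces $\bDelta_{\T^{\perp}}\neq0$, so the inequality is strict.

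The routine content is the subgradient inequality, the choice of $\W$, and the norm bookkeeping between $\|\cdot\|_{F}$, $\|\cdot\|_{*}$ and between $\T$, $\T^{\perp}$. The one genuinely structural point — and what I expect to be the main obstacle — is part (a): since $\Po$ is not self-adjoint one cannot say directly that $\P_{\T}\Po\P_{\T}$ is near-isometric, and the workaround is the observation that $\ker\Po$ still annihilates the quadratic form $\langle\F\,\cdot\,,\cdot\rangle$, so that the self-adjoint $\F$ (whose restriction to $\T$ is controlled later by the operator Chernoff bound) can replace $\Po$, combined with the Cauchy--Schwarz step that forces the two diagonal terms to coincide. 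With both parts in hand, any feasible $\X$ falls under the hypotheses of (a), forcing $\bDelta=0$, or under those of (b), forcing a strictly larger objective; hence $\M$ is the unique minimizer of \eqref{eq:m_completion}.
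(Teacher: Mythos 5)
Your proof is correct, and part (a) takes a genuinely different route from the paper's. The paper proves (a) by splitting $\|\Po\bDelta\|_F \ge \|\Po\bDelta_{\T}\|_F - \|\Po\bDelta_{\T^{\perp}}\|_F$ and then invoking the spectral bounds $\lambda_{\min}(\H^{-1})=\tfrac{1}{2n}$ and $\lambda_{\max}(\H^{-1})=1$ (Lemmas \ref{H_prop} and \ref{norm_equivalent}) to convert each piece into a statement about $\sum_{\alphab\in\Omega}\langle\cdot,\w_{\alphab}\rangle^{2}$, which is where the factor $n$ in the threshold gets consumed. You instead argue by contradiction: $\Po\bDelta=0$ forces every coefficient $\langle\bDelta,\w_{\alphab}\rangle$, $\alphab\in\Omega$, to vanish (by pairing with the dual basis), hence $\langle\F\bDelta,\bDelta\rangle=0$; Cauchy--Schwarz for the positive semidefinite form $\langle\F\cdot,\cdot\rangle$ then equates the two diagonal terms, and the bounds $\langle\F\bDelta_{\T},\bDelta_{\T}\rangle>\tfrac12\|\bDelta_{\T}\|_F^2$ and $\langle\F\bDelta_{\T^{\perp}},\bDelta_{\T^{\perp}}\rangle\le 4L\|\bDelta_{\T^{\perp}}\|_F^2$ (using only $\|\w_{\alphab}\|_F^2=4$) give the contradiction. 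This is cleaner in that it bypasses the dual-basis spectral estimates entirely and makes explicit why the factor $n$ is pure slack; the paper's version, by contrast, produces a quantitative lower bound on $\|\Po\bDelta\|_F$ rather than only non-vanishing, though the theorem does not need that. For part (b) your argument is the paper's in different clothing: you invoke the nuclear-norm subgradient characterization $\|\M+\bDelta\|_*\ge\|\M\|_*+\langle\mathrm{Sgn}\,\M+\W,\bDelta\rangle$ with $\W=\mathrm{Sgn}\,\bDelta_{\T^{\perp}}$ directly, where the paper rederives exactly this via the pinching inequality and spectral/nuclear duality; the dual-certificate bookkeeping is identical. One shared (and immaterial) caveat: when $\bDelta=0$ the hypothesis of (a) holds vacuously but $\Po\bDelta=0$, so both your argument and the paper's implicitly treat $\bDelta\neq 0$, which is all Corollary \ref{uniqueness_corollary} requires.
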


\noindent
To interpret the above theorem, note
that Theorem $\ref{uniqueness_conditions}$(a) states
that any deviation from $\M$
fails to be in the null space of the sampling operator for ``large" $\bDelta_{\T}$.
For ``small" $\bDelta_{\T}$, Theorem $\ref{uniqueness_conditions}$(b) claims that
any deviation from $\M$ increases the nuclear norm
thereby violating the minimization condition. We would like to emphasize that this is a deterministic theorem which does not depend on the random sampling of $\Omega$ as long as the conditions in the statement are satisfied. In fact, we will later show that these conditions will hold with very high probability 
under certain sampling conditions and an appropriate choice of $m = |\Omega|$.

\subsection{Proof of Theorem \ref{uniqueness_conditions}}
\begin{proof} [Proof of Theorem \ref{uniqueness_conditions}(a)]
Suppose $\|\bDelta_{\T}\|_{F}^{2} \ge \left(n\sqrt{8L} \|\bDelta_{\T^{\perp}}\|_{F}\right)^{2}$. 
To prove $\Po \bDelta \neq 0$, it suffices to show $\|\Po\bDelta\|_{F} >0$. 
Note that $\|\Po\,\bDelta\|_{F} =\|\Po\,\bDelta_{\T} + \Po\,\bDelta_{\T^{\perp}}\|_{F} \ge \| \Po\, \bDelta_{\T}\|_{F}-
\| \Po\,\bDelta_{\T^{\perp}}\|_{F}$.
This motivates finding a lower bound for $||\Po\, \bDelta_{\T}||_{F}$ and an upper bound for
$||\Po\,\bDelta_{\T^{\perp}}||_{F}$. For any $\X$, $||\Po\,\X||_{F}^{2}$ can be bounded as follows
\begin{equation*}
||\Po\,\X||_{F}^{2} = \langle \Po\,\X\,,\Po\,\X\rangle =  \langle \X\,,\Po^{*}\Po\,\X\rangle= 
\frac{L^2}{m^2} \sum_{\betab\in \Omega}\sum_{\alphab \in \Omega}\langle \X\,,\w_{\alphab}\rangle \langle \v_{\alphab}\,,\v_{\betab}\rangle\langle \X\,, \w_{\betab}\rangle
= \frac{L^2}{m^2}\sum_{\betab\in \Omega}\sum_{\alphab \in \Omega}\langle \X\,,\w_{\alphab}\rangle \H^{\alphab,\,\betab} \langle \X\,, \w_{\betab}\rangle
\end{equation*}
noting that $\langle \v_{\alphab}\,,\v_{\betab}\rangle= \H^{\alphab,\,\betab}$. Using the min-max theorem in the above equation, one obtains 
\begin{equation}\label{eq:minmax_bound}
\frac{L^2}{m^2}\lambda_{\min}(\H^{-1})\sum_{\alpha \in \Omega} \langle \X\,,\w_{\alphab}\rangle ^{2} \le ||\Po\,\X||_{F}^{2} \le \frac{L^2}{m^2}\lambda_{\max}(\H^{-1})\sum_{\alpha \in \Omega} \langle \X\,,\w_{\alphab}\rangle ^{2}
\le \frac{L^2}{m^2}\lambda_{\max}(\H^{-1})\sum_{\alpha\in \Us } \langle \X\,,\w_{\alphab}\rangle ^{2}
\end{equation}
Using the fact $\lambda_{\max}(\H^{-1})= 1$ (Lemma \ref{H_prop}) and $\sum_{\alpha\in \Us } \langle \X\,,\w_{\alphab}\rangle ^{2}
\leq 2n\|\X\|_F^2$ (Lemma \ref{norm_equivalent}) and setting
$\X = \bDelta_{\T^{\perp}}$ in the right of the above inequality results
\begin{equation} \label{eq:pdelta_perp_upper_bound}
||\Po\,\bDelta_{\T^{\perp}}||_{F}^{2} \le \frac{2L^2}{m^2}nm  ||\bDelta_{\T^{\perp}}||_{F}^{2}
\end{equation}
In the above inequality, the constant $m$ bounds the maximum number of repetitions for any given measurement.
Next, the lower bound for  $||\Po\,\bDelta_{\T}||_{F}$ is considered using the left inequality in \eqref{eq:minmax_bound}.
\begin{equation} \label{eq:pdelta_lower_bound_init}
||\Po\,\bDelta_{\T}||_{F}^{2} \ge 
  \frac{L^2}{2m^2n} \sum_{\alphab \in \Omega} \, \langle \bDelta_{\T}\,,\w_{\alphab}\rangle ^{2} = \frac{L}{2mn}   \langle \bDelta_{\T}\,,\frac{L}{m}\sum_{\alphab \in \Omega}\langle\bDelta_{\T}\,,\w_{\alphab}\rangle\w_{\alphab}\rangle = \frac{L}{2mn}  \langle \bDelta_{\T}\,,\F\,\bDelta_{\T}\rangle 
  \end{equation}
with the fact that $\lambda_{\min}(\H^{-1})= \displaystyle\frac{1}{2n}$
	(Lemma \ref{H_prop}).
Using the min-max theorem on the projection onto $\T$ of the restricted frame operator, we have
\begin{align} 
||\Po\,\bDelta_{\T}||_{F}^{2} \ge 
    \frac{L}{2mn}  \langle \bDelta_{\T}\,,\F\,\bDelta_{\T}\rangle   
  =  \frac{L}{2mn}  \langle \bDelta_{\T}\,,\P_{\T}\,\F\,\P_{\T}\,\bDelta_{\T}\rangle    \ge \frac{L}{2mn}  \lambda_{\min}(\P_{\T}\,\F\,\P_{\T})\,||\bDelta_{T}||_{F}^{2}    \label{eq:pdelta_lower_bound}
  \end{align}
Above, the first equality holds since $\P_{\T}\,\bDelta_{\T} = \bDelta_{\T}$ and $\P_{\T}$ is self-adjoint. The last inequality simply applies
the min-max theorem on the self-adjoint operator $\P_{\T}\,\F\,\P_{\T}$. Using the assumption that 
$ \lambda_{\min}\left(\P_{\T}\,\F\,\P_{\T}  \right)>\frac{1}{2}$,
the inequality in \eqref{eq:pdelta_lower_bound} reduces to
\begin{equation} \label{eq:pdelta_lower_bound_final_estimate}
||\Po\,\bDelta_{\T}||_{F}^{2} > \frac{L}{4mn} || \bDelta_{\T}||_{F}^{2}
\end{equation}
Combining \eqref{eq:pdelta_perp_upper_bound} and \eqref{eq:pdelta_lower_bound_final_estimate}, we have
\begin{align*}
\|\Po\,\bDelta\|_{F}  &> 
\sqrt{\frac{L}{4mn}} ||\bDelta_{\T}||_{F} - \frac{L}{m}\sqrt{2n}
||\bDelta_{\T^{\perp}}||_{F}\\
&\ge \sqrt{\frac{L}{4mn}} \left( n\sqrt{8L}\right)|| \bDelta_{\T^{\perp}}||_{F} -
\frac{L}{m} \sqrt{2nm}  ||\bDelta_{\T^{\perp}}||_{F} = 0
\end{align*}
where the last inequality follows from the theorem's assumption. This concludes proof of Theorem \ref{uniqueness_conditions}(a).
\end{proof}
\begin{remark}
($1$) The estimate of the upper bound of $||\Po\,\bDelta_{\T^{\perp}}||_{F}^{2}$ is not sharp. Specifically,
the constant $m$ can be much lowered. This can be achieved, for example, by employing standard concentration
inequalities and arguing that the expected number of duplicate measurements is substantially smaller than $m$.
Alternatively, as noted earlier in the sampling model section, uniform sampling with out replacement can be adopted
which has the implication that the constant $m$ is not necessary. Both analysis lead to a better estimate of the upper bound. However, the gain in terms of measurements, $m$, is
minimal, resulting lower constants inside of a $\log_2$. We use the current estimate, albeit not sharp, for sake of more compact analysis.
($2$) Lower bounding the term 
$\sum_{\alphab \in \Omega} \, \langle \bDelta_{\T}\,,\w_{\alphab}\rangle ^{2}$
is not amenable to simpler analysis. In fact, a mere application of the standard concentration inequalities result an undesirable probability of failure that increases with $n$. 
Note that 
$||\Po\,\bDelta_{\T}||_{F}^{2} = 
\langle \bDelta_{\T}\,,\P_{\T}\Po^{*}\Po\P_{\T}\bDelta_{\T}-\P_{\T}\bDelta_{\T}\rangle+
||\P_{\T}\bDelta_{\T}||_{F}^{2}$. A lower bound for $||\Po\,\bDelta_{\T}||_{F}^{2}$
motivates an upper bound of $||\P_{\T}\Po^{*}\Po\P_{\T}\bDelta_{\T}-\P_{\T}||$.
This bound can be achieved, albeit long calculations, but it relies on the special structure of
$\w_{\alphab}$. To make the technical analysis simple and more general, we use an alternative approach shown in the above proof. This is a different way to handle the lower bound estimation of 
$||\Po\,\bDelta_{\T}||_{F}^{2}$.
\end{remark}

\begin{proof} [Proof of Theorem \ref{uniqueness_conditions}(b)]
 Consider a feasible solution $\X = \M + \bDelta \in \Sp$ satisfying $||\bDelta_{\T}||_{F} < n\sqrt{8L} \|\bDelta_{\T^{\perp}}\|_{F}$ and 
$\bDelta \in \ker \Po$. 
We need to show that $\|\M + \bDelta\|_{*}> \|\M\|_{*}$ which implies that nuclear minimization is violated. The proof of this requires the 
construction of a dual certificate $\Y$ satisfying certain conditions. The proof is similar to the proof
in section $2$E of \cite{gross2011recovering} but it is shown below for completeness and ease of reference in later analysis.
Using the pinching inequality \cite{bhatia2013matrix}, $ \|\M+\bDelta\|_{*} \ge ||\P_{\U}(\M+\bDelta)\P_{\U}||_{*}+||\P_{\U^{\perp}}(\M+\bDelta)\P_{\U^{\perp}}||_{*}$. 
Noting that $\P_{\U}\M\P_{\U} =\M$, $\P_{\U^{\perp}}\M\P_{\U^{\perp}}=\bm{0}$ and $\P_{\U^{\perp}}\bDelta \P_{\U^{\perp}} = \bDelta_{\T^{\perp}}$, the above inequality reduces to
\begin{equation}\label{eq:ineq2}
\|\M+\bDelta\|_{*} \ge ||\M+\P_{\U}\bDelta\P_{\U}||_{*}+||\bDelta_{\T^{\perp}}||_{*}
\end{equation}
Note that $||\bDelta_{\T^{\perp}}||_{*} = \langle \textrm{Sgn}\,\bDelta_{\T^{\perp}}\,,\bDelta_{\T^{\perp}} \rangle$. 
The first term on right hand side can be lower bounded using the fact that the nuclear norm and the
spectral norm are dual to one another. Stated precisely, $||\bm{X}_{2}||_{*}= \underset{||\bm{X}_1||\le 1}{\sup}
\langle \bm{X}_{1}\,,\bm{X}_{2}\rangle$ for any $\bm{X}_{1}$. 
Using this inequality with  $\bm{X}_{1} =\textrm{Sgn}\,\M$ and $\bm{X}_{2} = \M + \P_{\U}\bDelta\P_{\U}$ in \eqref{eq:ineq2} results
\begin{align*}
||\M+\P_{\U}\bDelta\P_{\U}||_{*}+||\bDelta_{\T^{\perp}}||_{*} &\ge 
\langle \textrm{Sgn}\,\M \,,\M+\P_{\U}\bDelta\P_{\U} \rangle + 
\langle \textrm{Sgn}\,\bDelta_{\T^{\perp}}\,,\bDelta_{\T^{\perp}} \rangle\\
&= ||\M||_{*}+\langle \textrm{Sgn}\,\M\,,\P_{\U}\bDelta\P_{\U} \rangle+
\langle \textrm{Sgn}\,\bDelta_{\T^{\perp}}\,,\bDelta_{\T^{\perp}} \rangle
\end{align*}
Note that $\textrm{Sgn}\,\M \in \T$ (see Lemma \ref{signx_prop}). 
Using this fact, it can be verified that $\langle \textrm{Sgn}\,\M\,,\P_{\U}\bDelta_{\T^{\perp}}\P_{\U} \rangle = 0$ 
and $\langle \textrm{Sgn}\,\M\,,\P_{\U}\bDelta_{\T}\,\P_{\U} \rangle = \langle \textrm{Sgn}\,\M \,,\bDelta_{\T} \rangle$.
The above inequality can now be written as
\begin{align}
||\M+\P_{\U}\bDelta\P_{\U}||_{*}+||\bDelta_{\T^{\perp}}||_{*}\ge  
||\M||_{*}+\langle \textrm{Sgn}\,\M\,,\bDelta_{\T}\rangle
+\langle \textrm{Sgn}\,\bDelta_{\T^{\perp}}\,,\bDelta_{\T^{\perp}} \rangle  \label{eq:ineq4}
\end{align}
Using \eqref{eq:ineq2} and \eqref{eq:ineq4}, it can be concluded that $||\M+\bDelta||_{*}
> ||\M||_{*} $ if it can be shown that $\langle \textrm{Sgn}\,\M\,,\bDelta_{\T}\rangle
+\langle \textrm{Sgn}\,\bDelta_{\T^{\perp}}\,,\bDelta_{\T^{\perp}}\rangle > 0$.
Since $\langle \Y\,,\bDelta \rangle = 0$ for any $\Y \in \textrm{range}\,\, \Po^{*}$, we have
\begin{align*}
\langle \textrm{Sgn}\,\M\,,\bDelta_{\T}\rangle
+\langle \textrm{Sgn}\,\bDelta_{\T^{\perp}}\,,\bDelta_{\T^{\perp}} \rangle
&= \langle \textrm{Sgn}\,\M\,,\bDelta_{\T}\rangle
+\langle \textrm{Sgn}\,\bDelta_{\T^{\perp}}\,,\bDelta_{\T^{\perp}} \rangle-\langle \Y\,,\bDelta\rangle\\
&= \langle \textrm{Sgn}\,\M - \P_{\T}\Y\,,\bDelta_{\T} \rangle + 
\langle  \textrm{Sgn}\, \bDelta_{\T^{\perp}}\,,\bDelta_{\T^{\perp}}\rangle -
\langle \P_{\T^{\perp}}\Y\,,\bDelta_{\T^{\perp}} \rangle 
\end{align*}
Assuming the conditions in the statement of the theorem and considering
the last equation, we obtain
\begin{align*}
\langle \textrm{Sgn}\,\M\,,\bDelta_{\T}\rangle+\langle \textrm{Sgn}\,\bDelta_{\T^{\perp}}\,,\bDelta_{\T^{\perp}} \rangle
&\ge  -\frac{1}{4n\sqrt{8L}} ||\bDelta_{\T}||_{F}
+||\bDelta_{\T^{\perp}}||_{*}-\frac{1}{2}||\bDelta_{\T^{\perp}}||_{*}
\\
&\ge -\frac{1}{4n\sqrt{8L}} ||\bDelta_{\T}||_{F}
+\frac{1}{2}||\bDelta_{\T^{\perp}}||_{F}>  -\frac{1}{4n\sqrt{8L}} \left( n\sqrt{8L} \|\bDelta_{\T^{\perp}}\|_{F}    \right)
+\frac{1}{2}||\bDelta_{\T^{\perp}}||_{F}=\frac{1}{4}||\bDelta_{\T^{\perp}}||_{F}
\end{align*}
Above, the first inequality follows from the duality of the spectral norm and the nuclear norm.
It has been shown that $\langle \textrm{Sgn}\,\M\,,\bDelta_{\T}\rangle
+\langle \textrm{Sgn}\,\bDelta_{\T^{\perp}}\,,\bDelta_{\T^{\perp}} \rangle > 0$
under the assumptions of Theorem \ref{uniqueness_conditions}(b). Using \eqref{eq:ineq2} and \eqref{eq:ineq4}, 
it follows that $\|\M+\bDelta\|_{*} > \|\M\|_{*}$ concluding the proof of Theorem \ref{uniqueness_conditions}(b).
\end{proof}

Consequently, if the deterministic conditions in Theorem \ref{uniqueness_conditions} hold,  $\M$ is a unique solution to \eqref{eq:m_completion}. We formally write it as a corollary which can help us to prove the probabilistic statement in Theorem \ref{thm1}.
\begin{corollary}\label{uniqueness_corollary}
If the conditions of Theorem \ref{uniqueness_conditions} hold, $\M$ is a unique solution to \eqref{eq:m_completion}.
\end{corollary}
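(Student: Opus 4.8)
The plan is to assemble the two complementary halves of Theorem~\ref{uniqueness_conditions} into one dichotomy argument. First I would observe that $\M$ is itself feasible for \eqref{eq:m_completion}: it belongs to $\Sp$ by hypothesis and it trivially satisfies $\Po(\M)=\Po(\M)$. Hence it is enough to show that every other feasible point has strictly larger nuclear norm.

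So let $\X\in\Sp$ be any feasible point with $\X\neq\M$, and put $\bDelta=\X-\M\neq\bm{0}$. Since $\Po$ is linear and both $\X$ and $\M$ are feasible, $\Po(\bDelta)=\Po(\X)-\Po(\M)=\bm{0}$, i.e. $\bDelta\in\ker\Po$. Decompose $\bDelta=\bDelta_{\T}+\bDelta_{\T^{\perp}}$ and split into the two regimes of Theorem~\ref{uniqueness_conditions}. If $\|\bDelta_{\T}\|_F^2\ge(n\sqrt{8L}\,\|\bDelta_{\T^{\perp}}\|_F)^2$, then, using the assumed bound $\lambda_{\min}(\P_{\T}\F\P_{\T})>\frac{1}{2}$, part (a) gives $\Po\bDelta\neq\bm{0}$, contradicting $\bDelta\in\ker\Po$; so this regime is impossible for a feasible $\X\neq\M$. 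In the complementary regime $\|\bDelta_{\T}\|_F^2<(n\sqrt{8L}\,\|\bDelta_{\T^{\perp}}\|_F)^2$ we have $\bDelta\in\ker\Po$ and, by the assumed existence of $\Y\in\textrm{range}\,\Po^{*}$ obeying \eqref{eq:yconds}, part (b) gives $\|\X\|_*=\|\M+\bDelta\|_*>\|\M\|_*$. In both regimes $\X$ fails to minimize, so $\M$ is the unique solution.

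At this level there is no real obstacle: the corollary is a bookkeeping statement, its only content being that the hypotheses of parts (a) and (b) together cover every nonzero $\bDelta$ once feasibility has been used to locate $\bDelta$ in $\ker\Po$, and that ``the conditions of Theorem~\ref{uniqueness_conditions}'' are exactly the pair $\lambda_{\min}(\P_{\T}\F\P_{\T})>\frac{1}{2}$ and the existence of the certificate $\Y$. The only point needing a word of care is the degenerate sub-case $\bDelta_{\T}=\bm{0}$ within the first regime, which forces $\bDelta_{\T^{\perp}}=\bm{0}$ and hence $\bDelta=\bm{0}$, excluded since $\X\neq\M$; this is what makes the strict inequality in part (a) non-vacuous. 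The substantive work is pushed downstream: one must still show that $\lambda_{\min}(\P_{\T}\F\P_{\T})>\frac{1}{2}$ and that a certificate $\Y$ satisfying \eqref{eq:yconds} exists, each with high probability over the random choice of $\Omega$ with $|\Omega|=m$ as in Theorem~\ref{thm1}, via the operator Chernoff bound and the golfing construction in the rest of Section~\ref{sec:Proof}.
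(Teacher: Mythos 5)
Your proof is correct and follows essentially the same dichotomy as the paper's own argument: feasibility of $\X$ places $\bDelta$ in $\ker\Po$, part (a) rules out the regime $\|\bDelta_{\T}\|_F^2\ge(n\sqrt{8L}\,\|\bDelta_{\T^\perp}\|_F)^2$, and part (b) handles the complementary regime. Your extra remark about the degenerate sub-case $\bDelta_{\T}=\bm{0}$ forcing $\bDelta=\bm{0}$ is a small but welcome refinement that the paper's proof leaves implicit.
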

\begin{proof}
For any $\X \in \Sp$, define $\bDelta= \M-\X$. Using Theorem \ref{uniqueness_conditions}(a), $\Po\,\bDelta\neq \bm{0}$
if $||\bDelta_{\T}||_{F}^{2} \ge \left(n\sqrt{8L} ||\bDelta_{\T^{\perp}}||_{F}\right)^{2}$.
It then suffices to consider the case $||\bDelta_{\T}||_{F}^{2} < \left(n\sqrt{8L} ||\bDelta_{\T^{\perp}}||_{F}\right)^{2}$
for $\bDelta \in \ker \Po$. For this case, the proof of Theorem \ref{uniqueness_conditions}(b) 
shows that $\|\X\|_{*}> \|M\|_{*}$. It can then be concluded that $\M$ is the unique solution to \eqref{eq:m_completion}.
\end{proof}

\subsection{Proof of Theorem \ref{thm1}}
It follows that certifying that the two conditions in Theorem \ref{uniqueness_conditions} hold implies a unique solution to \eqref{eq:m_completion}.
Hence, the main task in the proof is to show that, under the assumptions of Theorem \ref{thm1}, 
these two conditions hold with very high probability. 
If this can be achieved, it implies that the conclusion of Theorem \ref{thm1} holds true with the same high probability. 
The first condition in Theorem \ref{uniqueness_conditions} is that 
$ \lambda_{\min}\left(\P_{\T}\,\F\,\P_{\T}\right)>\frac{1}{2}$.
The goal is to show that the minimum eigenvalue of the operator $\P_{\T}\,\F\,\P_{\T}$ is bounded away from zero.
This will be proven in Lemma \ref{min_eig}  to follow shortly. The lemma makes use of the matrix Chernoff bound in \cite{tropp2012user}
and is restated below for convenience. 
\begin{theorem}
Let $\{\Lc_{k}\}$ be a finite sequence of independent, random, self-adjoint operators satisfying
\[
\Lc_{k}\succeq \bm{0} \quad \textrm{ and } \quad ||\Lc_{k}|| \le R \quad \textrm{almost surely}
\] 
with the minimum and maximum eigenvalues of the sum of the expectations,
\[
\mu_{\min}:=\lambda_{\min} \left(\sum_{k} E[\Lc_{k}]\right) \quad \textrm{ and } \quad \mu_{\max}:=\lambda_{\max} \left(\sum_{k} E[\Lc_{k}]\right)
\]
Then, we have
\[
 Pr\,\bigg[ \lambda_{\min}\left(\sum_{k}\Lc_{k}\right) \le (1-\delta)\,\mu_{\min}\bigg] \le n\,\bigg[ \frac{\exp(-\delta)}{(1-\delta)^{1-\delta}}  \bigg]^{\frac{\mu_{\min}}{R}}   
\quad \textrm{for } \delta\in [0,1]
\]
\end{theorem}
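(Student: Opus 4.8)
The displayed result is the matrix Chernoff bound of \cite{tropp2012user}; since the paper invokes it as a black box, I only sketch the standard argument, the \emph{matrix Laplace transform} (Ahlswede--Winter / Tropp) method. Write $\mathcal{S}=\sum_{k}\Lc_{k}$ and let $d$ be the dimension of the space on which the $\Lc_{k}$ act (the ``$n$'' of the statement). \textbf{Step 1 (Laplace transform).} For every $\theta>0$, turn the lower tail of $\lambda_{\min}$ into an exponential trace moment:
\[
\Pr\bigl[\lambda_{\min}(\mathcal{S})\le t\bigr]=\Pr\bigl[\lambda_{\max}(-\mathcal{S})\ge -t\bigr]\le e^{\theta t}\,\mathbb{E}\,\lambda_{\max}\bigl(e^{-\theta\mathcal{S}}\bigr)\le e^{\theta t}\,\mathbb{E}\,\mathrm{Tr}\,e^{-\theta\mathcal{S}},
\]
using Markov's inequality applied to $e^{\theta\lambda_{\max}(-\mathcal{S})}=\lambda_{\max}(e^{-\theta\mathcal{S}})$ (spectral mapping) together with $\lambda_{\max}(\bm{A})\le\mathrm{Tr}\,\bm{A}$ for $\bm{A}\succeq\bm{0}$.

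\textbf{Step 2 (the main obstacle).} The substantive step is to bound the expected trace moment of the \emph{sum} of independent, non-commuting operators. This rests on Lieb's concavity theorem, namely that $\bm{A}\mapsto\mathrm{Tr}\exp(\bm{H}+\log\bm{A})$ is concave on positive-definite $\bm{A}$; combined with Jensen's inequality and an induction on $k$, it yields subadditivity of the matrix cumulant generating function,
\[
\mathbb{E}\,\mathrm{Tr}\,e^{-\theta\mathcal{S}}\le\mathrm{Tr}\exp\Bigl(\sum_{k}\log\mathbb{E}\,e^{-\theta\Lc_{k}}\Bigr).
\]
Everything around this step is elementary scalar analysis lifted to operators, so I expect Step 2 to be the hard part.

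\textbf{Step 3 (per-term bound and assembly).} Use $\bm{0}\preceq\Lc_{k}\preceq R\,\bm{I}$ almost surely: on $[0,R]$ the convex map $x\mapsto e^{-\theta x}$ lies below its chord, $e^{-\theta x}\le 1-(1-e^{-\theta R})x/R$, so the transfer rule gives $e^{-\theta\Lc_{k}}\preceq\bm{I}-(1-e^{-\theta R})\Lc_{k}/R$ and hence $\mathbb{E}\,e^{-\theta\Lc_{k}}\preceq\bm{I}-(1-e^{-\theta R})\mathbb{E}[\Lc_{k}]/R$; since $\Lc_{k}\preceq R\bm{I}$ forces the subtracted term to be $\prec\bm{I}$, operator monotonicity of $\log$ and $\log(1-x)\le -x$ give $\log\mathbb{E}\,e^{-\theta\Lc_{k}}\preceq -(1-e^{-\theta R})\mathbb{E}[\Lc_{k}]/R$. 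Summing over $k$, invoking monotonicity of the trace exponential, and bounding $\mathrm{Tr}\,e^{\bm{M}}\le d\,e^{\lambda_{\max}(\bm{M})}$ (where $\lambda_{\max}$ of the exponent equals $-(1-e^{-\theta R})\mu_{\min}/R$ because $\lambda_{\min}(\sum_{k}\mathbb{E}[\Lc_{k}])=\mu_{\min}\ge 0$), one arrives at
\[
\Pr\bigl[\lambda_{\min}(\mathcal{S})\le t\bigr]\le d\,\exp\Bigl(\theta t-(1-e^{-\theta R})\,\frac{\mu_{\min}}{R}\Bigr).
\]

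\textbf{Step 4 (optimize).} Set $t=(1-\delta)\mu_{\min}$ with $\delta\in[0,1]$ and minimize the exponent $\theta(1-\delta)\mu_{\min}-(1-e^{-\theta R})\mu_{\min}/R$ over $\theta>0$; the minimizer is $\theta^{\star}=-\log(1-\delta)/R\ge 0$, at which $e^{-\theta^{\star}R}=1-\delta$ and the exponent collapses to $\frac{\mu_{\min}}{R}\bigl(-\delta-(1-\delta)\log(1-\delta)\bigr)=\log\bigl[e^{-\delta}/(1-\delta)^{1-\delta}\bigr]^{\mu_{\min}/R}$. This yields $\Pr[\lambda_{\min}(\mathcal{S})\le(1-\delta)\mu_{\min}]\le d\bigl[e^{-\delta}/(1-\delta)^{1-\delta}\bigr]^{\mu_{\min}/R}$, which with $d=n$ is exactly the claim; the restriction $\delta\in[0,1]$ is precisely what keeps $\theta^{\star}\ge 0$ and all the scalar comparison inequalities valid.
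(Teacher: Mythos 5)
The paper does not prove this theorem; it imports it verbatim from \cite{tropp2012user} as a black box, so there is no in-paper argument to compare against. Your sketch is the standard and correct matrix Laplace transform proof (Markov on the trace exponential, Lieb's concavity for subadditivity of the matrix cumulant generating function, the chord bound on $e^{-\theta x}$ over $[0,R]$, and optimization at $\theta^{\star}=-\log(1-\delta)/R$), which is exactly the route taken in the cited source.
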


\noindent
For $\delta \in [0,1]$, using Taylor series of $\log(1-\delta)$, note that $(1-\delta)\log(1-\delta) \ge -\delta+\frac{\delta^2}{2}$. 
This results the following simplified estimate
\[
Pr\,\bigg[ \lambda_{\min}\left(\sum_{k}\Lc_{k}\right) \le (1-\delta)\,\mu_{\min}\bigg] \le n\,\exp\left(-\delta^2\,\frac{\mu_{\min}}{2R}\right) 
\quad \textrm{for } \delta\in [0,1]
\]

\begin{lemma} \label{min_eig}
Given the operator $\P_{\T}\,\F\,\P_{\T}: \T \rightarrow \T$ with $\kappa = \frac{m}{nr}$, the following estimate holds. 
\[
\textrm{Pr}\,\left( \lambda_{\min}\left(\P_{\T}\,\F\,\P_{\T}\right)\le \frac{1}{2}\right)
\le n\exp\left(-\frac{\kappa}{8\nu}\right) 
\] 
\end{lemma}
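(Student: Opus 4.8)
The plan is to realize $\P_{\T}\F\P_{\T}$, viewed as a self-adjoint operator on $\T$, as a sum of $m$ independent, positive semidefinite, rank-one random operators and then invoke the matrix Chernoff bound stated above. Since $\Omega$ is obtained by drawing $\alphab_{1},\dots,\alphab_{m}$ independently and uniformly from $\Us$, and since $\langle \X,\w_{\alphab}\rangle=\langle \X,\P_{\T}\w_{\alphab}\rangle$ for $\X\in\T$, I would write, for $\X\in\T$, $\P_{\T}\F\P_{\T}\,\X=\sum_{k=1}^{m}\Lc_{k}\X$ with $\Lc_{k}\X=\frac{L}{m}\langle \X,\P_{\T}\w_{\alphab_{k}}\rangle\,\P_{\T}\w_{\alphab_{k}}$. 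Each $\Lc_{k}$ is positive semidefinite, and by the coherence condition \eqref{eq:coherencew} we get the almost sure bound $\|\Lc_{k}\|=\frac{L}{m}\|\P_{\T}\w_{\alphab_{k}}\|_{F}^{2}\le \frac{2L\nu r}{mn}=:R$.

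Next I would compute the mean. Averaging over the uniform draw gives $E[\Lc_{k}]=\frac{1}{m}\sum_{\alphab\in\Us}\langle\,\cdot\,,\P_{\T}\w_{\alphab}\rangle\,\P_{\T}\w_{\alphab}$, hence $\sum_{k=1}^{m}E[\Lc_{k}]=\P_{\T}\,\mathcal{G}\,\P_{\T}$ as an operator on $\T$, where $\mathcal{G}\,\X:=\sum_{\alphab\in\Us}\langle \X,\w_{\alphab}\rangle\,\w_{\alphab}$ is the full (unscaled) frame operator of the basis $\{\w_{\alphab}\}$. For any $\X\in\S$, expanding $\X=\sum_{\betab}\c_{\betab}\w_{\betab}$ gives $\langle \X,\mathcal{G}\X\rangle=\sum_{\alphab\in\Us}\langle \X,\w_{\alphab}\rangle^{2}=\c^{T}\H^{2}\c\ge\lambda_{\min}(\H)\,\c^{T}\H\c=\lambda_{\min}(\H)\,\|\X\|_{F}^{2}$. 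By Lemma \ref{H_prop}, $\lambda_{\min}(\H)=\lambda_{\max}(\H^{-1})^{-1}=1$; since $\T\subseteq\S$ and $\P_{\T}\X=\X$ on $\T$, this yields $\mu_{\min}:=\lambda_{\min}\!\left(\sum_{k}E[\Lc_{k}]\right)\ge 1$.

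Finally, I would apply the simplified matrix Chernoff bound with $\delta=\tfrac12$. Because $\mu_{\min}\ge 1$, we have $(1-\delta)\mu_{\min}=\mu_{\min}/2\ge\tfrac12$, so the event $\{\lambda_{\min}(\P_{\T}\F\P_{\T})\le\tfrac12\}$ is contained in $\{\lambda_{\min}(\P_{\T}\F\P_{\T})\le(1-\delta)\mu_{\min}\}$, and its probability is at most $n\exp(-\delta^{2}\mu_{\min}/(2R))=n\exp(-\mu_{\min}/(8R))\le n\exp(-1/(8R))$. Substituting $R=2L\nu r/(mn)$ and $L=n(n-1)/2$ gives $1/(8R)=mn/(16L\nu r)=m/(8(n-1)\nu r)\ge m/(8n\nu r)=\kappa/(8\nu)$, which is exactly the claimed estimate.

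The main obstacle is the middle step: recognizing that taking expectations exactly cancels both the random subsampling and the $L/m$ rescaling, leaving the deterministic operator $\P_{\T}\mathcal{G}\P_{\T}$, and then extracting the clean bound $\mu_{\min}\ge 1$ from it. This hinges on the spectral information about the Gram matrix $\H$ provided by Lemma \ref{H_prop} — essentially the fact that $\{\w_{\alphab}\}$, being a genuine basis of $\S$, is bounded below as a frame — together with the inclusion $\T\subseteq\S$, which holds because $\bm 1\in\U^{\perp}$ (a consequence of $\M\bm 1=\bm 0$). The remaining ingredients, namely the uniform operator-norm bound from coherence and the choice $\delta=\tfrac12$, are routine, and substituting $L=n(n-1)/2$ turns the exponent into $\kappa/(8\nu)$.
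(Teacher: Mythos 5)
Your proof is correct and follows essentially the same route as the paper: the same rank-one decomposition $\Lc_{k}=\frac{L}{m}\langle\cdot,\P_{\T}\w_{\alphab_k}\rangle\P_{\T}\w_{\alphab_k}$, the same bound $R$ from the coherence condition \eqref{eq:coherencew}, the same identification of $\sum_k E[\Lc_k]$ with the full frame operator restricted to $\T$ and the lower bound $\mu_{\min}\ge 1$ via the spectrum of $\H$ (the paper invokes Lemmas \ref{H_prop} and \ref{norm_equivalent} where you expand in coordinates), and the same application of the matrix Chernoff bound with $\delta=\tfrac12$. Your slightly sharper constant $R=\frac{(n-1)\nu r}{m}$ versus the paper's $\frac{n\nu r}{m}$ is immaterial, as you note when relaxing to $\kappa/(8\nu)$.
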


\begin{proof}
Using the definition of $\F(\X) =  \frac{L}{m}\sum_{\alphab\in\Omega}\langle \X\,,\w_{\alphab} \rangle \w_{\alphab}$, for $\X \in \T$, $\P_{\T}\,\F\,\P_{\T}\,\X$ can be written as follows
\[
\P_{\T}\,\F\,\P_{\T}\,\X = \sum_{\alphab \in \Omega} \frac{L}{m}\langle \X\,,\P_{\T}\,\w_{\alphab}\rangle\, \P_{\T}\,\w_{\alphab}
\]
Let $\Lc_{\alphab}$ denote the operator in the summand, $\Lc_{\alphab} =\frac{L}{m}\langle \cdot\,,\P_{\T}\,\w_{\alphab}\rangle\, \P_{\T}\,\w_{\alphab}$. 
It can be easily verified that $\Lc_{\alphab}$ is positive semidefinite. The operator Chernoff bound requires estimate of $R$ and $\mu_{\min}$. 
$R$ can be estimated as follows. 
\begin{align*}
\bigg|\bigg|\frac{L}{m}\langle \cdot\,,\P_{\T}\,\w_{\alphab}\rangle\, \P_{\T}\,\w_{\alphab}\bigg|\bigg| = 
 \frac{L}{m}||\P_{\T}\,\w_{\alphab}||_{F}^{2} \le \frac{n\nu r}{m}
\end{align*}
The last inequality follows from the coherence estimate in \eqref{eq:coherencew}. 
Therefore, $R = \displaystyle\frac{n\nu r}{m}$. Next, we consider $\mu_{\min}$. The first step is to compute $\sum_{\alphab\in \Omega} E[\Lc_{\alphab}]$.
\[
\sum_{\alphab\in \Omega} E[\Lc_{\alphab}] = \sum_{\alphab \in \Omega} \bigg[ \sum_{\alphab } \frac{1}{m}\langle \cdot\,,\P_{\T}\,\w_{\alphab}\rangle\, \P_{\T}\,\w_{\alphab}\bigg]
= \sum_{\alphab } \langle \cdot\,,\P_{\T}\,\w_{\alphab}\rangle\, \P_{\T}\,\w_{\alphab}
\]
For any $\X \in \T$, lower bound $\displaystyle \langle \X\,, \sum_{\alphab\in \Omega} E[\Lc_{\alphab}](\X) \rangle$ as follows.
\[
\langle \X\,, \sum_{\alphab\in \Omega} E[\Lc_{\alphab}](\X) \rangle = \sum_{\alphab} \langle \X,\,\w_{\alphab}\rangle ^{2} \ge ||\X||_{F}^{2}\]
The  inequality results from Lemma \ref{H_prop} and Lemma \ref{norm_equivalent}. Noting that $\sum_{\alphab\in \Omega} E[\Lc_{\alphab}]$ is a self-adjoint operator, and using the variational characterization of the minimum eigenvalue, it can be concluded that
the minimum eigenvalue of $ \sum_{\alphab\in \Omega} E[\Lc_{\alphab}] $ is at least $1$. With this, set $\mu_{\min} = 1$. 
Finally, apply the matrix Chernoff bound
with $R = \frac{n\nu r}{m}$ and $\mu_{\min} = 1$. Setting $\delta = \frac{1}{2}$, 
$ \lambda_{\min}\left(\P_{\T}\,\F\,\P_{\T}\right)>\frac{1}{2}$
with probability of failure at most $p_1$ given by
\[
p_1 = n\exp\left(-\frac{\kappa}{8\nu}   \right)
\]
This concludes the proof. 
\end{proof}

%$||\P_{\T}\Po^{*}\P_{\T}\X-\P_{\T}\X||_{F}< \frac{1}{2}||\X||_{F}$
%for $\X\in \S$. This can be established from the following Lemma \ref{injective} which
%bounds $||\P_{\T}\Po^{*}\P_{\T}\X-\P_{\T}\X||_{F}$. 
%This lemma uses the vector Bernstein inequality in \cite{gross2011recovering}.
%%An easier but slightly weaker version is restated below for convenience of
%reference. 

%Under the assumptions of Theorem \ref{thm1} and using Lemma \ref{injective}, 
%$||\P_{\T}\Po^{*}\P_{\T}\X-\P_{\T}\X||_{F}< \frac{1}{2}||\X||_{F}$
%holds with probability $1-p_1$ where the probability of failure is 
%$\displaystyle p_{1}=\exp\left(-\frac{\kappa}{64\left(\nu+\frac{1}{nr}\right)}+\frac{1}{4}\right)$%with $\displaystyle \kappa = \frac{m}{nr}$. 

Under the assumptions of Theorem \ref{thm1} and using Lemma \ref{min_eig}, 
$\lambda_{\min}\left(\P_{\T}\,\F\,\P_{\T}\right)>\frac{1}{2}$ holds with
probability $1-p_1$ where the probability of failure is 
$\displaystyle p_1 = n\exp\left(-\frac{\kappa}{8\nu}\right)$ with 
 $\displaystyle \kappa = \frac{m}{nr}$. 

The proof of the second condition in Theorem \ref{uniqueness_conditions}
is more technically involved and requires the construction of $\Y$ satisfying the conditions in \eqref{eq:yconds}.
The construction follows the ingenious golfing scheme introduced in \cite{gross2011recovering}.
Partition the basis elements in $\Omega$ into $l$ batches
with the $i$-th batch $\Omega_{i}$ containing $m_i$ elements and $\displaystyle \sum_{i=1}^l m_i = m$. 
The restriction operator for a batch $\Omega_{i}$ is defined as
$\displaystyle\R_{i} = \frac{L}{m_i} \sum_{\alphab \in \Omega_{i}}{} \langle \X \,, \w_{\alphab} \rangle \v_{\alpha}$.
Then $\Y = \Y_{l}$ is constructed using the following recursive scheme 
\begin{equation} \label{eq:golfing_scheme}
\Q_{0} = \textrm{sgn } \M , \quad \Y_{i} = \sum_{j=1}^{i} \R^{*}_{j}  \Q_{j-1}, \quad \Q_{i} = \textrm{sgn } \M - \P_{\T}\Y_{i} , \qquad i = 1,\cdots,l
\end{equation}
Using the golfing scheme, it will be shown that the conditions in \eqref{eq:yconds} hold with very high probability.
The technical analysis of the scheme, particularly certifying the first condition in \eqref{eq:yconds}, requires a probabilistic estimate of $||\P_{\T}\Po^{*}\P_{\T}\X-\P_{\T}\X||_{F}\ge t$ for a fixed matrix $\X \in \S$.  
Using Lemma \ref{injective} to follow, a suitable estimate can be achieved. The lemma  
uses the vector Bernstein inequality in \cite{gross2011recovering}. 
An easier but slightly weaker version is restated below for convenience of reference. 

\begin{theorem}[Vector Bernstein inequality] 

Let $\x_1,...,\x_m$ be independent zero-mean vector valued random variables. 
Assuming that $\underset{i}{\max}\, ||\x_{i}||_{2}\le R$ and let $\sum_{i} E[||\x_i||_{2}^{2}]\le \sigma^{2}$, then
for any $t\le \frac{\sigma^2}{R}$, the following inequality holds
\[
\textrm{Pr}\bigg[\bigg|\bigg|\sum_{i=1}^{m}  \x_i\bigg|\bigg|_{2}\ge t\bigg] \le \exp\left(-\frac{t^2}{8\sigma^2}+\frac{1}{4}\right),
\]

\end{theorem}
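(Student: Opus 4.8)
This is the usual dimension-free vector Bernstein inequality (the form attributed to the reference cited in the text), and the plan is to prove the stated slightly lossy version by the \emph{moment method}: bound the moments of $Z = \|\sum_{i=1}^{m}\x_i\|_{2}$ and then apply Markov's inequality with an optimally chosen order.

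\emph{Step 1 (second moment and the trivial regime).} Writing $S=\sum_i \x_i$, independence together with $E[\x_i]=0$ makes every cross term vanish, so $E[Z^{2}] = E[\|S\|_{2}^{2}] = \sum_i E[\|\x_i\|_{2}^{2}] \le \sigma^{2}$, and hence $E[Z]\le \sigma$ by Jensen. This already disposes of the range where $t$ is of order $\sigma$ or smaller, since there the asserted bound $\exp(-t^{2}/8\sigma^{2}+1/4)$ exceeds $1$; the content is the tail with $t$ somewhat larger than $\sigma$ (still with $t\le\sigma^{2}/R$).

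\emph{Step 2 (a sub-gamma moment recursion).} The core estimate is a recursion bounding $E[Z^{q}]$ in terms of lower moments. Using $\|S\|_{2}=\langle S,\, S/\|S\|_{2}\rangle$, one writes $Z^{q}=\sum_i \|S\|_{2}^{q-1}\langle \x_i,\, S/\|S\|_{2}\rangle$ and, for each $i$, replaces the unit vector $S/\|S\|_{2}$ by $S_{-i}/\|S_{-i}\|_{2}$, where $S_{-i}:=S-\x_i$. Since $\x_i$ is independent of $S_{-i}$ and $E[\x_i]=0$, the leading term has zero expectation, and the remainder is controlled by $\|\x_i\|_{2}\le R$ times a lower power of $\|S\|_{2}$ (because the two unit vectors differ by at most roughly $2\|\x_i\|_{2}/\|S\|_{2}$). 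Collecting terms and using that $\sum_i\|\x_i\|_{2}^{2}$ has expectation at most $\sigma^{2}$ yields an inequality of sub-gamma type, schematically $E[Z^{q}] \le C\,q\,(\sigma^{2}+R^{2})\,E[Z^{q-2}] + \cdots$, whose iteration from the base case $E[Z^{2}]\le\sigma^{2}$ gives a bound of the form $(E[Z^{q}])^{1/q} \le C_{1}\sqrt{q}\,\sigma + C_{2}\,qR$. Two equivalent routes reach the same conclusion: run the Doob martingale $W_k = E[Z\mid \x_1,\dots,\x_k]$, whose increments are bounded by $R$ and whose predictable quadratic variation is of order $\sigma^{2}$ by the same zero-mean cancellation, and invoke Freedman's inequality; or apply an Efron--Stein-type ``Bernstein bounded differences'' inequality, whose variance proxy is $\sigma^{2}$ precisely because $E\|\x_i-\x_i'\|_{2}^{2}=2\,E\|\x_i\|_{2}^{2}$ for an independent copy $\x_i'$.

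\emph{Step 3 (Markov and optimization).} Markov's inequality gives $\Pr[Z\ge t]\le t^{-q}E[Z^{q}]$. Restricting to $t\le \sigma^{2}/R$ is exactly what makes the sub-Gaussian term $\sqrt{q}\,\sigma$ dominate the sub-exponential term $qR$ once $q$ is at most of order $t^{2}/\sigma^{2}$, so choosing $q$ of that order and tracking constants produces the stated $\exp(-t^{2}/8\sigma^{2}+1/4)$. I expect the main obstacle to be exactly this last piece of bookkeeping: producing a genuinely dimension-free recursion in Step 2 with constants clean enough that, after optimizing $q$, one lands on the advertised exponent $1/8$ (plus the harmless additive $1/4$) rather than a messier absolute constant. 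Sharpness hinges on exploiting the cancellation that makes the variance enter as $\sigma^{2}=\sum_i E\|\x_i\|_{2}^{2}$ and not as $mR^{2}$, and on invoking the hypothesis $t\le\sigma^{2}/R$ at the right moment.
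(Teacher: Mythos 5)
The paper does not actually prove this statement: it is quoted from \cite{gross2011recovering} as a known result, and the particular constants in the exponent, $-t^2/(8\sigma^2)+1/4$, are the fingerprint of the intended one-line derivation. Gross's vector Bernstein inequality is a deviation bound around the mean, of the form $\Pr\left[\left\|\sum_i \x_i\right\|_2 \ge \sigma + s\right]\le \exp\left(-s^2/(4\sigma^2)\right)$ for $0\le s\le \sigma^2/R$; writing $t=\sigma+s$ and using $(t-\sigma)^2\ge t^2/2-\sigma^2$ gives exactly $\exp\left(-t^2/(8\sigma^2)+1/4\right)$, the case $t<\sigma$ being vacuous because the right-hand side then exceeds $1$. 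Your Step 1 correctly identifies the vacuous regime, but the rest of your argument does not reach the stated bound.

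The gap is in Steps 2--3. The moment recursion is asserted, not derived: starting from $Z^q=\sum_i\|S\|_2^{q-1}\langle \x_i, S/\|S\|_2\rangle$ you must decouple both the scalar factor $\|S\|_2^{q-1}$ and the unit vector $S/\|S\|_2$ from $\x_i$ before the zero-mean cancellation applies, and the resulting error terms involve $\sum_i\|\x_i\|_2^2(\|S\|_2+R)^{q-2}$, a quantity correlated with $\|S\|_2$ that needs a further H\"older or induction step you have not supplied; the schematic ``$E[Z^q]\le Cq(\sigma^2+R^2)E[Z^{q-2}]+\cdots$'' with unnamed constants and a trailing ellipsis is not a proof. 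More importantly, even a fully executed moment bound of the form $(E[Z^q])^{1/q}\le C_1\sqrt{q}\,\sigma+C_2qR$ followed by Markov yields only $\exp(-c\,t^2/\sigma^2)$ for some $c$ determined by $C_1,C_2$; nothing in your argument forces $c\ge 1/8$, and the additive $+1/4$ cannot arise along this route at all. You flag this as ``bookkeeping,'' but it is precisely the content of the statement with its stated constants. If you want a self-contained proof, the efficient path is your Freedman/Doob-martingale variant pushed to an explicit bound on $\Pr[Z\ge E[Z]+s]$ (using $E[Z]\le\sigma$ and $Rs\le\sigma^2$), followed by the shift described above.
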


\begin{lemma} \label{injective}
For a fixed matrix $\X \in\S$, the following estimate holds
\begin{equation}
\textrm{Pr}\,(||\P_{\T}\Po^{*}\P_{\T}\X-\P_{\T}\X||_{F}\ge t||\X||_{F}) \le
\exp\left(-\frac{t^2\kappa}{16\left(\nu+\frac{1}{nr}\right)}+\frac{1}{4}\right)
\end{equation}
for all $\displaystyle t \le1$ with $\displaystyle \kappa = \frac{m}{nr}$. 
\end{lemma}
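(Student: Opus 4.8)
The goal is to show that for a fixed $\X \in \S$, the random operator $\P_{\T}\Po^{*}\P_{\T}$ applied to $\X$ concentrates around its mean $\P_{\T}\X$. The first step is to identify the mean: using the definition $\Po^{*}\X = \frac{L}{m}\sum_{\alphab\in\Omega}\langle \X,\v_{\alphab}\rangle\w_{\alphab}$ and the fact that the $\alphab$'s are sampled uniformly from $\Us$, one computes $E[\Po^{*}\Z] = \frac{L}{m}\cdot m \cdot \frac{1}{L}\sum_{\alphab\in\Us}\langle \Z,\v_{\alphab}\rangle\w_{\alphab} = \sum_{\alphab\in\Us}\langle \Z,\v_{\alphab}\rangle\w_{\alphab}$, which is the identity on $\S$ by the dual-basis expansion \eqref{eq:x_expansion_dual}. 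Hence $E[\P_{\T}\Po^{*}\P_{\T}\X] = \P_{\T}\X$, and we may write $\P_{\T}\Po^{*}\P_{\T}\X - \P_{\T}\X = \sum_{k=1}^{m}\x_k$ where $\x_k = \frac{L}{m}\langle \P_{\T}\X,\v_{\alphab_k}\rangle\,\P_{\T}\w_{\alphab_k} - \frac{1}{m}\P_{\T}\X$, viewed as a vector-valued random variable in the (vectorized) Frobenius space. Each $\x_k$ is i.i.d., zero-mean, and we aim to apply the vector Bernstein inequality.

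**Estimating $R$ and $\sigma^2$.** For the uniform bound $R$, note $\|\x_k\|_F \le \frac{L}{m}|\langle \P_{\T}\X,\v_{\alphab_k}\rangle|\,\|\P_{\T}\w_{\alphab_k}\|_F + \frac{1}{m}\|\P_{\T}\X\|_F$. Bounding $|\langle \P_{\T}\X,\v_{\alphab}\rangle| \le \|\P_{\T}\v_{\alphab}\|_F\|\X\|_F$ and using the simplified coherence bounds \eqref{eq:coherencew}, \eqref{eq:coherencev}, i.e.\ $\|\P_{\T}\w_{\alphab}\|_F^2 \le 2\nu r/n$ and $\|\P_{\T}\v_{\alphab}\|_F^2 \le 8\nu r/n$, yields something on the order of $\frac{L}{m}\cdot\frac{4\nu r}{n}\|\X\|_F + \frac{1}{m}\|\X\|_F$; with $L = n(n-1)/2 \approx n^2/2$ this gives $R \lesssim \frac{n\nu r}{m}\|\X\|_F$, absorbing the lower-order term into the $\nu + \frac{1}{nr}$ combination that appears in the statement. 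For the variance proxy, $\sum_k E[\|\x_k\|_F^2] \le \sum_k E[\|\frac{L}{m}\langle\P_{\T}\X,\v_{\alphab_k}\rangle\P_{\T}\w_{\alphab_k}\|_F^2]$ (dropping the subtracted mean only decreases the sum, or one controls it directly). Expanding the expectation over the uniform sample, $E[\cdots] = \frac{L}{m}\cdot\frac{1}{m}\sum_{\alphab\in\Us}\langle\P_{\T}\X,\v_{\alphab}\rangle^2\|\P_{\T}\w_{\alphab}\|_F^2 \le \frac{L}{m^2}\cdot\frac{2\nu r}{n}\sum_{\alphab}\langle\P_{\T}\X,\v_{\alphab}\rangle^2$. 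The key is to recognize $\sum_{\alphab\in\Us}\langle\P_{\T}\X,\v_{\alphab}\rangle^2 \le C\,n\|\P_{\T}\X\|_F^2 \le C\,n\|\X\|_F^2$ — an analogue of Lemma~\ref{norm_equivalent} for the dual basis, following from $\sum_{\alphab}\langle \Z,\v_{\alphab}\rangle^2 = \langle \Z, \V^T\V\,\mathrm{vec}(\Z)\rangle = \langle \Z,\H^{-1}\Z\rangle \le \lambda_{\max}(\H^{-1})\cdot(\text{something})$ combined with the spectral bounds on $\H$ from Lemma~\ref{H_prop}. Summing over $k$ and using $L \approx n^2/2$ then gives $\sigma^2 \lesssim \frac{n\nu r}{m}\|\X\|_F^2$ (again folding in the $\frac{1}{nr}$ correction), so that $\sigma^2/R \gtrsim \|\X\|_F$, which covers the admissible range $t \le 1$ in the hypothesis after rescaling $t \mapsto t\|\X\|_F$.

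**Assembling the bound and the obstacle.** Plugging $R$ and $\sigma^2$ into the vector Bernstein inequality with threshold $t\|\X\|_F$ gives $\textrm{Pr}[\|\P_{\T}\Po^{*}\P_{\T}\X - \P_{\T}\X\|_F \ge t\|\X\|_F] \le \exp(-\frac{t^2\|\X\|_F^2}{8\sigma^2} + \frac{1}{4}) \le \exp(-\frac{t^2 m}{c\,n\nu r} + \frac14)$, and tracking the exact constants (the $8$ from Bernstein, the $2\nu r/n$ and $8\nu r/n$ from coherence, the factor from $\sum\langle\cdot,\v_{\alphab}\rangle^2 \le C n\|\cdot\|_F^2$, and the $\frac{1}{nr}$ additive term) should reproduce the stated $\exp\!\big(-\frac{t^2\kappa}{16(\nu + \frac{1}{nr})} + \frac14\big)$ with $\kappa = m/(nr)$. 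The main obstacle I anticipate is the variance estimate: unlike the orthonormal-basis case in \cite{gross2011recovering}, here the measurement functionals $\langle\cdot,\w_{\alphab}\rangle$ and the reconstruction atoms $\v_{\alphab}$ differ, so the cross terms do not collapse cleanly, and one must carefully invoke the spectral properties of $\H$ (Lemma~\ref{H_prop}) and the norm-equivalence lemma (Lemma~\ref{norm_equivalent}) — possibly a dual-basis version of it — to keep $\sum_{\alphab}\langle\P_{\T}\X,\v_{\alphab}\rangle^2$ controlled by $\|\X\|_F^2$ up to the $O(n)$ factor; getting the precise constant $16$ rather than a worse one requires being slightly clever about which coherence inequality to apply where (e.g.\ pairing $\|\P_{\T}\w_{\alphab}\|_F^2$ inside the sum with the dual-basis bound outside).
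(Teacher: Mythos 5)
Your overall strategy is exactly the paper's: identify $E[\Po^{*}]$ as the identity on $\S$ via the dual-basis expansion, write $\P_{\T}\Po^{*}\P_{\T}\X-\P_{\T}\X$ as a sum of $m$ i.i.d.\ zero-mean summands $\frac{L}{m}\langle \P_{\T}\X,\v_{\alphab}\rangle\P_{\T}\w_{\alphab}-\frac{1}{m}\P_{\T}\X$, and feed $R$ and $\sigma^{2}$ into the vector Bernstein inequality. Your bound on $R$ via $|\langle \P_{\T}\X,\v_{\alphab}\rangle|\le \|\P_{\T}\v_{\alphab}\|_{F}\|\X\|_{F}$ together with \eqref{eq:coherencew} and \eqref{eq:coherencev} is the same computation the paper does, yielding $R=\frac{1}{m}(2n\nu r+1)$.

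The one step that does not hold up as written is the variance estimate. You invoke $\sum_{\alphab\in\Us}\langle \P_{\T}\X,\v_{\alphab}\rangle^{2}\le C\,n\,\|\P_{\T}\X\|_{F}^{2}$ and then claim this yields $\sigma^{2}\lesssim \frac{n\nu r}{m}\|\X\|_{F}^{2}$; but with that bound the arithmetic gives $\sigma^{2}\approx \frac{L}{m}\cdot\frac{2\nu r}{n}\cdot Cn\,\|\X\|_{F}^{2}\approx \frac{Cn^{2}\nu r}{m}\|\X\|_{F}^{2}$, an extra factor of $n$, which would degrade the exponent to $O\!\left(\frac{t^{2}m}{n^{2}r\nu}\right)$ and ultimately push the sample complexity to $O(n^{2}r\nu\log^{2}n)$. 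The factor of $n$ you inserted belongs to the $\w$-basis sum, not the dual-basis sum: Lemma \ref{norm_equivalent} states $\sum_{\alphab\in\Us}\langle \X,\v_{\alphab}\rangle^{2}\le \|\X\|_{F}^{2}$ with no $n$, precisely because $\sum_{\alphab}\langle\X,\v_{\alphab}\rangle^{2}=\x^{T}\V\V^{T}\x$ and $\lambda_{\max}(\H^{-1})\le 1$ by Lemma \ref{H_prop} (the $2n$ factor appears only on the $\w$-side, where $\lambda_{\max}(\H)=2n$). Using that upper bound as stated — no new ``dual-basis analogue'' is needed — gives $\sigma^{2}=\frac{2}{m}(1+n\nu r)$ and the claimed $\exp\!\left(-\frac{t^{2}\kappa}{16(\nu+\frac{1}{nr})}+\frac{1}{4}\right)$. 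With that correction your argument coincides with the paper's proof.
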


\begin{proof}
It is clear that $\P_{\T}\X\in \S$ from the definition of $\P_{\T}$. Thus, we can expand $\P_{\T} \Po^{*}\P_{\T}\X$ as follows.
\begin{equation} \label{eq:ptrpt}
\P_{\T} \Po^{*}  \P_{\T}\X = \frac{L}{m} \sum_{\alphab\in\Omega} \langle \P_{\T}\X\,,\v_{\alphab} \rangle \,
\P_{\T}\,\w_{\alphab}
\end{equation}
With this, $\P_{\T} \Po^{*}\P_{\T}\X-\P_{\T}\X$
can be written as follows
\begin{equation}\label{eq:ptrpt_pt}
\P_{\T} \Po^{*}  \P_{\T}\X-\P_{\T}\X = \sum_{{\alphab\in\Omega}}\bigg[
\frac{L}{m} \langle \P_{\T}\X\,,\v_{\alphab} \rangle \,\P_{\T}\,\w_{\alphab}
-\frac{1}{m}\P_{\T}\X\bigg]
\end{equation}
Let $\X_{\alphab} = \frac{L}{m} \langle \P_{\T}\X\,,\v_{\alphab} \rangle \P_{\T}\,\w_{\alphab}$. Noting that $E[\frac{L}{m} \langle \P_{\T}\X\,,\v_{\alphab} \rangle \P_{\T}\,\w_{\alphab}]= \frac{1}{m}\P_{\T}\X$, 
let $\Y_{\alphab} = \X_{\alphab}-E[\X_{\alphab}]$ denote the summand. It is clear that $E[\Y_{\alphab}]=\bm{0}$ by construction.
The vector Bernstein inequality requires a suitable bound for 
$||\Y_{\alphab}||_{F}$ and $E[||\Y_{\alphab}||_{F}^{2}]$. Without loss of generality, assume $||\X||_{F}=1$.
The first step is to bound $||\Y_{\alphab}||_{F}$.
\begin{align}
||\Y_{\alphab}||_{F}= \bigg|\bigg|\frac{L}{m}\langle \P_{\T}\X\,,\v_{\alphab} \rangle\, \P_{\T}\,\w_{\alphab}
-\frac{1}{m}\P_{\T}\X\bigg|\bigg|_{F}  &\le
\bigg|\bigg|\frac{L}{m}\langle \P_{\T}\X\,,\v_{\alphab} \rangle \P_{\T}\,\w_{\alphab}\bigg|\bigg|_{F}
+\bigg|\bigg|\frac{1}{m}\P_{\T}\X\bigg|\bigg|_{F} \label{eq:norm_Y_squared}\\
& < \frac{n^2}{2m}\max_{\alphab\in \Us} ||\P_{\T}\,\w_{\alphab}\,||_{F}\max_{\alphab\in \Us} ||\P_{\T}\,\v_{\alphab}||_{F}
+\frac{1}{m}\nonumber\\
& \le \frac{1}{m}(2n\nu r+1)
\end{align}
Above, the last inequality follows from the coherence estimates \eqref{eq:coherencew} and \eqref{eq:coherencev}. 
Using the above estimate, set $R$ in the vector Bernstein inequality as $R = \frac{1}{m}(2n\nu r+1)$.
Using the parallelogram identity and monotonicity of expectation, $E[||\Y_{\alphab}||_{F}^{2}] \le 
2E[||\X_{\alphab}||_{F}^{2}]+2||E[\X_{\alphab}]||_{F}^{2}$.
With this, $E[||\Y_{\alphab}||_{F}^{2}]$ can be upper bounded
as follows
\begin{align*}
E[||\Y_{\alphab}||_{F}^{2}] &\le \frac{2L^2}{m^2} E[\langle \P_{\T}\X\,,\v_{\alphab} \rangle^{2}||\P_{\T}\,\w_{\alphab}||_{F}^{2}]
+\frac{2}{m^2}||\P_{\T}\X||_{F}^{2}\\
& \le \frac{2L}{m^2} \max_{\alphab\in \Us}||\P_{\T}\w_{\alphab}||_{F}^{2} \sum_{\alphab\in \Us}\langle \P_{\T}\X\,,\v_{\alphab} \rangle^{2}
+\frac{2}{m^2}\\
& \le \frac{n^2}{m^2}\frac{2\nu r}{n}+\frac{2}{m^2}
\le \frac{2}{m^2}\left(1+n\nu r\right)
\end{align*} 
Above, the third inequality follows from the coherence estimates, \eqref{eq:coherencew} and \eqref{eq:coherencev},
and Lemma \ref{norm_equivalent}. Using the above estimate, set $\sigma^{2}$ in vector Bernstein inequality as $\sigma^{2} = \frac{2}{m}\left(1+n\nu r\right)$.
Finally, apply the vector Bernstein inequality
with $R= \frac{1}{m}(2n\nu r+1)$ and $\sigma^{2}= \frac{2}{m}\left(1+n\nu r\right)$.
For $t\le \frac{\sigma^{2}}{R} > 1 $, 
\begin{equation}
\textrm{Pr}\,(||\P_{\T}\Po^{*}\P_{\T}\X-\P_{\T}\X||_{F}\ge t) \le 
\exp\left(-\frac{t^2\kappa}{16\left(\nu+\frac{1}{nr}\right)}+\frac{1}{4}\right)
\end{equation}
where $\displaystyle\kappa= \frac{m}{nr}$. This concludes the proof of Lemma \ref{injective}.
\end{proof}

Having proved Lemma \ref{injective}, the next step is to show that the golfing scheme \eqref{eq:golfing_scheme} certifies the conditions
in \eqref{eq:yconds} with very high probability. The precise statement is stated in Lemma \ref{construction_of_y} below. 
 
\begin{lemma} \label{construction_of_y}
 $\Y_{l}$ obtained from the golfing scheme \eqref{eq:golfing_scheme} satisfies the conditions
 in \eqref{eq:yconds} with failure probability at most 
$\displaystyle p = \sum_{i=1}^{l} (p_2(i) +  p_3(i)+ p_4(i))$
where  
$\displaystyle p_2(i) = \exp\left(\frac{-\kappa_i}{64\left(\nu+\frac{1}{nr}\right)}+\frac{1}{4}\right)$,
$\displaystyle p_3(i) = n \exp\left(\frac{-3k_i}{128\nu}\right) $ and 
$\displaystyle p_{4}(i) = n^{2} \exp\left(\frac{-3\kappa_i}{64\left(\nu+\frac{1}{nr}\right)}\right)$ with $\displaystyle k_{i} = \frac{m_i}{nr}$.

\end{lemma}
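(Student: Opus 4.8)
The plan is to analyze the golfing iterates \eqref{eq:golfing_scheme} along the lines of \cite{gross2011recovering}, driving down two size functionals of the residuals $\Q_i$ --- the Frobenius norm and a coefficient $\ell_\infty$ norm --- while separately controlling the $\T^\perp$-part of $\Y_l$. First I would record the structural facts. Since $\Q_0 = \textrm{sgn}\,\M \in \T$ (Lemma \ref{signx_prop}) and $\Q_i = \textrm{sgn}\,\M - \P_\T\Y_i$, induction gives $\Q_i \in \T$ for all $i$; substituting \eqref{eq:golfing_scheme} and using $\P_\T\Q_{i-1} = \Q_{i-1}$ then yields the recursion $\Q_i = (\P_\T - \P_\T\R_i^*\P_\T)\Q_{i-1}$ and the telescoping identity $\P_{\T^\perp}\Y_l = \sum_{i=1}^l\P_{\T^\perp}\R_i^*\Q_{i-1}$. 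I would also note $\|\Q_0\|_F = \sqrt r$ and, since $\M \succeq \bm{0}$ makes $\textrm{sgn}\,\M = \U\U^T$, the joint-coherence estimate \eqref{eq:jointcoherence} gives $q_0 := \max_{\alphab}|\langle\Q_0,\v_\alphab\rangle| = \max_{\alphab}|\langle\v_\alphab,\U\U^T\rangle| \le \sqrt{\nu r}/n$.

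To certify the first inequality in \eqref{eq:yconds} I would apply Lemma \ref{injective} to each batch operator $\R_i$ --- which has exactly the form of $\Po$ with $m_i$ samples, so the lemma holds verbatim with $\kappa$ replaced by $\kappa_i = m_i/(nr)$ --- with $t = \frac{1}{2}$. By the recursion this gives $\|\Q_i\|_F \le \frac{1}{2}\|\Q_{i-1}\|_F$ except on an event of probability at most $p_2(i)$, so on the intersection over $i = 1,\dots,l$ one has $\|\P_\T\Y_l - \textrm{sgn}\,\M\|_F = \|\Q_l\|_F \le 2^{-l}\sqrt r$, which is at most $\frac{1}{4n\sqrt{8L}}$ once $l \ge \log_2(4\sqrt{8Lr}\,n)$ --- the number of batches fixed in Theorem \ref{thm1}.

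For the second inequality I would use $\|\P_{\T^\perp}\Y_l\| \le \sum_{i=1}^l\|\P_{\T^\perp}\R_i^*\Q_{i-1}\|$ and bound each summand by the matrix Bernstein inequality. Writing $\P_{\T^\perp}\R_i^*\Q_{i-1} = \sum_{\alphab\in\Omega_i}\frac{L}{m_i}\langle\Q_{i-1},\v_\alphab\rangle\,\P_{\T^\perp}\w_\alphab$ --- a sum of $m_i$ i.i.d.\ mean-zero symmetric matrices, since $\P_{\T^\perp}\Q_{i-1}=\bm{0}$ --- the almost-sure spectral bound on a summand comes from $q_{i-1}$ and $\|\w_\alphab\| = 2$, and the variance proxy additionally uses the identity $\sum_{\alphab\in\Us}\w_\alphab = n\bm{I} - \bm{1}\bm{1}^T$ together with $\w_\alphab \succeq \bm{0}$ to pull the coefficients out of the operator sum, giving $\| \sum_{\alphab}\langle\Q_{i-1},\v_\alphab\rangle^2(\P_{\T^\perp}\w_\alphab)^2 \| \le 2n\,q_{i-1}^2$. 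With deviation parameter $\frac{1}{4}2^{-(i-1)}$ this yields $\|\P_{\T^\perp}\R_i^*\Q_{i-1}\| \le \frac{1}{4}2^{-(i-1)}$ except on an event of probability at most $p_3(i)$ (the prefactor $n$ being the matrix dimension), hence $\|\P_{\T^\perp}\Y_l\| < \frac{1}{2}$ by summation. This step presumes $q_{i-1} \le 2^{-(i-1)}q_0$, so the remaining ingredient is an $\ell_\infty$ contraction: from the recursion and the expansion $\Q_{i-1} = \sum_{\alphab}\langle\Q_{i-1},\v_\alphab\rangle\w_\alphab$ of \eqref{eq:x_expansion_dual}, each $\langle\Q_i,\v_\betab\rangle$ is a sum of $m_i$ i.i.d.\ mean-zero scalars with increments controlled by $q_{i-1}$ and $\langle\w_\alphab,\P_\T\v_\betab\rangle$; the scalar Bernstein inequality together with a union bound over the $|\Us| < n^2$ indices $\betab$ then gives $q_i \le \frac{1}{2}q_{i-1}$ except on an event of probability at most $p_4(i)$. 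Intersecting all of these events over $i = 1,\dots,l$ and applying the union bound, the total failure probability is at most $\sum_{i=1}^l(p_2(i) + p_3(i) + p_4(i))$, and on the complement both inequalities in \eqref{eq:yconds} hold.

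I expect the crux to be the scalar Bernstein step behind $p_4(i)$: a crude estimate such as $|\langle\w_\alphab,\P_\T\v_\betab\rangle| \le 2\sqrt{\nu r/n}$ (obtained by merely dropping terms in \eqref{eq:coherencev1}) introduces an extra factor $\sqrt{n}$ into the Bernstein parameters and makes the bound vacuous; one must instead write $\langle\w_\alphab,\P_\T\v_\betab\rangle = \langle\P_\T\w_\alphab,\P_\T\v_\betab\rangle$ and apply Cauchy--Schwarz with both Frobenius norms controlled by \eqref{eq:coherencew} and \eqref{eq:coherencev}, obtaining the sharp $|\langle\w_\alphab,\P_\T\v_\betab\rangle| \le 4\nu r/n$ --- this is exactly what produces the denominator $\nu + \frac{1}{nr}$ in $p_4(i)$, and the $q_{i-1}$-control it delivers is in turn what makes the matrix Bernstein step for $p_3(i)$ succeed. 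A secondary difficulty is that, unlike the sampling operator in \cite{gross2011recovering}, $\R_i$ is not self-adjoint, so the $\v$- and $\w$-bases become intertwined in every recursion and one must keep careful track of which of \eqref{eq:coherencew}, \eqref{eq:coherencev}, \eqref{eq:jointcoherence}, \eqref{eq:coherencev1} to invoke at each step.
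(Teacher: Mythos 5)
Your proposal is correct and follows essentially the same route as the paper: the Frobenius contraction of $\Q_i$ via Lemma \ref{injective} (giving $p_2$), the $\ell_\infty$ contraction of the dual-basis coefficients $\max_{\betab}|\langle \Q_i,\v_{\betab}\rangle|$ via scalar Bernstein with a union bound over $\betab$ (giving $p_4$, the content of Lemma \ref{joint_coherence_size_golfing}), the matrix Bernstein bound on $\|\P_{\T^{\perp}}\R_i^{*}\Q_{i-1}\|$ using $\sum_{\alphab}\w_{\alphab}^{2}=2n\I-2\bm{1}\bm{1}^{T}$ and Lemma \ref{operator_norm_of_sum} (giving $p_3$, the content of Lemma \ref{golfing_size}), and a final union bound. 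The Cauchy--Schwarz estimate $|\langle\P_{\T}\w_{\alphab},\P_{\T}\v_{\betab}\rangle|\le 4\nu r/n$ that you flag as the crux is exactly the step the paper uses in Lemma \ref{joint_coherence_size_golfing}.
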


\begin{proof}
Note that $\textrm{sgn } \M$ is symmetric. It is easy to verify that $\Q_i$ is symmetric as 
$\Y_{i}$ and $\P_{\T}\Y_{i}$ are symmetric. In addition, using Lemma \ref{signx_prop}, $\Q_{i}$ is in $\T$ since $\textrm{sgn } \M \in \T$. To show that the first condition in \eqref{eq:yconds} holds,
we derive a recursive formula for $\Q_i$ as follows. 
\begin{align}
  \Q_{i} & =  \textrm{sgn } \M - \P_{\T} \left(\sum_{j=1}^{i} \R^{*}_{j} \Q_{j-1}\right)= 
	\textrm{sgn } \M - \P_{\T} \left(\sum_{j=1}^{i-1} \R^{*}_{j} \Q_{j-1} + \R^{*}_{i}\Q_{i-1}\right)\nonumber\\
  & = \textrm{sgn } \M - \P_{\T}  \sum_{j=1}^{i-1} \R^{*}_{j} \Q_{j-1} -\P_{\T}  \R^{*}_{i}\Q_{i-1}
  =  \textrm{sgn } \M - \P_{\T}  \Y_{i-1} -\P_{\T}  \R^{*}_{i}\Q_{i-1}\nonumber\\
   & = \Q_{i-1} -\P_{\T}  \R^{*}_{i}\Q_{i-1}= (\P_{\T} - \P_{\T} \R^{*}_{i} \P_{\T} ) \Q_{i-1} \label{eq:qi_update}
\end{align}
Observe that the first condition in \eqref{eq:yconds} is equivalent to a bound on $\|\Q_{l}\|_{F}$.
Using the bound $\|(\P_{\T} - \P_{\T} \R^{*}_{i} \P_{\T})\Q_{i-1}\|_{F}< t_{2,i}\|\Q_{i-1}\|_{F}$
with failure probability $p_2(i)$ obtained from Lemma \ref{injective} and setting $t_{2,i} = 1/2$,
$\displaystyle p_{2}(i) = \exp\left(\frac{-\kappa_i}{64\left(\nu+\frac{1}{nr}\right)}+\frac{1}{4}\right)$
with $\displaystyle \kappa_i = \frac{m_i}{nr}$, we have the following bound of $\|\Q_{i}\|_{F}$ using the above recursive formula.
\begin{equation}\label{eq:qi_norm}
  \|\Q_{i}\|_{F} <  \left(\prod_{k=1}^{i} t_{2,k} \right) \|\Q_0\|_F = 2^{-i} \sqrt{r}
\end{equation}
To satisfy the first condition in \eqref{eq:yconds}, set $l = \log_{2}\left(4n\sqrt{8Lr}\right)$.
Using the union bound on the failure probabilities $p_2(i)$, we have $\displaystyle ||\P_{\T}\Y-\textrm{Sgn}\,\M||_{F} = ||\Q_l||_{F} \le \sqrt{r}2^{-l} = \frac{1}{4n\sqrt{8L}}$ holds
true with failure probability at most $\sum_{i=1}^{l} p_2(i)$. 

To complete the proof, it remains to show that $\Y$ satisfies the second condition in \eqref{eq:yconds}.
The condition is equivalent to controlling the operator norm of
$\P_{\T^{\perp}} \Y_l$. First, it is clear that $ \|\P_{\T^{\perp}} \Y_{l} \|  
=  \|  \sum_{j=1}^{l} 
\P_{\T^{\perp}} \R^{*}_j \Q_{j-1}\| \le   \sum_{j=1}^{l} \|\P_{\T^{\perp}} \R^{*}_j \Q_{j-1}\| $.
This motivates bounding the operator norm of $\|\P_{\T^{\perp}} \R^{*}_j \Q_{j-1}\|$
which will be the focus of Lemma \ref{golfing_size} below.
Before proving the lemma, we start by addressing an assumption the lemma entails 
on the size of  $\eta(\Q_i) = \max_{\betab} \,|\langle \Q_{i}\,,\v_{\betab} \rangle|$.
Specifically, at the $i$-th stage of the golfing scheme, 
we need to show that the assumption $\displaystyle \eta(\Q_{i})^{2}\le \frac{\nu}{n^2} c_{i}^2$
with $||\Q_{i}||_{F}^{2}\le c_i^2$ holds with very high probability. 
To enforce this, let $\eta(\Q_i)< t_{4,i}$ with probability $1-p_4(i)$. We set
$t_{4,i} = \frac{1}{2}\eta(\Q_{i-1})$ to obtain
\[
\eta(\Q_{i})^{2} < 2^{-2} \eta(\Q_{i-1})^{2} < 2^{-2i} \eta(\textrm{sgn}\,\M)^2 \le  \frac{\nu r}{n^{2}}(2^{-2i})  = \frac{\nu}{n^2}(2^{-2i}r) 
\]
Above, the second inequality is obtained by applying the inequality $\eta(\Q_i)< \frac{1}{2}\eta(\Q_{i-1})$ recursively 
and the last inequality follows from the coherence condition in \eqref{eq:jointcoherence}.
Using \eqref{eq:qi_norm},  at the $i$-th stage of the golfing scheme,  
the above inequality  precisely enforces the condition that
$\displaystyle\eta(\Q_{i})^{2}\le \frac{\nu}{n^2}c_i^2$ with $c_i = 2^{-i}\sqrt{r}$. Using Lemma \ref{joint_coherence_size_golfing},
noting that $\eta(\Q_i) = \eta(\Q_{i-1}-\P_{T}\R^{*}_{i}\Q_{i-1}),$ 
the failure probability $p_{4}(i)$ is given by
\[
p_{4}(i) =  n^{2} \exp\left(\frac{-3\kappa_i}{64\left(\nu+\frac{1}{nr}\right)}\right) \quad \quad \forall i\in[1,l]
\]
This concludes the analysis on the assumption of the size of $\eta(\Q_i)$
which will be used freely in Lemma \ref{golfing_size}. 
Using the bound $||\P_{\T^{\perp}}\R^{*}_j\Q_{j-1}|| \le t_{3,j}\,c_{j-1}$, where once again $c_{j-1}$ satisfies $||\Q_{j-1}||_{F}\le c_{j-1}=2^{-(j-1)}$, with failure probability $p_3(j)$ obtained from Lemma \ref{golfing_size} and setting
$\displaystyle t_{3,j}=\frac{1}{4\sqrt{r}}$, 
$\displaystyle p_{3}(j) = n \exp\left(\frac{-3\kappa_j}{128\nu}\right)$
with $\displaystyle \kappa = \frac{m_j}{nr}$, it follows that
\begin{align*}
  ||\P_{\T^{\perp}} \Y_{l}|| \le  \sum_{k=1}^{l} ||\P_{\T^{\perp}}\R^{*}_{k}\Q_{k-1}||
  \le \sum_{k=1}^{l} t_{3,\,k}\,c_{k-1} = \frac{1}{4\sqrt{r}} \sum_{k=1}^{l} c_{k-1}
  = \frac{1}{4\sqrt{r}} \sum_{k=1}^{l} \sqrt{r}2^{-(k-1)} < \frac{1}{2}
  \end{align*}
where the second to the last inequality uses \eqref{eq:qi_norm} to bound $||\Q_{k-1}||_{F}$.
Using the union bound of failure probabilities, $||\P_{\T^{\perp}} \Y_{l}||<\frac{1}{2}$ holds true with failure probability 
which is at most $\sum_{j=1}^{l}[p_3(j)+p_4(j)]$. 
The second condition in \eqref{eq:yconds} now holds with
this failure probability. This completes the proof of Lemma \ref{construction_of_y}.
\end{proof}

The proof of Lemma \ref{golfing_size} uses the Bernstein inequality.  A simplified version of the
inequality was derived in \cite{tropp2012user}. Our analysis uses
this simpler version and is restated below for ease of reference.

\begin{theorem}[Bernstein inequality]
 Consider a finite sequence $\{\X_i\}$ of independent, random, self-adjoint matrices with dimension $n$.
  Assuming that 
  \[
E[\X_i] =0 \quad \textrm{and} \quad \lambda_{\max}(\X_{i})\le R \quad \textrm{ almost surely.}
\]
with the norm of the total variance
$\sigma^{2} = \bigg|\bigg|\sum_{i} E(\X_{i}^{2})\,\bigg|\bigg|$, then, 
for all $t\ge 0$, the following estimate holds:
\begin{equation} \label{eq:bern}
  Pr\bigg[\big|\big|\sum_{i}\X_{i}\big|\big|>t\bigg] \le
\begin{cases}
\displaystyle n \exp\left(-\frac{3t^2}{8\sigma^{2}}\right) & t \le \frac{\sigma^{2}}{R} \vspace{0.2cm}\\
 \displaystyle n \exp\left(-\frac{3t}{8R}\right)            & t\ge \frac{\sigma^{2}}{R} 
\end{cases}
\end{equation}
\end{theorem}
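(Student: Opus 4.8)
The statement is a streamlined, two-regime repackaging of the general matrix Bernstein inequality of Tropp \cite{tropp2012user}, so the plan is not to reprove the hard part but to extract the clean form that the later lemmas will plug into (in particular the bound on $\|\P_{\T^{\perp}}\R^{*}_j\Q_{j-1}\|$ needed inside Lemma \ref{construction_of_y}). First I would invoke Tropp's general bound: for a finite sequence $\{\X_i\}$ of independent, random, self-adjoint $n\times n$ matrices with $E[\X_i]=\bm{0}$ and $\lambda_{\max}(\X_i)\le R$ almost surely, and with $\sigma^2=\|\sum_i E[\X_i^2]\|$, one has
\[
\Pr\Big[\lambda_{\max}\big(\textstyle\sum_i \X_i\big)\ge t\Big]\le n\exp\!\left(\frac{-t^2/2}{\sigma^2+Rt/3}\right),\qquad t\ge 0.
\]
I would cite this and only recall where its content sits: the matrix Laplace transform method bounds the matrix moment generating function of each summand by $\exp(g(\theta)E[\X_i^2])$ with $g(\theta)=(\theta^2/2)/(1-R\theta/3)$, Lieb's concavity yields subadditivity of the matrix cumulant generating function so one may pass from the individual terms to the sum, and optimizing the resulting trace bound over $\theta>0$ produces the displayed estimate.

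Second, I would pass from $\lambda_{\max}$ to the operator norm by applying the displayed bound simultaneously to $\{\X_i\}$ and to $\{-\X_i\}$: since $\|\sum_i\X_i\|=\max\{\lambda_{\max}(\sum_i\X_i),\,\lambda_{\max}(-\sum_i\X_i)\}$, a union bound controls $\|\sum_i\X_i\|$, and the resulting (at most constant) inflation of the dimensional prefactor is harmless downstream, where only the exponential rate matters.

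Third --- the only genuinely new step --- I would carry out the case split on the size of $t$ relative to $\sigma^2/R$ to turn the rational exponent into the advertised pair of clean exponents. If $t\le\sigma^2/R$, then $Rt/3\le\sigma^2/3$, so $\sigma^2+Rt/3\le 4\sigma^2/3$ and hence $\frac{-t^2/2}{\sigma^2+Rt/3}\le\frac{-3t^2}{8\sigma^2}$. If instead $t\ge\sigma^2/R$, then $\sigma^2\le Rt$, so $\sigma^2+Rt/3\le 4Rt/3$ and hence $\frac{-t^2/2}{\sigma^2+Rt/3}\le\frac{-3t}{8R}$. Substituting these bounds into the exponent gives exactly the two cases of \eqref{eq:bern}.

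I do not anticipate a genuine obstacle: the substance is Tropp's theorem, and the remaining work is the elementary bounding of $\sigma^2+Rt/3$ in each regime. The only points needing care are keeping the hypotheses in precisely the stated form ($E[\X_i]=\bm{0}$; the one-sided bound $\lambda_{\max}(\X_i)\le R$, which in this paper's applications is also a two-sided bound, so the $\pm$ trick of step two applies; and $\sigma^2$ defined as the norm of $\sum_i E[\X_i^2]$ rather than a sum of per-term variance proxies), and tracking the dimensional prefactor across the $\lambda_{\max}\leftrightarrow\|\cdot\|$ passage so that it remains equal to, or absorbable into, the stated $n$.
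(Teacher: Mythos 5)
Your proposal is correct and matches the paper's treatment: the paper does not prove this theorem but restates the simplified two-regime form from Tropp's survey, and your derivation (invoke the general bound $n\exp\bigl(\tfrac{-t^2/2}{\sigma^2+Rt/3}\bigr)$, then bound $\sigma^2+Rt/3$ by $\tfrac{4}{3}\sigma^2$ or $\tfrac{4}{3}Rt$ according to whether $t\le\sigma^2/R$ or $t\ge\sigma^2/R$) is exactly the standard argument behind that simplification. Your remark about the $\lambda_{\max}\leftrightarrow\|\cdot\|$ passage is well taken --- strictly the union bound over $\{\X_i\}$ and $\{-\X_i\}$ gives prefactor $2n$ rather than the stated $n$, a harmless sloppiness in the paper's restatement that you correctly identify and absorb.
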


Now we are ready to estimate $||\P_{\T^{\perp}} \R^{*}_j \Q_{j-1}||$ formally described in the following lemma.
\begin{lemma}  \label{golfing_size}
Consider a fixed matrix $\G \in \T$. Assume that $\displaystyle \max_{\betab}\, \langle \G\,,\v_{\betab} \rangle^{2}\le \frac{\nu}{n^2}c^{2}$
with $c$ satisfying $||\G||_{F}^{2}\le c^{2}$. 
Then, the following estimate holds for all $\displaystyle t\le  \frac{1}{\sqrt{r}}$ with $\displaystyle \kappa_j = \frac{m_j}{nr}$.
  \[
\textrm{Pr}(||\P_{\T^{\perp}}\R^{*}_{j} \G|| > t\,c) \le n \exp\left(-\frac{3t^2\kappa_jr}{8\nu}\right)
\]

\end{lemma}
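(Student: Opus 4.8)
The strategy is a standard matrix--Bernstein argument: realize $\P_{\T^{\perp}}\R^{*}_{j}\G$ as a sum of i.i.d.\ centered self-adjoint random matrices, bound their operator norm and the norm of the total variance, and invoke Bernstein's inequality. Since $\G\in\T$ and the adjoint of the restriction operator is $\R^{*}_{j}\X=\frac{L}{m_{j}}\sum_{\alphab\in\Omega_{j}}\langle\X\,,\v_{\alphab}\rangle\w_{\alphab}$, I would write
\[
\P_{\T^{\perp}}\R^{*}_{j}\G=\sum_{\alphab\in\Omega_{j}}\Z_{\alphab},\qquad
\Z_{\alphab}:=\frac{L}{m_{j}}\,\langle\G\,,\v_{\alphab}\rangle\,\P_{\T^{\perp}}\w_{\alphab}.
\]
The dual-basis expansion $\G=\sum_{\alphab\in\Us}\langle\G\,,\v_{\alphab}\rangle\w_{\alphab}$ (the inner products $\langle\G\,,\v_{\alphab}\rangle$ being the coefficients of $\G$ in the basis $\{\w_{\alphab}\}$) gives $E[\Z_{\alphab}]=\frac{1}{m_{j}}\P_{\T^{\perp}}\G=\bm{0}$; each $\Z_{\alphab}$ is self-adjoint because $\P_{\T^{\perp}}\w_{\alphab}=\P_{\U^{\perp}}\w_{\alphab}\P_{\U^{\perp}}$, and the $\Z_{\alphab}$ are i.i.d.\ since the sampling is with replacement.

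For the almost-sure bound, $\P_{\T^{\perp}}$ is a contraction and $\w_{\alphab}=\bm{f}_{\alphab}\bm{f}_{\alphab}^{T}$ is rank one with $\|\w_{\alphab}\|=2$, where $\bm{f}_{\alphab}\in\real^{n}$ has $+1$ in coordinate $\alpha_{1}$, $-1$ in coordinate $\alpha_{2}$, and zeros elsewhere; together with the joint-coherence hypothesis $\langle\G\,,\v_{\alphab}\rangle^{2}\le\frac{\nu}{n^{2}}c^{2}$ and $L\le\frac{n^{2}}{2}$ this yields $\|\Z_{\alphab}\|\le\frac{2L\sqrt{\nu}\,c}{n\,m_{j}}\le\frac{\sqrt{\nu}\,c}{r\kappa_{j}}=:R$. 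For the variance $\sigma^{2}=\bigl\|\sum_{\alphab\in\Omega_{j}}E[\Z_{\alphab}^{2}]\bigr\|$, the crucial identity is $(\P_{\T^{\perp}}\w_{\alphab})^{2}=\|\P_{\U^{\perp}}\bm{f}_{\alphab}\|_{2}^{2}\,\P_{\T^{\perp}}\w_{\alphab}\preceq 2\,\P_{\T^{\perp}}\w_{\alphab}$, with $\P_{\T^{\perp}}\w_{\alphab}\succeq\bm{0}$. Since $E[\Z_{\alphab}^{2}]=\frac{L}{m_{j}^{2}}\sum_{\alphab\in\Us}\langle\G\,,\v_{\alphab}\rangle^{2}(\P_{\T^{\perp}}\w_{\alphab})^{2}$, I would pull out $\max_{\alphab}\langle\G\,,\v_{\alphab}\rangle^{2}\le\frac{\nu}{n^{2}}c^{2}$ and use the closed form $\sum_{\alphab\in\Us}\w_{\alphab}=n\bm{J}$ (a one-line computation; $\bm{J}$ the centering matrix) together with $\|\P_{\U^{\perp}}\bm{J}\P_{\U^{\perp}}\|\le1$ to obtain $\bigl\|\sum_{\alphab\in\Us}\langle\G\,,\v_{\alphab}\rangle^{2}(\P_{\T^{\perp}}\w_{\alphab})^{2}\bigr\|\le\frac{2\nu c^{2}}{n}$, hence $\sigma^{2}\le\frac{2L\nu c^{2}}{n\,m_{j}}\le\frac{\nu c^{2}}{r\kappa_{j}}$.

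Applying Bernstein's inequality to $\sum_{\alphab\in\Omega_{j}}\Z_{\alphab}$ at deviation level $tc$ --- one checks that the sub-Gaussian branch is the relevant one throughout $t\le1/\sqrt{r}$, since then $tc\le\sigma^{2}/R$ --- gives
\[
\textrm{Pr}\bigl(\|\P_{\T^{\perp}}\R^{*}_{j}\G\|>tc\bigr)\;\le\;n\exp\!\left(-\frac{3(tc)^{2}}{8\sigma^{2}}\right)\;\le\;n\exp\!\left(-\frac{3t^{2}r\kappa_{j}}{8\nu}\right),
\]
which is the assertion. I expect the estimate of $\sigma^{2}$ to be the only real obstacle: the crude bound $(\P_{\T^{\perp}}\w_{\alphab})^{2}\preceq4I$ drops a factor $n$ --- it would leave $n$ in the exponent where the coherence $\nu$ belongs --- and would force far too many measurements in the golfing step, so one genuinely needs the rank-one structure of $\w_{\alphab}$ and the identity $\sum_{\alphab\in\Us}\w_{\alphab}=n\bm{J}$. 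The centering, the bound on $R$, and checking that $t\le1/\sqrt{r}$ falls in the first branch of Bernstein are all routine.
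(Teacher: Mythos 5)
Your proof is correct and takes essentially the same route as the paper's: the same decomposition of $\P_{\T^{\perp}}\R^{*}_{j}\G$ into i.i.d.\ terms $\frac{L}{m_j}\langle\G\,,\v_{\alphab}\rangle\P_{\T^{\perp}}\w_{\alphab}$, the same values $R=\frac{n\sqrt{\nu}}{m_j}c$ and $\sigma^{2}=\frac{n\nu}{m_j}c^{2}$, and the same appeal to the matrix Bernstein inequality; your variance estimate via $(\P_{\T^{\perp}}\w_{\alphab})^{2}=\|\P_{\U^{\perp}}\bm{f}_{\alphab}\|_{2}^{2}\,\P_{\T^{\perp}}\w_{\alphab}\preceq 2\,\P_{\T^{\perp}}\w_{\alphab}$ together with $\sum_{\alphab\in\Us}\w_{\alphab}=n\bm{J}$ is the same computation the paper packages as Lemma \ref{operator_norm_of_sum} plus the identity $\sum_{\alphab\in\Us}\w_{\alphab}^{2}=2n\I-2\bm{1}\bm{1}^{T}$ (they coincide because $\w_{\alphab}^{2}=2\w_{\alphab}$). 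The one place you are more cavalier than the paper is the branch check: with your $R$ one has $\sigma^{2}/R=\sqrt{\nu}\,c$, so $tc\le\sigma^{2}/R$ for all $t\le 1/\sqrt{r}$ only when $\nu\ge 1/r$, which is exactly why the paper splits into the cases $\nu\ge 1/r$ (where it inflates $R$ to $\frac{n\nu\sqrt{r}}{m_j}c$ so that $\sigma^{2}/R=c/\sqrt{r}$ exactly) and $\nu<1/r$; for any coherence parameter of practical interest this is harmless, but you should either add that case split or note the standing assumption $\nu\ge 1/r$.
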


\begin{proof}[Proof of Lemma \ref{golfing_size}]
We expand $\P_{\T^{\perp}}\R^{*}_{j} \G$
in the dual basis as 
$\displaystyle \P_{\T^{\perp}}\R^{*}_{j} \G = \sum_{\alphab\in\Omega_j} \frac{L}{m_j}  \langle \G\,,\v_{\alphab} \rangle \P_{\T^{\perp}} \w_{\alphab}$.
Let $\X_{\alphab}$ denote the summand.  The proof makes use of Bernstein inequality \eqref{eq:bern}
which mandates appropriate bound on $\|\X_{\alphab}\|$ and  $\|E[\X_{\alphab}^2]\|$. 
First consider the bound on the latter term.  Noting that 
$\displaystyle \X_{\alphab}^2 = \frac{L^2}{m_{j}^{2}}\langle \G\,,\v_{\alphab} \rangle ^{2} (\P_{\T^{\perp}} \w_{\alphab})^2$, 
we have the following bound for $E[\X_{\alphab}]^2$ using Lemma \ref{operator_norm_of_sum} and the fact 
that $\w_{\alphab}^{2}$ is positive semidefinite. 
\begin{align}
  \big|\big|E[\X_{\alphab}^2]\big|\big| &\le  \frac{n^2}{2m_j^{2}}\, \underset{\|\varphi\|_2=1}{\max}\sum_{\alphab\in \Us} 
	\langle \G\,,\v_{\alphab} \rangle ^{2} 
  \langle \varphi\,, \w_{\alphab}^{2}\varphi \rangle 
\le  \frac{n^2}{2m_{j}^2} \, \underset{\alphab\in \Us}{\max}\,\, \langle \v_{\alphab}\,,\G\rangle ^{2}\,
\underset{\|\varphi\|_2=1}{\max} \,  \langle \varphi\,, \left(\sum_{\alphab\in \Us}\w_{\alphab}^{2}\right)\varphi \rangle \label{eq:golfing_var_inequality}
\end{align}
Due to the special structure of $\w_{\alphab}$ in the EDG problem, it is straightforward to verify that
\[
\sum_{\alphab\in \Us} \w_{\alphab}^{2} = 2n\I -2\bm{1}\bm{1}^{T}
\]
It now follows that $\lambda_{\max}(\sum_{\alphab\in \Us} \w_{\alphab}^{2}) = 2n$
which implies that 
$$\big|\big|E[\X_{\alphab}^2]\big|\big| \le \frac{n^3}{m_j^2}\,\underset{\alphab\in \Us}{\max}\,\, \langle \v_{\alphab}\,,\G\rangle ^{2}
\le \frac{n^3}{m_{j}^2} \frac{\nu}{n^2}c^2  = \frac{n\nu}{m_{j}^{2}}c^2 $$
Next, consider a bound on $\|\X_{\alphab}\|$ which results
\begin{equation}
  \|\X_{\alphab}\| \le \frac{n^2}{2m_{j}} |\underset{\alphab\in \Us}{\max}\,\, \langle \v_{\alphab}\,,\G\rangle|\, \|\P_{\T^{\perp}}\w_{\alphab}\|
  \le \frac{n^2}{2m_{j}} \frac{\sqrt{\nu}}{n} \|\w_{\alphab}\|\, c
\end{equation}
We consider two cases. If $\displaystyle \nu\ge \frac{1}{r}$, $\displaystyle\|\X_{\alphab}\|  \le \frac{n\nu\sqrt{r}}{m_{j}}c=R_{1}$
and if $\displaystyle\nu < \frac{1}{r}$, $\displaystyle\|\X_{\alphab}\|  <\frac{n\sqrt{\nu}}{m_{j}}c=R_{2}$. 
Finally, we apply the Bernstein inequality with $R_1,R_2$ and
$\displaystyle \sigma^{2} = \frac{n\nu}{m_{j}}c^2$.
It can be easily verified that $\displaystyle\frac{\sigma^{2}}{R_1}= \frac{1}{\sqrt{r}}c$
and $\displaystyle\frac{\sigma^{2}}{R_2}< \frac{1}{\sqrt{r}}c$.
Therefore, for all $\displaystyle t \le \frac{1}{\sqrt{r}}c$, the following estimate holds. 
\begin{equation}
\textrm{Pr}(\|\P_{\T^{\perp}}\R^{*}_{j} \G\| \ge t) \le n \exp\left(-\frac{3t^2\kappa_{j}r}{8\nu c^2}\right)
\end{equation}
with $\displaystyle \kappa_{j}= \frac{m_{j}}{nr}$. 
This concludes the proof of Lemma \ref{golfing_size}.
\end{proof}

In what follows, it will be argued that $\M$ is a unique solution to \eqref{eq:m_completion} with ``very high'' probability. It is suffices to show all  $\X\in \Sp -\{\M\}$ are solutions with a very small probability by choosing $\Omega$ ``sufficiently large". 
For $\X \in \Sp$, write the deviation 
$\bDelta = \X-\M$. We define $\S_{1}=\big\{\X\in \Sp: ||\bDelta_{\T}||_{F}^{2} \ge \left(n\sqrt{8\frac{L}{m}} ||\bDelta_{\T^{\perp}}||_{F}\right)^{2} \big\}$ and 
$\S_2=\big\{\X\in \Sp:||\bDelta_{\T}||_{F}^{2} < \left(n\sqrt{8\frac{L}{m}}||\bDelta_{\T^{\perp}}||_{F}\right)^{2}\,\&\,\Po(\bDelta)=\bm{0} \}$ be two spaces
defined based on deterministic comparisons of $||\bDelta_{\T}||_{F}^{2}$ and $||\bDelta_{\T^{\perp}}||_{F}^{2}$. Assuming that $\Omega$ is sampled uniformly at random with replacement, we consider the following two cases.
\begin{enumerate}
\item Choose $|\Omega| = m$ ``sufficiently large" such that $\textrm{Pr}\, \left(\left\{\Omega\subset \Us\,\big|\,\, |\Omega| = m, \lambda_{\min}\left(\P_{\T}\,\F\,\P_{\T}\right)> 1/2\right\}\right)  \geq 1-p_1$ based on Lemma \ref{min_eig}. Using Theorem \ref{uniqueness_conditions}(a), all $\X\in\S_{1}$ are feasible solutions to \eqref{eq:m_completion} with probability $p_1$. 

\item Recall that using the golfing scheme $\textrm{Pr}\,\left(\left\{\Omega\subset \Us\, \big|\,\, |\Omega| = m, \Y \in \textrm{range}\,\,\Po^{*}\, \& \, ||\P_{\T}\Y-\textrm{Sgn}\,\M||_{F} \le \frac{1}{4L}\, \&\, ||\P_{\T^{\perp}}\Y||\le \frac{1}{2}\right\}\right) \geq1-\epsilon$ with $\displaystyle \epsilon=  \sum_{i=1}^{l} \left[p_{2}(i)+p_3(i)+p_4(i)\right]$ by choosing $|\Omega| = m$  ``sufficiently large". Using Theorem \ref{uniqueness_conditions}(b),  all $\X\in\S_{2}$ are not solutions to \eqref{eq:m_completion} with probability $\epsilon$.  
\end{enumerate}

Using the union bound, $\X\neq \M$ is a solution to the EDG nuclear minimization problem with probability 
which is at most $p= p_1+\sum_{i=1}^{l}[p_{2}(i)+p_3(i)+p_4(i)]$.
In what follows, the goal is to ensure that $p$ is ``small" with $m$ sufficiently large implying the uniqueness of $\M$  to \eqref{eq:m_completion} with very 
high probability. This is attained by a suitable choice of the parameters $m$, $l$ and $m_i$ which are detailed
below.

The first failure probability is the failure that condition in Theorem \ref{uniqueness_conditions}(a) does not hold 
and is given by
\[
p_{1} = n\exp\left(\frac{-\kappa}{8\nu}\right)
\]
In the proof of Lemma \ref{construction_of_y}, the failure probabilities $p_{2}(i)$, $p_{3}(i)$ and $p_{4}(i)$ are given by
\[
p_{2}(i) = \exp\left(\frac{-\kappa_i}{64\left(\nu+\frac{1}{nr}\right)}+\frac{1}{4}\right)
\quad; \quad p_{3}(i) = n \exp\left(\frac{-3\kappa_i}{128\nu}\right)  \quad ; \quad
p_{4}(i) =  n^{2} \exp\left(\frac{-3\kappa_i}{64\left(\nu+\frac{1}{nr}\right)}\right)  \quad \quad \forall i\in[1,l]
\]
To prove Theorem \ref{thm1}, it remains to specify $k_{i}$ and show that the total
probability of failure is very ``small". In precise terms, this means that the probability of failure 
is bounded above by $n^{-\beta}$ for some $\beta>1$.
$k_i$ is chosen in such a way that all the failure probabilities, $p_1,p_2(i),p_3(i)$ and $p_4(i)$, are
at most $\displaystyle \frac{1}{4l}n^{-\beta}$. With this, one appropriate choice for $k_i$ is
$k_i = 96\big(\nu+\frac{1}{nr}\big)\big(\beta\log(n)+\log(4l)\big)$. 
Using the union bound, it can be verified that the total failure is bounded above
by $n^{-\beta}$. The number of measurement $m=lnrk_i$ must be at least
\begin{equation} \label{eq:m_EDG_basis}
\log_{2}\left(4n\sqrt{8Lr}\right)nr
\left(96\big(\nu+\frac{1}{nr}\big)\big[\,\beta\log(n)+\log\left(4\log_{2}(4L\sqrt{r})\right)\big]\right)
\end{equation}
This finishes the proof of Theorem \ref{thm1} and concludes that the minimization
program in $\eqref{eq:m_completion}$ recovers the true inner product matrix with very high probability.

\subsection{Noisy EDG Completion}

In a practical settings, the available partial information is not exact but noisy. For simplicity, consider
an additive Gaussian noise $\Z$ with mean $\mu$ and variance $\sigma$. The modified nuclear norm minimization
for the EDG problem can now be written as
\begin{align}
&\underset{\X\in \Sp}{\textrm{minimize }} \quad \|\X\|_{*} \nonumber\\
&\textrm{subject to }\quad ||\Po(\X) -\Po(\M)||_{F}\le \delta  \label{eq:m_completion_noisy}
\end{align}
where $\delta$ characterizes the level of noise. 
Unlike the noisy matrix completion problem, the noise parameters for the EDG
problem can not be chosen arbitrarily. The reason is that the perturbed
distance matrix needs to be non-negative. In the context of numerically solving the noisy EDG problem,
details of how to set these parameters will be discussed
in the next section.
Under the assumption that $\mu$ and $\sigma$ are chosen
appropriately, we posit that the theoretical guarantees for the exact case extend 
to noisy EDG completion. Following the analysis in \cite{candes2010matrix} and using the dual
basis framework, it can be surmised that Theorem \ref{thm1} holds true with failure probability proportional to
noise level $\delta$. A sketch of such a theorem is stated below. 
\begin{theorem}
Let $\M\ \in \real^{n\times n}$ matrix of rank $r$ that obeys the coherence conditions
\eqref{eq:coherencew1}, \eqref{eq:coherencev1} and \eqref{eq:jointcoherence1}
with coherence $\nu$.
Assume $m$ measurements $\langle \M\,,\w_{\alphab}\rangle$, sampled uniformly at random with replacement
are corrupted with Gaussian noise of mean $\mu$, variance $\sigma$ and noise level $\delta$.  
For any $\beta>1$, if
  \[
  m\ge \log_{2}\left(4\sqrt{8Lr}n\right)nr\left(96[\nu+\frac{1}{nr}][\beta\log(n)+\log\left(4\log_{2}(4L\sqrt{r})\right)]\right)
 \]
then
\[
||\M-\bar{\M}||_{F} \le f\left(n,\frac{m}{n^2},\delta\right)
\]
where $\bar{\M}$ is a solution to \eqref{eq:m_completion_noisy} with probability at least $1-n^{-\beta}$.
\end{theorem}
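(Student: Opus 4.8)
The plan is to mirror the structure of the noisy matrix completion proof in \cite{candes2010matrix}, adapted to the dual basis formalism. The starting point is that $\bar{\M}$ is feasible for \eqref{eq:m_completion_noisy} while $\M$ is also feasible (since $\|\Po(\M)-\Po(\M)\|_F = 0 \le \delta$), so by optimality $\|\bar{\M}\|_* \le \|\M\|_*$. Writing $\bDelta = \bar{\M}-\M$, the triangle inequality on the constraint gives $\|\Po(\bDelta)\|_F \le 2\delta$. I would then split $\bDelta = \bDelta_{\T} + \bDelta_{\T^{\perp}}$ and run the same dichotomy as in Theorem \ref{uniqueness_conditions}, except now the identities become perturbed inequalities. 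In the ``large $\bDelta_{\T}$'' regime, the computation behind Theorem \ref{uniqueness_conditions}(a) shows $\|\Po(\bDelta)\|_F \gtrsim \sqrt{L/(mn)}\,\|\bDelta_{\T}\|_F - (L/m)\sqrt{2n}\,\|\bDelta_{\T^{\perp}}\|_F$ on the event $\lambda_{\min}(\P_{\T}\F\P_{\T}) > 1/2$; combined with $\|\Po(\bDelta)\|_F \le 2\delta$ this bounds $\|\bDelta_{\T}\|_F$, hence $\|\bDelta\|_F$, in terms of $\delta$ and the dimensional factors $n, m, L$. In the ``small $\bDelta_{\T}$'' regime one no longer has $\bDelta \in \ker\Po$, so the term $\langle \Y,\bDelta\rangle$ appearing in the proof of Theorem \ref{uniqueness_conditions}(b) is not zero but is bounded by $\|\Y\|_F\,\|\Po(\bDelta)\|_F \cdot (\text{scaling})$ since $\Y \in \operatorname{range}\Po^*$; tracking this extra term through the same chain of inequalities yields a lower bound $\|\bar{\M}\|_* - \|\M\|_* \ge c\|\bDelta_{\T^{\perp}}\|_F - C\delta(\text{stuff})$, which, married to $\|\bar{\M}\|_* \le \|\M\|_*$, controls $\|\bDelta_{\T^{\perp}}\|_F$ and then $\|\bDelta_{\T}\|_F$ via the regime hypothesis.

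The dual certificate $\Y$ from the golfing scheme \eqref{eq:golfing_scheme} is reused verbatim: it lives in $\operatorname{range}\Po^*$ and satisfies \eqref{eq:yconds} on an event of probability at least $1-n^{-\beta}$ under the stated bound on $m$. So the probabilistic content is inherited directly from Lemma \ref{min_eig} and Lemma \ref{construction_of_y} — no new concentration arguments are needed, only the deterministic bookkeeping described above on the same good event. The conclusion is then $\|\bDelta\|_F = \|\M - \bar{\M}\|_F \le f(n, m/n^2, \delta)$ for an explicit polynomial-type function $f$ that is linear in $\delta$ and degrades with the aspect ratio $n^2/m$ (concretely, the $\sqrt{L/(mn)}$ versus $(L/m)\sqrt{n}$ imbalance in part (a) forces a factor like $n^{3/2}\sqrt{L}/\sqrt{m}$, which is the dominant dimensional cost), and with $f \to 0$ as $\delta \to 0$, consistently recovering Theorem \ref{thm1} in the noiseless limit.

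The main obstacle I anticipate is not any single estimate but the careful propagation of the perturbation terms through both branches while keeping the constants honest. In particular, in the proof of part (b) the cancellation $\langle \Y,\bDelta\rangle = 0$ was essential; replacing it by $|\langle \Y,\bDelta\rangle| \le \|\Y\|\, \|\bDelta\|_*$ (or a Frobenius version through $\operatorname{range}\Po^*$) requires an a priori bound on $\|\Y\|$ and on $\|\bDelta_{\T^{\perp}}\|_*$ versus $\|\bDelta_{\T^{\perp}}\|_F$ — these are available but must be assembled, and it is here that the precise form of $f$ gets pinned down. A secondary subtlety, flagged in the theorem statement itself, is the admissibility constraint: the perturbed distances must remain nonnegative, so $\delta$, $\mu$, $\sigma$ cannot be taken arbitrarily; I would state the bound under the standing assumption that these are chosen so that $\Po(\M)+\Z$ corresponds to a genuine (noisy) squared-distance vector, and defer the explicit admissible range to the numerical section as the excerpt already indicates. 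Because the excerpt only asks for a sketch (``A sketch of such a theorem is stated below''), I would present the argument at the level of these structural steps, citing \cite{candes2010matrix} for the analogous noisy matrix completion analysis and emphasizing that the dual basis framework changes only the operator $\Po$, its adjoint, and the frame operator $\F$, all of which have already been analyzed in the exact case.
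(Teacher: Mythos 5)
Your outline matches the paper's declared strategy exactly, but you should know that the paper itself does \emph{not} prove this statement: it is presented only as ``a sketch of such a theorem,'' with $f$ left unspecified and ``a precise analysis\dots left for future work,'' the authors merely pointing to the noisy matrix completion analysis of Cand\`es--Plan and asserting that the dual basis framework should carry it over. So there is no paper proof to measure you against; your proposal is a more detailed elaboration of the same plan, and its overall architecture (feasibility of $\M$ gives $\|\bar{\M}\|_{*}\le\|\M\|_{*}$ and $\|\Po(\bDelta)\|_{F}\le 2\delta$; reuse the certificate and the frame-operator event; propagate the perturbation through both branches of Theorem \ref{uniqueness_conditions}) is the right one.

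Two technical points deserve sharper treatment before this could be called a proof. First, in the ``large $\bDelta_{\T}$'' branch the lower bound from Theorem \ref{uniqueness_conditions}(a) is $\|\Po\bDelta\|_{F}\ge \sqrt{L/(4mn)}\,\|\bDelta_{\T}\|_{F}-(L/m)\sqrt{2nm}\,\|\bDelta_{\T^{\perp}}\|_{F}$, which vanishes \emph{exactly} at the threshold $\|\bDelta_{\T}\|_{F}=n\sqrt{8L}\,\|\bDelta_{\T^{\perp}}\|_{F}$ defining the dichotomy; combined with $\|\Po\bDelta\|_{F}\le 2\delta$ this gives no control of $\|\bDelta_{\T}\|_{F}$ near the boundary. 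You need to introduce slack in the regime split (e.g.\ a factor $2$ in the threshold) or, as Cand\`es--Plan actually do, first bound $\|\bDelta_{\T^{\perp}}\|$ from the certificate argument and then feed that into the near-isometry to bound $\|\bDelta_{\T}\|$, rather than running two disjoint cases. Second, because $\Po$ here is \emph{not} self-adjoint and not a projection, the replacement for $\langle\Y,\bDelta\rangle=0$ is not $|\langle\Y,\bDelta\rangle|\le\|\Y\|_{F}\|\Po\bDelta\|_{F}$; you must write $\Y=\Po^{*}\Z$ and use $\langle\Y,\bDelta\rangle=\langle\Z,\Po\bDelta\rangle\le\|\Z\|_{F}\|\Po\bDelta\|_{F}$, which requires tracking a bound on $\|\Z\|_{F}$ through the golfing construction (where $\Z$ is assembled from the $\Q_{j-1}$, whose Frobenius norms are controlled by \eqref{eq:qi_norm}). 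You gesture at both issues but do not resolve them; since the paper leaves the entire argument and the form of $f$ open, resolving them is precisely the content that is missing from both your sketch and the paper's.
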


\noindent
The interpretation of the above theorem is that the EDG problem is stable to noise. 
Numerical experiments confirm this consideration. However, a precise analysis mandates
characterization of the level of noise and an exact specification of $f$ in the above theorem.
This is left for future work. 

\section{Numerical Algorithms}
\label{sec:Algorithms}
The theoretical analysis so far shows that the nuclear minimization approach yields
a unique solution for the EDG problem. 
In this section, we aim at
developing a practical algorithm to recover the inner product matrix
$\X$ given partial pairwise distance information supported on some
random set $\Omega$.  Since the available partial information 
might be noisy in applications, we also extend the algorithm to this case. 

\subsection{Exact partial information}

An algorithm similar to ours appears in \cite{lai2017solve}
where the authors design an algorithm employing the alternative directional minimization (ADM)
method to recover the Gram matrix. A crucial part of their algorithm uses the hard
thresholding operator which computes eigendecompositions at every iteration
and is computationally intensive. Comparatively, an advantage of our algorithm is that it does not
require an eigendecomposition making it fast and suitable for tests on large data.
Since the nuclear norm of a positive semidefinite matrix
equates to its trace, we consider the following minimization
problem.
\begin{equation*}
\underset{\bar{\X}\in \real^{n\times n}}{\textrm{min}} \quad\textrm{Trace }(\bar{\X})  \quad \textrm{ subject to } \quad \Po(\bar{\X})=
\Po(\M) \,\,,\,\,\bar{\X} \cdot \bm{1}=\bm{0} \,\,\textrm{and}\,\, \bar{\X}\succeq \bm{0}
\end{equation*}
Consider a matrix $\bm{C} \in \real^{n\times (n-1)}$ satisfying $\bm{C}^T\bm{C} = \I$ and $\bm{C}^{T}\bm{1}=\bm{0}$, we rewrite the above minimization problem as follows by a change of variable $\bar{\X} = \bm{C}\X\bm{C}^{T}$.
\begin{equation*}
\underset{\X\in \real^{(n-1)\times (n-1)}}{\textrm{min}} \quad\textrm{Trace }(\C\X\C^{T})  \quad \textrm{ subject to } \quad \Po(\C\X\C^T)=
\Po(\M) \,\,,\,\, \X \succeq \bm{0}
\end{equation*}
Note that the sum to one constraint drops out since $\C^{T}\bm{1}=0$
and $\C\X\C^T\succeq \bm{0}$ if $\X\succeq \bm{0}$. To enforce that $\X$ is positive semidefinite, 
let $\X =\bar{\bm{P}}\bar{\bm{P}}^{T}$ where $\bar{\bm{P}}\in \real^{(n-1)\times q}$ with $q$ unknown a priori.  
Since $\bar{\bm{P}}\bar{\bm{P}}^{T}$ has at most rank $q$, it entails a good estimate for
$q$ which ideally should be a reasonable estimate of the rank. 
Due to the trace regularization, our algorithm only needs a rough guess of $q$. 
The above minimization problem now reduces to
\begin{equation}\label{eq:trace_min_problem_orig}
\underset{\bar{\bm{P}}\in \real^{(n-1)\times q}}{\textrm{min}} \quad \textrm{Trace }(\C\bar{\bm{P}}\bar{\bm{P}}^{T}\C^T)  
\quad \textrm{ subject to } \quad \Po(\C\bar{\bm{P}}\bar{\bm{P}}^{T}\C^T)=\Po(\M) 
\end{equation}
With $\bm{P}= \C\bar{\bm{P}}$, consider the simplified minimization problem.
\begin{equation}\label{eq:trace_min_problem}
\underset{\bm{P}\in \real^{n\times q}}{\textrm{min}} \quad\textrm{Trace }(\bm{P}\bm{P}^{T})  \quad \textrm{ subject to } 
\quad \Po(\bm{P}\bm{P}^{T})=\Po(\M) 
\end{equation}
The above technique, the change of variables employing $\C$, has been previously used \cite{alfakih1999solving,
dokmanic2015euclidean}. In \cite{dokmanic2015euclidean}, the authors remark that the reparamterization 
leads to numerically stable interior point algorithms. Note that $\C$ is simply an orthonormal
basis for the space $\{\x\in \real^{n}~|~\bm{x}^{t}\bm{1}=\bm{0} \}$. For the EDG problem, the goal is to find the Gram matrix
$\bar{\X} = \C\X\C^T = \C\C^T\bm{P}\bm{P}^T\C\C^T$. $\C\C^T$ is simply the orthogonal
projection onto the aforementioned space given by $\C\C^{T} =  \I-\frac{1}{n}\bm{1}\bm{1}^{T}$.
Given $\bar{\X}$, classical MDS can then be used to find the point coordinates. Therefore,
our focus is on solving for $\bm{P}$ in \eqref{eq:trace_min_problem}. 
We employ the method of augmented Lagrangian to solve the minimization problem.
The constraint $\R_{\Omega}(\bm{P}\bm{P}^{T})=\R_{\Omega}(\M) $ 
can be written compactly using the linear operator $\A$ defined as 
$\A :\, \real^{n\times n}\longrightarrow \real^{|\Omega|}$
with $\A(\X) = \bm{f}\in \real^{{|\Omega|}\times 1}$, $\bm{f}_{i}=
\langle \X\,,\w_{\alphab_i} \rangle$ for $\alphab_i \in \Omega$.
For latter use, the adjoint of $\A$ , $\A^{*}$ , can be derived
as follows. For $\bm{y}\in \real^{|\Omega|\times 1} $, $\langle \A \X\,,\bm{y} \rangle = 
\sum_{i} \langle \X\,, \w_{\alphab_i} \rangle \bm{y}_{i}
=   \langle \X\,, \sum_{i} \bm{y}_{i} \w_{\alphab_i}\rangle$.
It follows that $\A^{*}\bm{y} =  \sum_{i} \bm{y}_{i} \w_{\alphab_i} $. Thus, we write 
\eqref{eq:trace_min_problem} as
\begin{equation} \label{eq:alg2_minimization}
\underset{\bm{P}}{\textrm{min}} \quad\textrm{Trace }(\bm{P}\bm{P}^{T}) \quad \textrm{ subject to } 
\quad \A(\bm{P}\bm{P}^{T}) =\bm{b}
\end{equation}
The augmented Lagrangian is given by
\begin{equation*}
\L(\bm{P} ; \Lam)= \textrm{Trace }(\bm{P}\bm{P}^{T}) + 
\frac{r}{2} ||\A(\bm{P}\bm{P}^{T})-\bm{b}+\Lam||_{2}^{2}
\end{equation*}
where $\Lam\in \real^{|\Omega|\times1}$ denotes the Lagrangian multiplier and 
$r$ is the penalty term.
The augmented Lagrangian step is simply
$ \bm{P}^{k} =  \textrm{argmin } \L(\bm{P};\Lam^{k-1})$
followed by updating of the multiplier $\Lam^{k}$.
To solve the first problem with respect to $\bm{P}$, the Barzilai-Borwein
steepest descent method \cite{barzilai1988two} is employed with objective function 
$\textrm{Trace }(\bm{P}\bm{P}^{T}) + \frac{r}{2} ||\A(\bm{P}\bm{P}^{T})-\bm{b}+\Lam^{k-1}||_{2}^{2}$
and gradient $2\bm{P}+2r\A^{*}\left(\A(\bm{P}\bm{P}^{T})-\bm{b}+\Lam^{k-1}\right)\bm{P}$.
The iterative scheme is summarized in Algorithm \ref{alg:trace_min_scheme}. 
\begin{algorithm}
\caption{Augmented Lagrangian based scheme to solve \eqref{eq:alg2_minimization}}
\label{alg:trace_min_scheme}
\begin{algorithmic}[1]
\State \textbf{Initialization:} Set $\Lam^{0}=\bm{0}$\,,
$q = 10$,\,\,\,$\Pb^{0} = \textrm{rand(n, q)}$,\,\,\,$E^{0}=E_{\textrm{Total}}^{0}=0$.
Set $\textrm{maxiterations} \,, \textrm{bbiterations} \,, r \,,\textrm{Tol}$
\For {k = 1:maxiterations}
\State Barzilai-Borwein(BB) descent for $\Pb^{k} =  \textrm{argmin } \L(\bm{P};\Lam^{k-1})$. 
\State $\Lam^{k} = \Lam^{k-1}+ \A(\bm{P}^{k}(\bm{P}^{k})^{T})-\bm{b}$
\State $E^{k}$ = $\frac{r}{2}||\A(\bm{P}^{k}(\bm{P}^{k})^{T})-\bm{b}||_{2}^{2}$
\State $E_{\textrm{Total}}^{k}= \textrm{Trace }(\bm{P}^{k}(\bm{P}^{k})^{T}) + 
\frac{r}{2} ||\A(\bm{P}^{k}(\bm{P}^{k})^{T})-\bm{b}+\Lam^{k}||_{2}^{2}$
\If {$\textrm{E}^{k}  < \textrm{Tol} \,\,\&\,\, E_{\textrm{Total}}^{k} < \textrm{Tol}  $}
\State \textbf{break}
\EndIf
\EndFor
\end{algorithmic}
\end{algorithm}

\subsection{Partial information with Gaussian noise}

Assume that the available partial information is noisy. Formally, $\Po(\X) = \Po(\M)+\Po(\Z)$
where $\Po(\Z)$ is an additive Gaussian noise. Proceeding analogously to the case of
exact partial information, the following minimization problem is obtained.
\begin{equation}\label{eq:trace_min_problem_noisy}
\underset{\bm{P}}{\textrm{min}} \quad \textrm{Trace }(\bm{P}\bm{P}^{T}) +
\frac{\lambda}{2}||\Po(\bm{P}\bm{P}^{T})-\Po(\M)||_{F}^{2}
\end{equation}
Using the operator $\A$ introduced earlier, \eqref{eq:trace_min_problem_noisy} can be rewritten as
\begin{equation} \label{eq:alg2_minimization_noisy}
\underset{\bm{P}}{\textrm{min}} \quad\textrm{Trace }(\bm{P}\bm{P}^{T}) + 
\frac{\lambda}{2}||\A(\bm{P}\bm{P}^{T}) - \bm{b}||_{F}^{2}
\end{equation}
where $\bm{b} = \A(\M)$. The augmented Lagrangian is given by
\begin{equation*}
\L(\bm{P}; \Lam)= \textrm{Trace }(\bm{P}\bm{P}^{T}) + 
\frac{\lambda}{2} ||\A(\bm{P}\bm{P}^{T})-\bm{b}||_{2}^{2} 
\end{equation*}
where $\Lam\in \real^{n\times1}$ denotes the Lagrangian multiplier and $r$ is a penalty term.
The augmented Lagrangian step is simply
$ \bm{P}^{k} =  \textrm{argmin } \L(\bm{P};\Lam^{k-1})$.
As before, $\bm{P}$ is computed using the Barzilai-Borwein
steepest descent method with objective function $\textrm{Trace }(\bm{P}\bm{P}^{T}) + 
\frac{\lambda}{2} ||\A(\bm{P}\bm{P}^{T})-\bm{b}||_{2}^{2} $ and gradient 
$2\bm{P}+2\lambda\A^{*}\left(\A(\bm{P}\bm{P}^{T})-\bm{b}\right)\bm{P}$.
The iterative scheme is summarized in Algorithm  \ref{alg:trace_min_scheme_noisy}.
\begin{algorithm}
\caption{Augmented Lagrangian based scheme to solve \eqref{eq:alg2_minimization_noisy}}
\label{alg:trace_min_scheme_noisy}
\begin{algorithmic}[1]
\State \textbf{Initialization:} Set $q = 10\,, \Pb^{0} = \textrm{rand(n,q)}\,,\,E_{\textrm{Total}}^{0}=0$. 
Set $\textrm{maxiterations} \,,\textrm{bbiterations}\,, r\,, 
\lambda\,,\textrm{Tol}$
\For {k = 1:maxiterations}
\State Barzilai-Borwein(BB) descent for $\bm{P}_{k}$. 
\vspace{0.2em}
\State $E_{\textrm{Total}}^{k}= \textrm{Trace }(\bm{P}^{k}(\bm{P}^{k})^{T})+
\frac{\lambda}{2} ||\A(\bm{P}^{k}(\bm{P}^{k})^{T})-\bm{b}||_{2}^{2} $
\vspace{0.2em}
\If {$E_{\textrm{Total}}^{k} < \textrm{Tol}  $}
\State \textbf{break}
\EndIf
\EndFor
\end{algorithmic}
\end{algorithm}

\section{Numerical Results}
\label{sec:Results}
In this section, we demonstrate the efficiency and accuracy of the proposed algorithms. 
All the numerical experiments are ran in MATLAB on a laptop
with an Intel Core I$7$ $2.7$ GHz processor and a $16$GB RAM. 

\subsection{Experiments on synthetic and real data}
We first test the Euclidean distance geometry problem on different three-dimensional objects.
These objects include a sphere, a cow, and a map of a subset of US cities. Given $n$ points from these objects, the full
$n\times n$ distance matrix $\bm{D}$ has $\displaystyle \frac{n(n-1)}{2}=L$ degrees of freedom. 
The objective is to recover the full point coordinates given $\gamma L$ entries of $\D$, $\gamma \in [0,1]$,  chosen uniformly at random. 
Algorithm \ref{alg:trace_min_scheme} and Algorithm \ref{alg:trace_min_scheme_noisy} output $\bm{P}$ from which
$\X = \bm{P}\bm{P}^{T}$ is constructed. Classical MDS is then used to find the global coordinates. In all of the numerical experiments,
a rank estimation $q=10$ is used. This choice shows that the algorithms recover the ground truth
despite a rough estimate of the true rank. The stopping criterion is a tolerance on the relative total energy defined as 
$(E_{\textrm{Total}}^{k}-E_{\textrm{Total}}^{k-1})/E_{\textrm{Total}}^{k}$. For algorithm \ref{alg:trace_min_scheme}, an additional stopping criterion is
a tolerance on $\displaystyle E^{k} = \frac{r}{2}||\A(\bar{\Pb}_{k})-\bm{b}||_{2}^{2}$.
In all of the numerical experiments, the tolerance is set to $10^{-5}$. The maximum iteration is set to $100$. 
Accuracy of the reconstruction is measured using the relative error of the inner product matrix
$\displaystyle \frac{||\X-\M||_{F}}{||\M||_{F}}$ where $\X$ is the numerically computed inner product matrix and
$\M$ is the ground truth. 

For different sampling rates, Figure \ref{fig:exact_figures_list} displays the reconstructed three-dimensional objects
assuming that exact partial information is provided. For all experiments, the penalty term $r$ is set to $1.0$.
Table \ref{tab:relative_error_exact} shows
the relative error of the inner product matrix for all the different cases. The algorithm 
provides very good reconstruction except the $1\%$ sphere. For this specific
case, the distance matrix for the sphere is comparatively small. This means that $1\%$ provides very little
information and more samples are needed for reasonable reconstruction. 

\begin{table}[h!]
\renewcommand{\arraystretch}{1.7}
\caption{Relative error of the inner product matrix for different three-dimensional objects
under different sampling rates. The partial information is exact. Results are averages over $50$ runs. }
\label{tab:relative_error_exact}
\centering
\begin{tabular}{l|llll}
\hline
  $ \gamma$             & $1\%$ & $2\%$ &  $3\%$ & $5\%$ \\ \hline
Sphere          & $3.27e-01$     & $2.10e-03$      & $7.98e-05$        &  $3.21e-05$       \\ 
Cow             & $1.53e-04$     & $7.08e-05$      & $5.03e-05$        &  $3.65e-05$      \\ 
US Cities       & $1.94e-04$     & $1.13e-04$      & $7.53e-05$        &  $5.82e-05$      \\
\hline
\end{tabular}
\end{table}

\begin{figure*}[t!]
\centering

    \subfloat[1\%]{\includegraphics[width=1.5in]{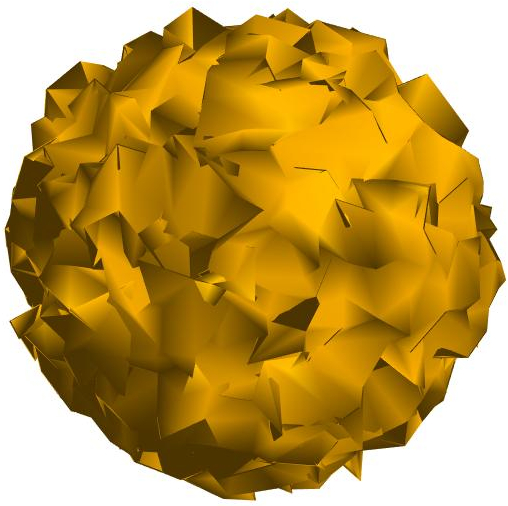}}
\hfil
   \subfloat[2\%]{ \includegraphics[width=1.5in]{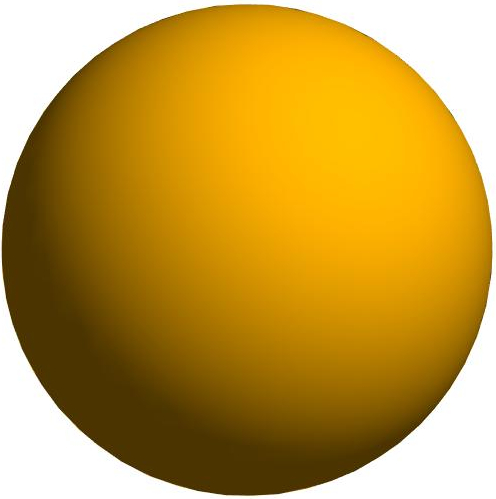}}
\hfil
   \subfloat[3\%]{ \includegraphics[width=1.5in]{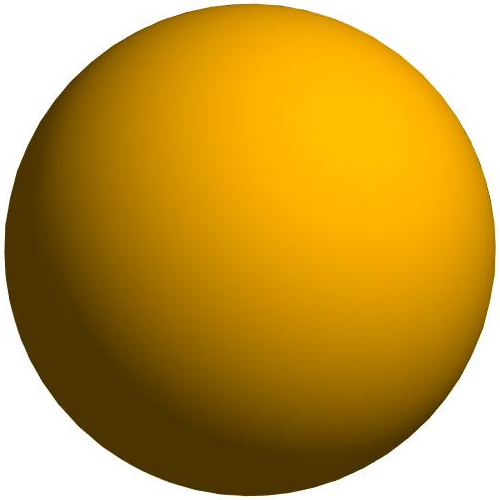}}
\hfil
      \subfloat[5\%]{  \includegraphics[width=1.5in]{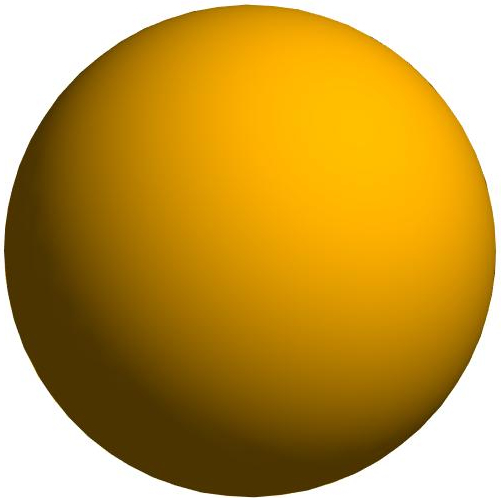}}
\hfil
     \subfloat[1\%]{ \includegraphics[width=1.5in]{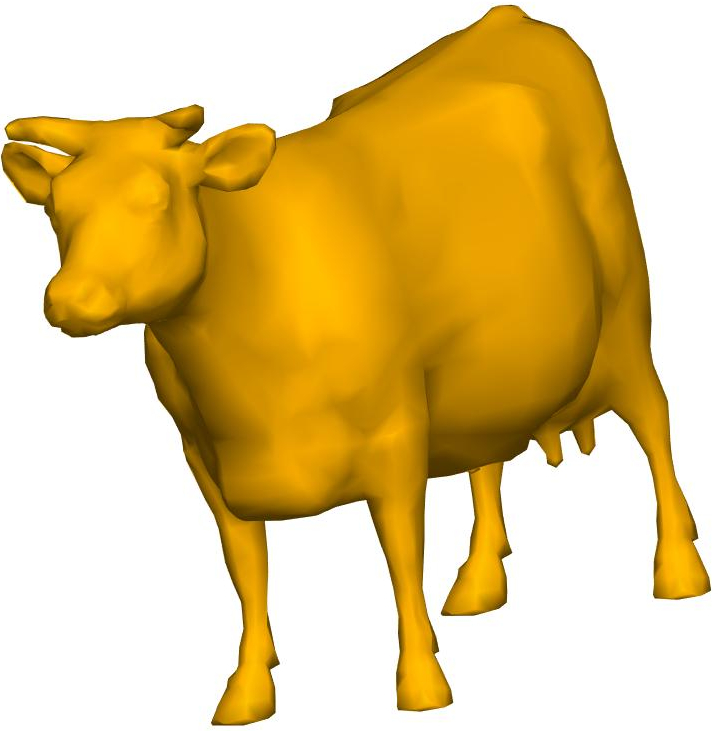}}
\hfil
\subfloat[2\%]{ \includegraphics[width=1.5in]{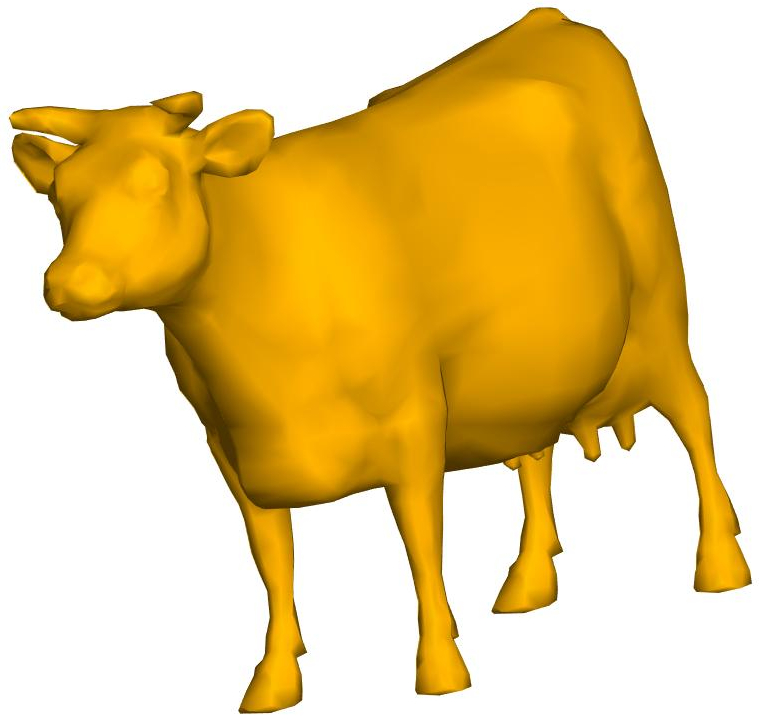}}
\hfil
     \subfloat[3\%]{ \includegraphics[width=1.5in]{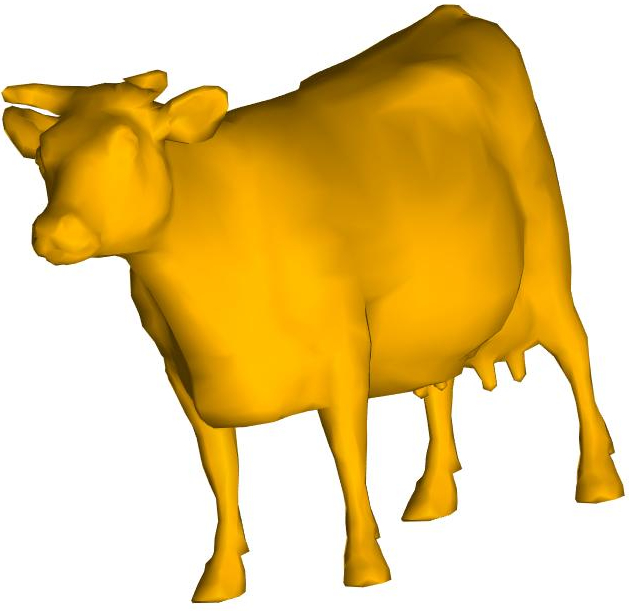}}
\hfil
      \subfloat[5\%]{ \includegraphics[width=1.5in]{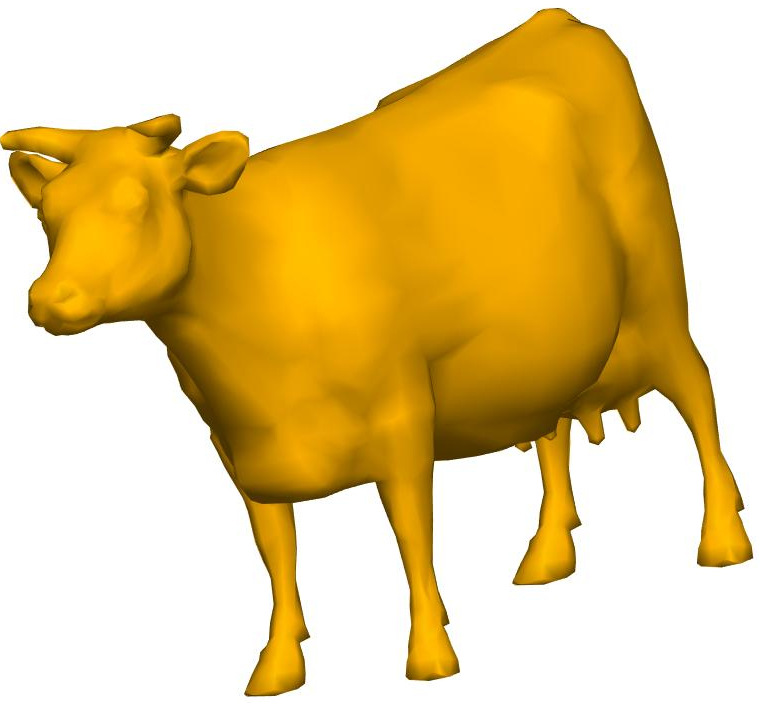}}
\hfil
     \subfloat[1\%]{ \includegraphics[width=1.5in]{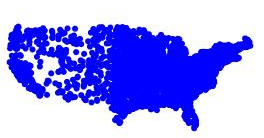}}
\hfil
	\subfloat[2\%]{\includegraphics[width=1.5in]{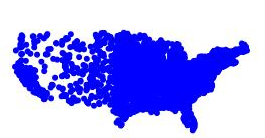}}
\hfil
     \subfloat[3\%]{ \includegraphics[width=1.5in]{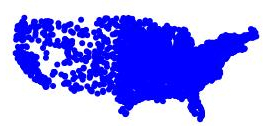}}
\hfil
	%		\hspace{0.5em}
	\subfloat[5\%]{ \includegraphics[width=1.55in]{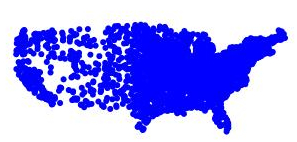}}
 \caption{Reconstruction results of different three-dimensional objects from exact partial information with sampling rates
$1\%$ (1st column), $2\%$ (2nd column), $3\%$ (3rd column) and $5\%$ (4th column), respectively. }
\label{fig:exact_figures_list}
\end{figure*}

For the case of partial information with the additive Gaussian noise $N(\mu,\sigma)$, the noisy distance matrix
can be written as $\bar{\D} = \D +  N(\mu,\sigma)$.
The standard deviation $\sigma$ is a critical parameter determining the extent to which
the underlying information is noisy. In the EDG problem, the perturbed distance matrix $\bar{\D}$
must have non-negative entries. Thus $\mu$ and $\sigma$ need to
be chosen carefully to satisfy this condition.  In our numerical experiments, $\sigma$ is set to be the minimum non-zero value
of the partial distance matrix and $\mu$ is $3\sigma$.  It can be easily verified that,
with very high probability, these choices ensure a non-negative noisy distance matrix.
The choice of $\sigma$ corresponds to the case where the error of the measurement is in the order of 
the minimum distance. While this choice might not reflect practical measurements,
the setting allows us to test the extent to which the algorithm handles a noisy data.
The parameter $\lambda$ is a penalty term which needs to be chosen carefully. 
It can be surmised that $\lambda$ needs to increase with increasing sampling rate. 
For our numerical experiments, a simple heuristic is to set $\lambda = 100\gamma$.
While a more sophisticated analysis might result an optimal choice of $\lambda$, the simple
choice is found to be sufficient and effective in the numerical tests. 
For different sampling rates, Figure \ref{fig:noisy_figures_list} displays the reconstructed three-dimensional objects
assuming that a noisy partial information is provided. 
Table \ref{tab:relative_error_noisy} shows
the relative error of the inner product matrix for all the different cases. 
The algorithm results good reconstruction except the $1\%$ sphere. As discussed
earlier, this specific case requires relatively more samples
for reasonable reconstruction.

\begin{table}[h!]
\renewcommand{\arraystretch}{1.7}
\caption{Relative error of the inner product matrix for different three-dimensional objects
under different sampling rates. The relative error is an average of $50$ runs and the partial information is noisy.}
\label{tab:relative_error_noisy}
\centering
\begin{tabular}{l|llll}
\hline
                & $1\%$ & $2\%$ &  $3\%$ & $5\%$ \\ \hline
Sphere          & $5.21e-01$     & $5.55e-02$     & $1.86e-02$        &  $8.80e-03$       \\ 
Cow             & $2.84e-02$     & $6.60e-03$     & $3.80e-03$        &  $2.00e-03$      \\ 
US Cities       & $2.80e-02$     & $9.00e-03$     & $5.50e-03$        & $3.10e-03$        \\
\hline
\end{tabular}

\end{table}

\begin{figure*}[!t]
\centering
   \subfloat[1\%]{\includegraphics[width=1.5in]{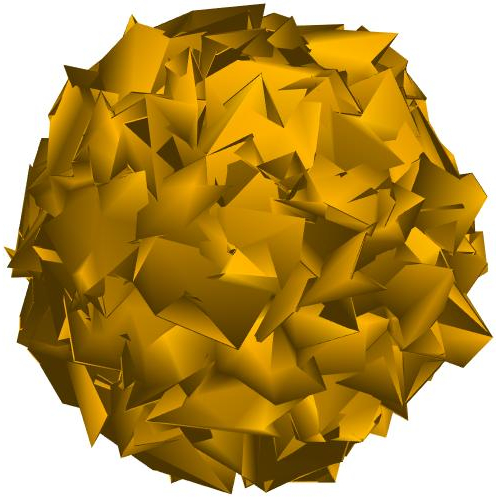}}
\hfil
   %\caption{2\%}
   \subfloat[2\%]{\includegraphics[width=1.5in]{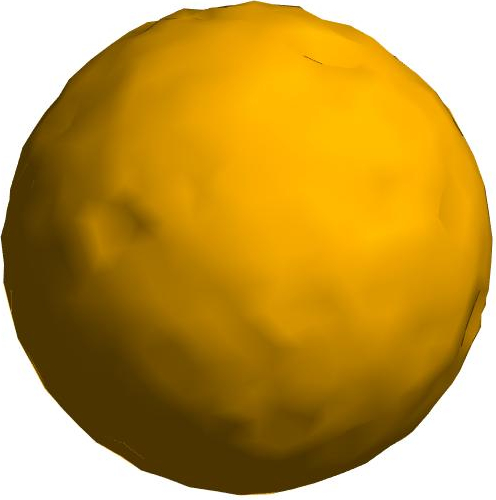}}
\hfil
      %\caption{3\%}
    \subfloat[3\%]{\includegraphics[width=1.5in]{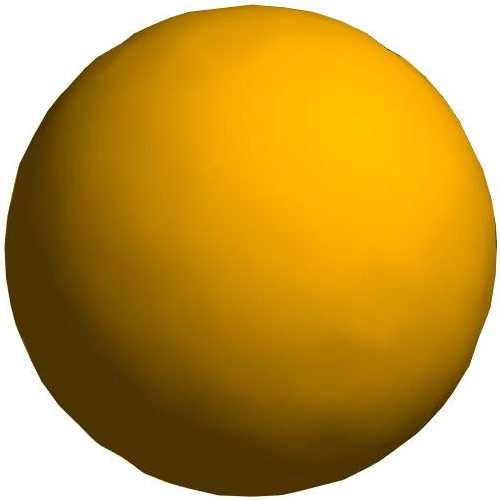}}
\hfil
      %\caption{5\%}
    \subfloat[5\%]{\includegraphics[width=1.5in]{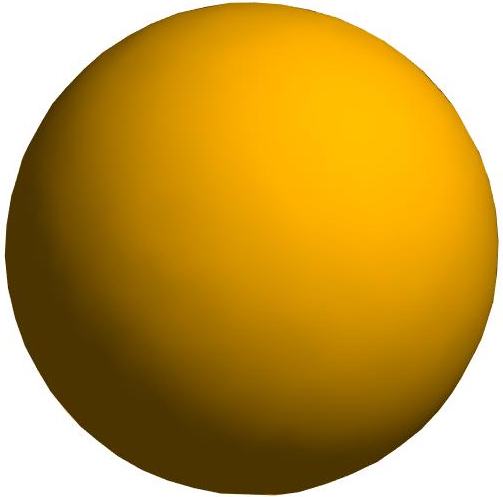}}
\hfil
%\vspace{2em}
    \subfloat[1\%]{\includegraphics[width=1.5in]{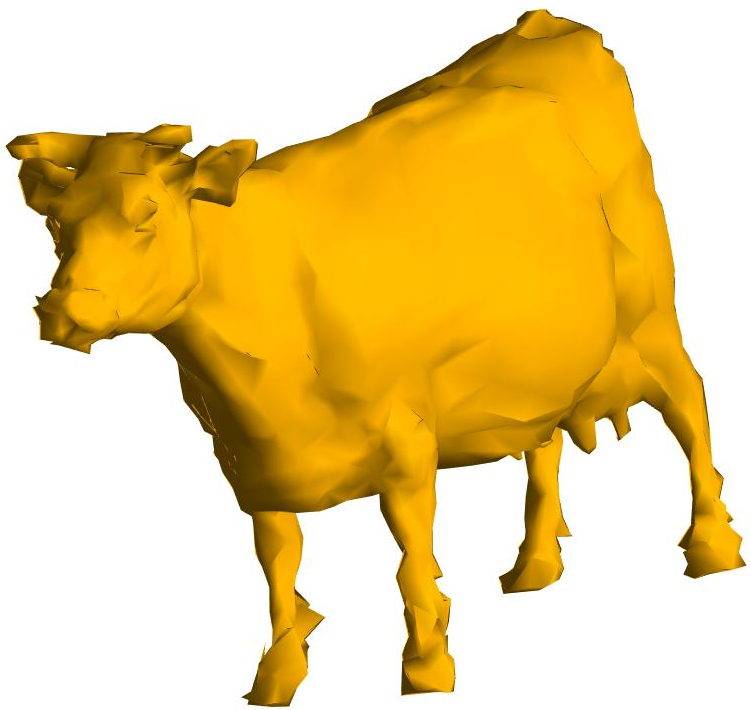}}
\hfil
    \subfloat[2\%]{\includegraphics[width=1.5in]{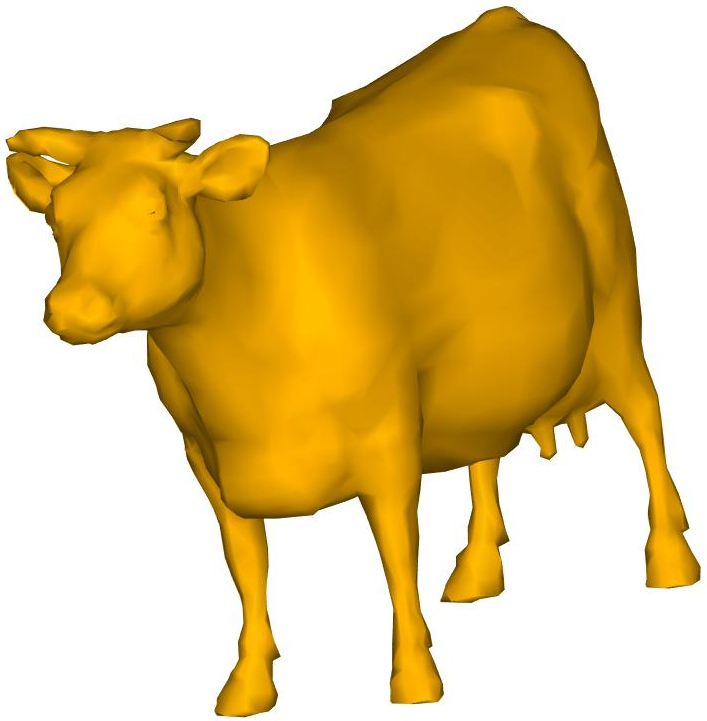}}
\hfil
    \subfloat[3\%]{\includegraphics[width=1.5in]{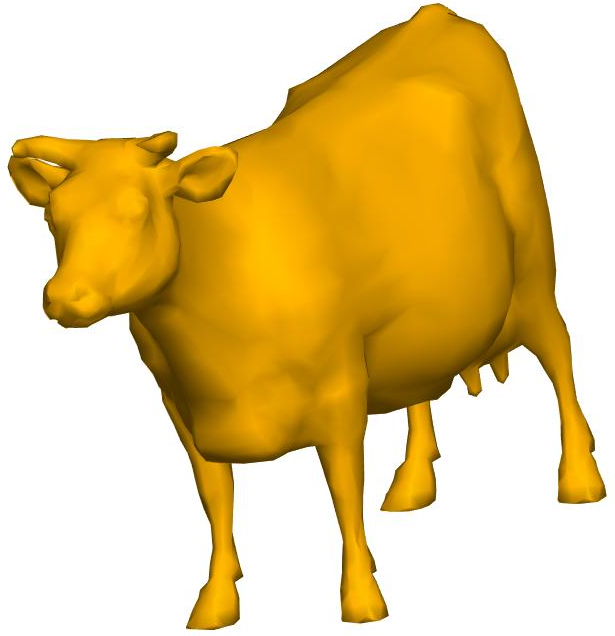}}
\hfil
    \subfloat[5\%]{\includegraphics[width=1.5in]{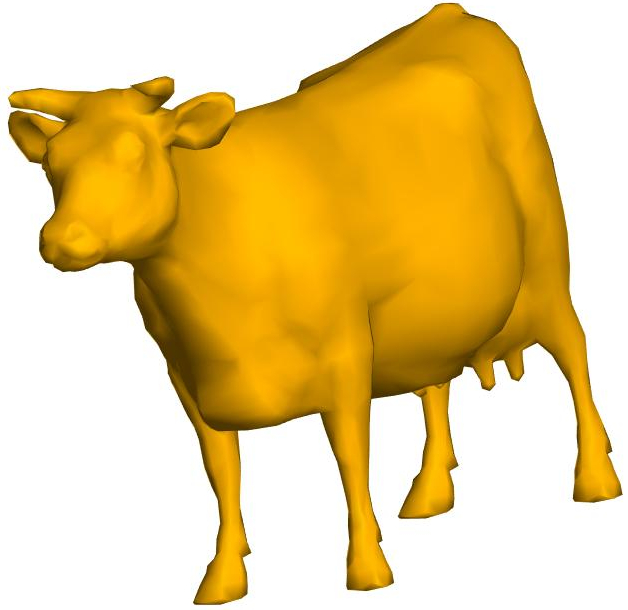}}
\hfil
%\vspace{3em}
    \subfloat[1\%]{\includegraphics[width=1.5in]{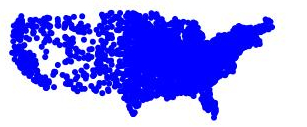}}
\hfil
%\hspace{0.5em}
    \subfloat[2\%]{\includegraphics[width=1.5in]{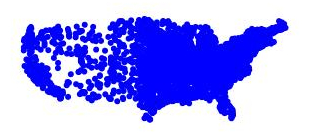}}
\hfil
%\hspace{0.5em}
    \subfloat[3\%]{\includegraphics[width=1.5in]{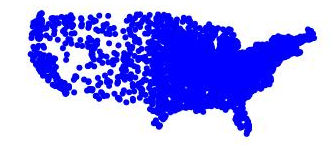}}
\hfil
%\hspace{0.5em}
    \subfloat[5\%]{\includegraphics[width=1.5in]{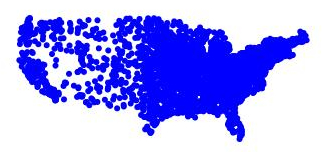}}
%\vspace{-1cm}
\caption{Reconstruction results of different three-dimensional objects from noisy partial information with different sampling rates
$1\%$ (1st column), $2\%$ (2nd column), $3\%$ (3rd column) and $5\%$ (4th column), respectively. }
\label{fig:noisy_figures_list}
\end{figure*}

We further apply the proposed algorithm to the molecular conformation problem~\cite{glunt1993molecular}. 
In this problem, the aim is to determine the three-dimensional
structure of a molecule given partial information on pairwise distances between the atoms.
An instance of this problem is determining the structure of proteins
using nuclear magnetic resonance (NMR) spectroscopy or X-ray diffraction experiments. The problem
is challenging since the partial distance matrix obtained from experiments is
sparse, non-uniform, noisy and prone to outliers~\cite{glunt1993molecular,leung2009sdp}.
We test our method on a simple version of the problem to illustrate that the algorithm can also work
on real data. Our numerical experiment considers two protein molecules identified
as 1AX8 and 1RGS obtained from the Protein Data Bank~\cite{berman2000protein}.
We use a pre-processed version of the data taken from
\cite{leung2009sdp}. Given the full distance matrix,
the partial data is a random uniform sample with $3\%$ sampling rate.
The setup of the numerical experiments is the same as before.  
Figure \ref{fig:protein_figures_list} displays the reconstructed three-dimensional structure of 1AX8
and 1RGS under exact partial information and noisy partial information. The results demonstrate that
the algorithm provides good reconstruction of the underlying three-dimensional structures.

\begin{figure}[t!]
\centering
\subfloat[1AX8]{ \includegraphics[width=1.5in]{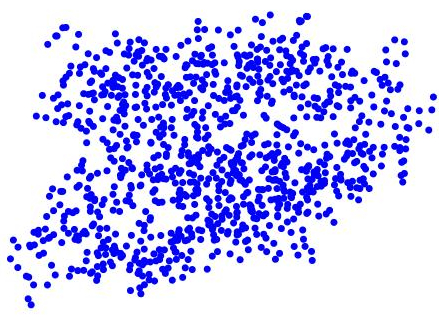}}
\hfil
  \subfloat[1AX8]{ \includegraphics[width=1.5in]{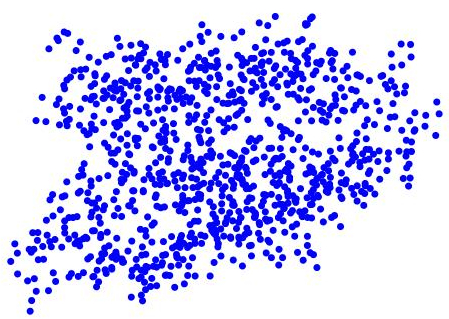}}
\hfil
     \subfloat[1RGS]{ \includegraphics[width=1.5in]{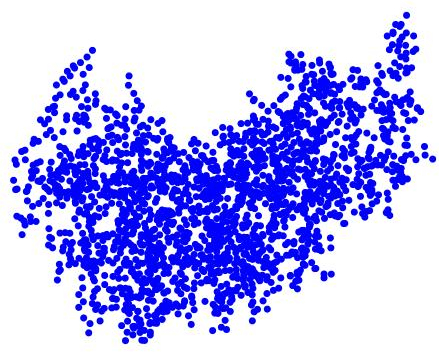}}
\hfil
     \subfloat[1RGS]{ \includegraphics[width=1.5in]{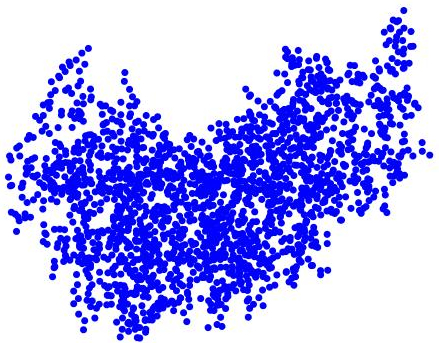}}
\caption{Reconstructed three-dimensional structure of proteins identified as 1AX8 and 1RGS in the Protein Data Bank.
Exact partial information ($3\%$ sample rate) for the first and third, noisy partial information for the second and fourth.}
\label{fig:protein_figures_list}
\end{figure}

\subsection{Computational comparisons of Algorithms}
To evaluate the efficiency of the proposed algorithms, 
numerical tests are carried out using Algorithm\ref{alg:trace_min_scheme} and Algorithm \ref{alg:trace_min_scheme_noisy}. 
For the test, the input is a random uniform sample of the exact/noisy distance matrix of one
of the three-dimensional objects discussed earlier(a sphere, a cow and a map of a subset of US cities). 
The sampling rate is set to $5\%$ and the algorithms are run until a stopping criterion discussed before
is met. In what follows, the reported results are averages of $50$ runs
and the number of iterations is the ceiling of the average number
of iterations. Tables \ref{tab:computational_time} summarizes the
results of the computational experiments for the three-dimensional
objects. It can be concluded that the algorithm is fast and converges in few iterations to
the desired solution.

We also conduct comparisons with the algorithm for the EDG problem proposed in \cite{lai2017solve}  
that also employs the augmented Lagrangian method. As remarked earlier, the main difference between
the two algorithms is on the way the positive semidefinite condition
is imposed on the inner product matrix $\X$. In \cite{lai2017solve}, the positive semidefinite condition
is imposed on $\X$ directly while algorithm \ref{alg:trace_min_scheme} uses the factorization $\bm{P}\bm{P}^{T}$ to enforce
this condition. To compare the two algorithms, we use 
the same input of data which is a random uniform sample of the distance matrix of one
of the three-dimensional objects. The sampling rate is set
to $5\%$. For both algorithms, the stopping criterion is the relative error of the total energy and the tolerance is
set $10^{-5}$. Table \ref{tab:computational_comparison} summarizes the
comparison of these two algorithms. The reported results are averages of $10$ runs.
We see that our algorithm converges to the desired solution faster and with 
significantly less number of iterations.
 
Finally, we consider the molecular conformation problem and compare our algorithm to the DISCO algorithm proposed in \cite{leung2009sdp}. The DISCO algorithm is an SDP based divide and conquer algorithm.
In \cite{leung2009sdp}, the authors demonstrate that
the algorithm is effective on sparse and highly noisy molecular conformation problems.
For the numerical tests, the input for both algorithms is a protein molecule. The sampling rate is set
to $5\%$ and it is assumed that the underlying partial information is exact. We downloaded
the DISCO code, a MATLAB mex code version $1.4$, from \url{http://www.math.nus.edu.sg/~mattohkc/disco.html} which provides
an input file and executables.  For our algorithm, the stopping criterion is maximum number of iterations set to $200$. We emphasize that the rank estimate using our method
for all experiments is $5$. This means our method has 5/3 times variables to the method used in DISCO which compute molecular coordinates directly (i.e. rank number is $3$). The DISCO algorithm has a radius parameter which implies that the input distance matrix consists pairwise distances less than or equal to the radius. For consistent comparison
with the EDG problem and our algorithm, the radius is set large. 
In lack of a source code for DISCO, under the above setups, we run the algorithm as it is. Table \ref{tab:computational_comparison_protein} summarizes the
comparison of these two algorithms. The reported results are averages of $20$ runs.
We see that our algorithm attains a relative error of the same order as DISCO but is  faster on all the tests. 
 Some caveats about the comparison are the assumption
on the radius and a very sparse partial exact information. In \cite{leung2009sdp}, the radius is set
to $\SI{6}{\angstrom}$ since NMR measurements have a limited range
of validity estimated to be $\SI{6}{\angstrom}$. With this choice, the problem departs from
the EDG problem since there is localization. For this localization problem with
a noisy input data and relatively sparse input ($20\%$ of distance within the radius),
we note that DISCO results in excellent reconstruction of the protein molecules. The above
comparison is meant to illustrate that, for a simplistic setup, our algorithm is very fast
and has the potential to handle tests on large protein molecules

\begin{table*}[h!]
\renewcommand{\arraystretch}{1.5}
\caption{Computational summary of Algorithm \ref{alg:trace_min_scheme} from $50$ runs. The data are the different three-dimensional
objects. The sampling rate is $5\%$.
}
\label{tab:computational_time}
\centering
\begin{tabular}{l|llll}
                       & $3$D Object & Number of Points & Computational Time(Sec) & Number of iterations \\\hline
\multirow{4}{*}{Exact} & Sphere     & $1002   $ & $3.27$       &  $9$       \\
                       & Cow        & $2601   $ & $68.08$      &  $26$       \\
                       & US Cities  & $2920   $ & $101.92$       &  $34$     \\ \hline
\multirow{4}{*}{Noisy} & Sphere     & $ 1002  $  &$10.56$       & $19$      \\
                       & Cow        & $ 2601  $ & $62.63$     & $17$      \\
                       & US Cities  & $ 2920  $  &$62.33$      & $18$      \\ \hline
\end{tabular}

\end{table*}

\begin{table*}[h!]
\renewcommand{\arraystretch}{1.5}
\caption{Comparison of the proposed Algorithm~\ref{alg:trace_min_scheme} and the algorithm in \cite{lai2017solve} with $10$ runs.
The data are the different three-dimensional objects. It is assumed that the partial information is exact. The sampling rate is $5\%$.}
\label{tab:computational_comparison}
\centering
\begin{tabular}{llllll}
  $3$D Object & Number of Points & \multicolumn{2}{c}{Computational Time(Sec)} & \multicolumn{2}{c}{Number of iterations} \\
              &                  & Alg.~\ref{alg:trace_min_scheme} & Alg.~\cite{lai2017solve}  &     Alg.~\ref{alg:trace_min_scheme} & Alg.\cite{lai2017solve}\\ \hline
 Sphere     & $1002$    & $3.06$    &$101.85$      &  $9$   & $204$        \\
 Cow        & $2601$    & $51.40$   &$1387.80$      &  $20$   & $257$       \\
 US Cities  & $2920$    & $72.51$  & $2417.30$     &  $23$   & $250$       \\ \hline

\end{tabular}
\end{table*}

\begin{table*}[h!]
\renewcommand{\arraystretch}{1.5}
\caption{Comparison of the proposed Algorithm~\ref{alg:trace_min_scheme} and the algorithm in 
\cite{leung2009sdp} with $20$ runs.
The data are the different protein molecules. It is assumed that the partial information is exact. The sampling rate is $5\%$.}
\label{tab:computational_comparison_protein}
\centering
\begin{tabular}{llllll}
  Protein Molecule & Number of Points & \multicolumn{2}{c}{Computational Time(Sec)} & 
\multicolumn{2}{c}{Relative error of the inner product matrix} \\
  &                  & Alg.~\ref{alg:trace_min_scheme} & Alg.~\cite{leung2009sdp}  &  Alg.~\ref{alg:trace_min_scheme} & Alg.~\cite{leung2009sdp}\\ \hline
 1PTQ       & $402$     & $3.38$    &$11.97$      &  $8.03e-07$   & $1.63e-07$        \\
 1AX8       & $1003$    & $9.44$    &$45.94$      &  $1.02e-08$   & $8.01e-07$       \\
 1RGS       & $2015$    & $52.62$   &$214.30$      &  $5.41e-09$   & $1.62e-07$       \\
 1KDH       & $2923$    & $104.60$   &$438.04$      &  $2.86e-09$   & $8.16e-09$       \\
 1BPM       & $3672$    & $223.14$   &$392.29$      &  $3.46e-09$   & $3.27e-08$       \\
\end{tabular}
\end{table*}

\subsection{Phase transition of Algorithm \ref{alg:trace_min_scheme}}
 
Last but not least, we numerically investigate the optimality of the proposed algorithm by plotting the phase transition. 
Given the number of points
and the underlying rank, the theory provides the sampling rate
which leads to successful recovery with very high probability (see
Theorem \ref{thm1}). To investigate the optimality of Algorithm \ref{alg:trace_min_scheme},
the following numerical experiment was carried out. Consider sampling
rates ranging from $1\%$ to $100\%$ and rank ranging from
$1$ to $40$. For each pair, Algorithm \ref{alg:trace_min_scheme} is run
$50$ times. Successful recovery refers to the case where the relative error of the inner product
matrix is within tolerance. As remarked earlier, the tolerance is set to $10^{-5}$.
Out of the $50$ runs, the number of times the algorithm succeeds
provides us with a probability of success. This procedure is repeated
for all combination of sampling rate and rank. Figure \ref{fig:borderline}
shows the optimality result of Algorithm \ref{alg:trace_min_scheme}. Namely, for a large portion in the sampling rate-rank domain, the proposed algorithm can provide successful reconstruction. 

\begin{figure}[h!]
\centering
\subfloat{\includegraphics[width=5in]{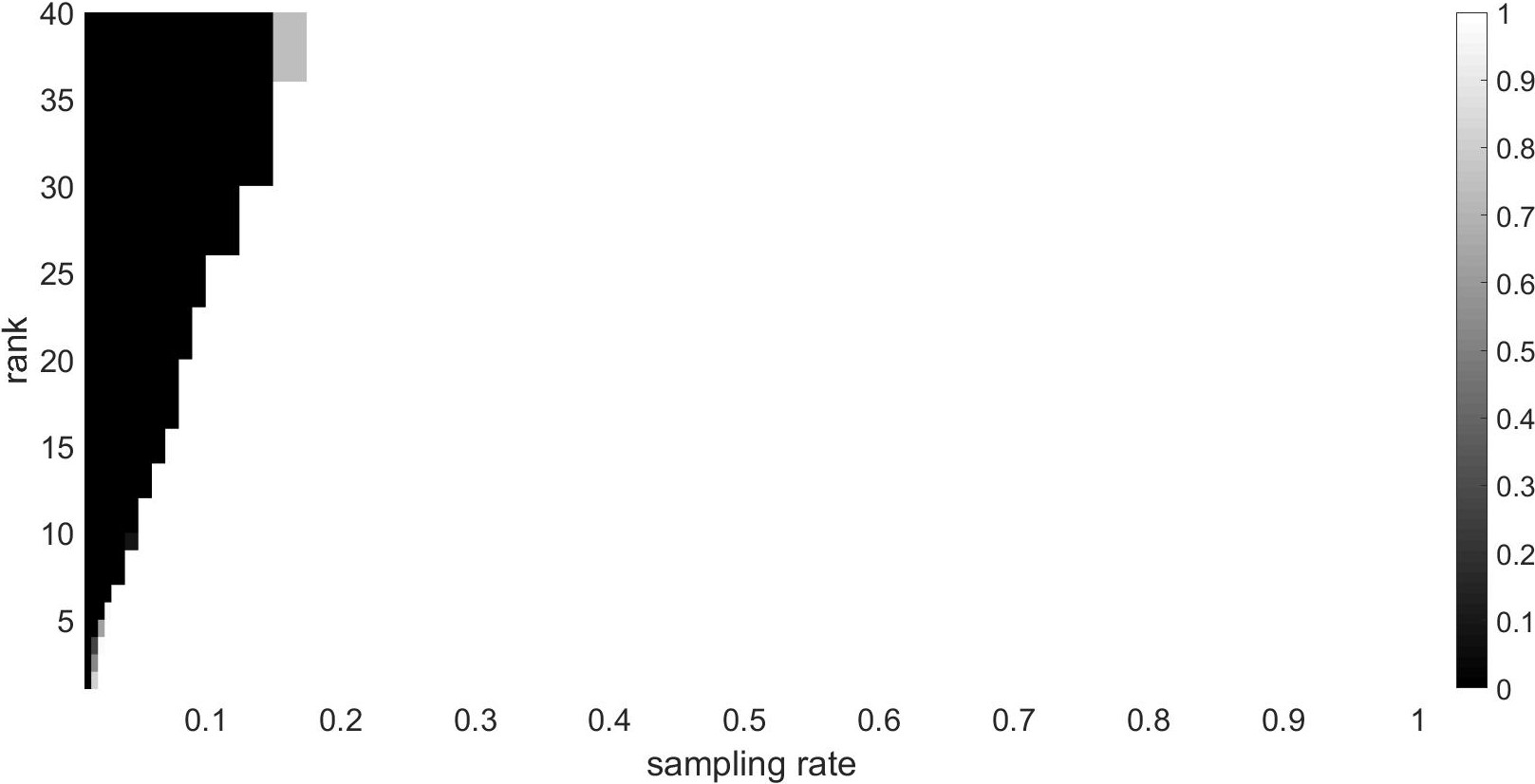}}
\caption{Success probability of Algorithm \ref{alg:trace_min_scheme} given sampling rate and
rank. For each sampling rate, Algorithm \ref{alg:trace_min_scheme} is run for ranks ranging
from $1$ to $40$. The average success probability, of $50$ runs, is shown above. White
indicates perfect recovery and black indicates failure of the algorithm. }
\label{fig:borderline}
\end{figure}

\section{Conclusion}

In this paper, we formulate the Euclidean distance geometry (EDG) problem 
as a low rank matrix recovery problem. Adopting the matrix completion
framework, our approach can be viewed as completing the Gram matrix with
respect to a suitable basis given few uniformly random distance samples. 
However, the existing analysis based on the restricted isometry property (RIP) does not hold for our problem. 
Alternatively, we conduct analysis by introducing the dual basis approach to
formulate the EDG problem.
Our main result shows that the underlying configuration of points can be recovered
with very high probability from $O(nr\nu\log^{2}(n))$ measurements if the underlying Gram matrix obeys the coherence condition with parameter $\nu$. 
Numerical algorithms are designed to solve the EDG problem
under two scenarios, exact and noisy partial information. 
Numerical experiments on various test data demonstrate that the algorithms are simple,
fast and accurate. The technique in this paper is not specifically limited to the EDG problem.
The extension of this result to the low rank recovery of a matrix given few measurements with respect to
any non-orthogonal basis is a work in preparation. 
%In our future work, we will extend our result and explore the low rank recovery
%of a matrix given few measurements with respect to any non-orthogonal basis. 

\section{Acknowledgment}
The authors would like to thank Dr.\,\,Jia Li for his discussions in the early stage of this project. 
Abiy Tasissa would like to thank Professor David Gross for correspondence over email
regarding the work in \cite{gross2011recovering}. Particularly, the proof of Lemma \ref{operator_norm_of_sum}
is a personal communication from Professor David Gross. The authors also would like to thanks Professor Peter Kramer and Professor Alex Gittens for their comments and suggestions.
The authors' gratitude is further extended to the anonymous reviewers for their valuable feedback which has improved
the manuscript.

\bibliographystyle{IEEEtran} 
\bibliography{IEEEabrv,matrix_completion}

%\clearpage
\appendix 
\section{Appendix A}
\begin{lemma} \label{signx_prop}
If $\X\in \S$, $\textrm{Sgn}\,\X \in \S$. If $\X \in \T$, $\textrm{Sgn}\,\X \in \T$
\end{lemma}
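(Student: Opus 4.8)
The plan is to reduce both assertions to the spectral decomposition of the symmetric matrix $\X$ together with the description of $\textrm{Sgn}$ as a spectral function. The first step is the elementary observation that if $\X=\X^{T}$ has eigendecomposition $\X=\sum_{i}\lambda_{i}q_{i}q_{i}^{T}$ with $\{q_{i}\}$ orthonormal, then its SVD is read off directly by writing $\lambda_{i}=\textrm{sign}(\lambda_{i})|\lambda_{i}|$: the singular values are $|\lambda_{i}|$, the right singular vectors are the $q_{i}$, and the left singular vectors on the nonzero part are $\textrm{sign}(\lambda_{i})q_{i}$. Hence
\begin{equation*}
\textrm{Sgn}\,\X=\sum_{i\,:\,\lambda_{i}\neq 0}\textrm{sign}(\lambda_{i})\,q_{i}q_{i}^{T},
\end{equation*}
which is again symmetric, satisfies $\ker\X\subseteq\ker(\textrm{Sgn}\,\X)$, and has the same column space as $\X$.

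For the first claim, symmetry of $\textrm{Sgn}\,\X$ is immediate from this formula. For the row-sum condition, $\X\cdot\bm{1}=\bm{0}$ says $\bm{1}$ lies in the eigenspace of $\X$ for the eigenvalue $0$, hence $\bm{1}$ is orthogonal to every $q_{i}$ with $\lambda_{i}\neq 0$; applying $\textrm{Sgn}\,\X$ to $\bm{1}$ gives $(\textrm{Sgn}\,\X)\bm{1}=\sum_{\lambda_{i}\neq 0}\textrm{sign}(\lambda_{i})q_{i}(q_{i}^{T}\bm{1})=\bm{0}$. Therefore $\textrm{Sgn}\,\X\in\S$.

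For the second claim I would use the characterization, implied by the projection identity $\P_{\T^{\perp}}\X=\P_{\U^{\perp}}\X\P_{\U^{\perp}}$, that a symmetric matrix lies in $\T$ exactly when $\P_{\U^{\perp}}\X\P_{\U^{\perp}}=\bm{0}$, equivalently $\X(\U^{\perp})\subseteq\U$. Since $\textrm{Sgn}\,\X$ is already known to be symmetric, it remains to show $\P_{\U^{\perp}}(\textrm{Sgn}\,\X)\P_{\U^{\perp}}=\bm{0}$. My route is to pass to the block decomposition $\real^{n}=\U\oplus\U^{\perp}$, in which $\X\in\T$ takes the form
\begin{equation*}
\X=\begin{pmatrix} A & B^{T}\\ B & \bm{0}\end{pmatrix},\qquad A=A^{T},
\end{equation*}
and to note that the subspace $\mathcal{V}=\U\oplus\textrm{range}(B)$ is $\X$-invariant with $\mathcal{V}^{\perp}\subseteq\ker\X$. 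Consequently $\X$, and therefore $\textrm{Sgn}\,\X$, is block-diagonal with respect to $\mathcal{V}\oplus\mathcal{V}^{\perp}$ and annihilates $\mathcal{V}^{\perp}$, so the whole question reduces to controlling the action of $\textrm{Sgn}(\X|_{\mathcal{V}})$ on $\U^{\perp}\cap\mathcal{V}=\textrm{range}(B)$.

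I expect this last point to be the main obstacle: unlike the corresponding statement for $\X$ itself, ``$\textrm{Sgn}(\X|_{\mathcal{V}})$ maps $\textrm{range}(B)$ into $\U$'' is not a purely algebraic identity and requires analyzing the spectral function on the invariant block directly. The situation is clean in the case that is actually invoked in the paper, namely $\X\succeq\bm{0}$ (in particular $\X=\M$): positive semidefiniteness of a block matrix with a vanishing diagonal block forces $B=\bm{0}$, so $\X=\U A\U^{T}$ with $A\succeq\bm{0}$, and then $\textrm{Sgn}\,\X=\U\,\textrm{Sgn}(A)\,\U^{T}$, which is manifestly of the form $\U Z^{T}+Z\U^{T}$ and hence lies in $\T$. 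My remaining task would be to push the argument through for general symmetric $\X\in\T$ by studying $\textrm{Sgn}(\X|_{\mathcal{V}})$ via functional calculus on $\mathcal{V}$.
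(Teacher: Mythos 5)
Your proof of the first claim is correct and coincides with the paper's: symmetry of $\textrm{Sgn}\,\X$ is immediate from its spectral form, and $(\textrm{Sgn}\,\X)\cdot\bm{1}=\bm{0}$ because $\bm{1}\in\ker\X$ is orthogonal to every eigenvector of $\X$ with nonzero eigenvalue.

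The obstacle you isolate in the second claim is real and cannot be overcome: the statement is false for a general symmetric $\X\in\T$, so the ``remaining task'' you set yourself has no solution. In your block decomposition with $\dim\U=1$, take
\[
\X=\begin{pmatrix} a & b\\ b & 0\end{pmatrix}\in\T,\qquad a\neq 0,\ b\neq 0.
\]
The two eigenvalues have product $-b^{2}<0$, so exactly one is negative, and a direct computation gives $\textrm{Sgn}\,\X=\left(2\X-a\I\right)/\sqrt{a^{2}+4b^{2}}$, whose lower-right block equals $-a/\sqrt{a^{2}+4b^{2}}\neq 0$; hence $\P_{\U^{\perp}}(\textrm{Sgn}\,\X)\P_{\U^{\perp}}\neq\bm{0}$ and $\textrm{Sgn}\,\X\notin\T$. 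The paper's own proof does not escape this: it writes $\textrm{Sgn}\,\X=\U(\textrm{Sgn}\,\Sigma)\U^{T}$ from the eigendecomposition of $\X$ and then cancels against $\P_{\U}=\U\U^{T}$ using $\U^{T}\U=\I$, thereby silently identifying the eigenbasis of $\X$ with the basis $\U$ of the column space of $\M$; that identification is legitimate only when $\textrm{range}(\X)\subseteq\U$, i.e. $\X=\P_{\U}\X\P_{\U}$, which is precisely your $B=\bm{0}$ case. Since the lemma is invoked in the paper only for $\X=\M$ and for $\textrm{sgn}\,\M$ inside the golfing scheme --- a positive semidefinite matrix whose column space is exactly $\U$ --- your completed special case ($B=\bm{0}$ forced by positive semidefiniteness, so $\textrm{Sgn}\,\X=\U\,\textrm{Sgn}(A)\,\U^{T}\in\T$) already covers every use made of the lemma. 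The correct repair is to restrict the second claim to $\X\in\T$ with $\textrm{range}(\X)\subseteq\U$ (or simply to $\M$ itself), not to attempt the general case.
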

\begin{proof}
Using the eigenvalue decomposition of $\X$, $\X = \U \Sigma \U^{T}$. $\textrm{Sgn}\,\X$ is simply
$\textrm{Sgn}\,\X = \U(\textrm{Sgn}\,\Sigma) \U^{T} = \U \D \U^{T}$
where $\D$ is the diagonal matrix resulting from applying the
sign function to $\Sigma$. To show that $\textrm{Sgn}\,\X \in \S$, we need to verify
$\textrm{Sgn}\,\X = (\textrm{Sgn}\,\X)^T$ and $\textrm{Sgn}\,\X\cdot \bm{1}=\bm{0}$.
Symmetry of $\textrm{Sgn}\,\X$ is apparent from its definition, $ (\textrm{Sgn}\,\X)^T = \U \D^T \U^T = \U \D \U^T = \textrm{Sgn}\,\X$.
To show that $\textrm{Sgn}\,\X \cdot \bm{1} =\bm{0} $, consider $\X \cdot \bm{1}=\bm{0}$ 
using the spectral decomposition of $\X$.
\[
\left(\sum_{i} \lambda_{i} \bm{u}_{i} \bm{u}_{i}^T \right) \cdot \bm{1} = \bm{0} \longrightarrow
\bm{u}_{j}^T \left(\sum_{i} \lambda_{i} \bm{u}_{i} \bm{u}_{i}^T \right)  \cdot \bm{1} = \bm{0} \longrightarrow
(\lambda_{j} \bm{u} _{j}^T) \cdot \bm{1} = \bm{0}
\]
The implication is that $\lambda_j=0$ or $\bm{u} _{j}^T \cdot \bm{1} = \bm{0}$.
With this, consider the spectral decomposition of the
symmetric matrix $\textrm{Sgn}\,\X$.
\[
\textrm{Sgn}\,\X = \sum_{j} \textrm{sgn}\,(\lambda_j)\, \bm{u}_{j} \bm{u}_{j}^T
\]
$\textrm{Sgn}\,\X \cdot \bm{1} =\bm{0} $ follows in the following way. If $\lambda_j =0$,
$\textrm{sgn}\,(\lambda_j)=0$. Otherwise, from above, $\bm{u} _{j}^T \cdot \bm{1} = \bm{0}$.
It can now be concluded that $\textrm{Sgn}\,\X \in \S$. Next, we show that for
$\X \in \T$, $\textrm{Sgn}\,\X \in \T$. 
Using the eigenvalue decomposition of $\textrm{Sgn } \X$,
consider $\P_{\T^{\perp}}\, \textrm{Sgn } \X$.
\begin{align*}
\P_{\T^{\perp}} \textrm{Sgn } \X & = \textrm{Sgn } \X - \P_{\T}\, \textrm{Sgn } \X\\
                                 & = \U\D\U^T - [ \P_{\U}\, \textrm{Sgn } \X +  \textrm{Sgn } \X \,\P_{\U}
                                    - \P_{\U} \, \textrm{Sgn } \X \,\P_{\U} ] \\
                                & = \U\D\U^T - [ \U\U^T\U\D\U^T + \U\D\U^T \U\U^T - \U\U^T \U\D\U^T \U\U^T] = \bm{0}
\end{align*}
where the last step simply follows from the fact that $\U^T \U = \I$. This confirms that
$\textrm{sgn } \X \in \T$ and concludes the proof. 
\end{proof}

In the next two lemmas, we state and prove some facts about $\H$ and $\H^{-1}$. 
We start with the latter first deriving an explicit form of $\H^{-1}$. This will be the focus
of Lemma \ref{H_inverse} to follow shortly. In the proof of Lemma \ref{H_inverse}, a certain form of the basis $\v_{\alphab}$ is used. The form is conjectured by inspection of the basis $\v_{\alphab}$ for the case when $n$ is small. 
We start by stating and proving this form.

\begin{lemma}
 Given an index $(i,j)$ with $1 \le  i< j\le n$, the matrix $\v_{i,j}$ has the following form. 
\[
\v_{i,j} = \wt_{i,j}+\p_{i,j}+\q_{i,j}
\]
The matrices $\wt_{ij}, \p_{i,j}$ and $\q_{ij}$ are respectively defined as follows. 
\[
\wt_{i,j} = \frac{n-1}{n^2} \e_{i,i}+\frac{n-1}{n^2}\e_{j,j}-\frac{(n-1)^2+1}{2n^2}\e_{i,j} -\frac{(n-1)^2+1}{2n^2}\e_{j,i}
\]
where $\e_{\alpha_1,\,\alpha_2}$
is a matrix of zeros except a $1$ at the location $(\alpha_1,\alpha_2)$. $\p_{ij}$ has the following form.
\[
\p_{i,j}= \sum_{t,t\neq i, t \neq j} \frac{2n-4}{4n^2}\e_{i,t} +  \sum_{s,s\neq i, s \neq j} \frac{2n-4}{4n^2} \e_{s,j}
\]
$\q_{ij}$ is defined as follows.
\[
\q_{i,j} = \sum_{(t,s),(t,s)\neq (i,j)}-\frac{1}{n^2} \e_{s,t} 
\]
where $(t,s)\neq (i,j)$ is defined as $\{t,s\}\cap \{i,j\}= \varnothing$.

\end{lemma}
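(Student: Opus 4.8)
The plan is to verify the stated identity directly from the characterization of the dual basis: $\v_{\alphab}$ is the unique element of $\S$ with $\langle \v_{\alphab},\w_{\betab}\rangle = \delta_{\alphab,\betab}$ for every $\betab\in\Us$. Uniqueness is immediate --- if $\v,\v'\in\S$ both satisfy these relations, then $\v-\v'\in\S$ is orthogonal to every $\w_{\betab}$ and hence to all of $\S = \textrm{span}\{\w_{\betab}\}$, which forces $\v=\v'$. So it suffices to show that the candidate matrix $\wt_{i,j}+\p_{i,j}+\q_{i,j}$ (i) belongs to $\S$, i.e. is symmetric with vanishing row sums, and (ii) has inner product $\delta_{(i,j),(k,l)}$ with every $\w_{k,l}$, $(k,l)\in\Us$. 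Step (i) identifies the candidate as a legitimate competitor for $\v_{i,j}$, and step (ii) then pins it down.

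For (ii), since $\w_{k,l}=\e_{k,k}+\e_{l,l}-\e_{k,l}-\e_{l,k}$ and the candidate is symmetric, $\langle \v_{i,j},\w_{k,l}\rangle = (\v_{i,j})_{k,k}+(\v_{i,j})_{l,l}-2(\v_{i,j})_{k,l}$, so the whole computation reduces to reading off the entries of the candidate in a few ``regions'': the $2\times 2$ block on $\{i,j\}$ (carried by $\wt_{i,j}$), the entries joining $\{i,j\}$ to its complement (carried by $\p_{i,j}$), and the entries strictly inside the complement (carried by $\q_{i,j}$). I would then split into cases according to how $\{k,l\}$ meets $\{i,j\}$: (a) $\{k,l\}=\{i,j\}$, which should produce the value $1$; (b) $|\{k,l\}\cap\{i,j\}|=1$; and (c) $\{k,l\}\cap\{i,j\}=\varnothing$, the last two producing $0$. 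In each case one substitutes the relevant entries, keeps track of how many indices in the sums defining $\p_{i,j}$ and $\q_{i,j}$ actually contribute (there are $n-2$ ``cross'' indices and on the order of $n^2$ ``interior'' indices), and checks that the totals collapse as claimed. The membership claim (i) is a similar but shorter bookkeeping exercise, collecting the contributions of $\wt$, $\p$, $\q$ to a fixed row.

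The substance of the proof is precisely this bookkeeping: the coefficients $\frac{n-1}{n^2}$, $\frac{(n-1)^2+1}{2n^2}$, $\frac{2n-4}{4n^2}$ and $\frac{1}{n^2}$ are calibrated so that, once multiplied by the number of contributing indices in each case, the sums telescope to $\delta_{(i,j),(k,l)}$, and I expect getting these cancellations to line up to be the only delicate point. I would also treat the small-$n$ boundary cases directly ($n=2$, where $\p_{i,j}$ and $\q_{i,j}$ are empty, and $n=3$, where the index sets degenerate), since several of the sums become vacuous there. No idea beyond the verification itself is needed --- consistent with the remark that the form was first guessed by inspection --- and once $\v_{i,j}$ is available in closed form, the explicit expression for $\H^{-1}$ in the following lemma follows by assembling these formulas.
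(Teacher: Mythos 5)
Your approach is essentially the paper's: the paper likewise proves the lemma by verifying that the proposed matrix is dual to the basis, computing $\langle \v_{i,j},\w_{k,l}\rangle$ case by case according to whether $\{k,l\}$ equals, meets, or is disjoint from $\{i,j\}$, and then invoking uniqueness of the dual basis. Your explicit preliminary check that the candidate lies in $\S$ is a sensible (and strictly speaking necessary) complement that the paper leaves implicit, and the only part your plan defers is the entry-counting arithmetic that the paper carries out in full.
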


\begin{proof}

To make the proposed form concrete, before proceeding with the proof, consider the following example.

\paragraph{Example $1$: The form of $\v_{1,2}$}
Consider the case where $n= 5$. Using the proposed form, the matrix $\v_{1,2}$ can be written as follows
\[
\v_{1,2} = \wt_{1,2}+ \p_{1,2} + \q_{1,2}
\]
where
\[
\wt_{1,2}= \frac{1}{25}
\begin{pmatrix}
4 & -\frac{17}{2} & 0 & 0 & 0\\
-\frac{17}{2} & 4 & 0 & 0 & 0\\
0 & 0 & 0 & 0 & 0\\
0 & 0 & 0 & 0 & 0\\
0 & 0 & 0 & 0 & 0\\
\end{pmatrix}
\]
\[
\p_{1,2} = \frac{1}{25}
\begin{pmatrix}
0 & 0 & 1.5 & 1.5 & 1.5 \\
0 & 0 & 1.5 & 1.5 & 1.5 \\
1.5 & 1.5 & 0 & 0 & 0\\
1.5 & 1.5 & 0 & 0 & 0\\
1.5 & 1.5 & 0 & 0 & 0\\
\end{pmatrix}
\]
and
\[
\q_{1,2} = \frac{1}{25}
\begin{pmatrix}
0 & 0 & 0 & 0 & 0\\
0 & 0 & 0 & 0 & 0\\
0 & 0 & -1 & -1 & -1\\
0 & 0 &-1 & -1 & -1\\
0 & 0 &-1 & -1 & -1\\
\end{pmatrix}
\] 
Therefore, $\v_{1,2}$ has the following explicit form
\[
\v_{1,2}= \frac{1}{25}
\begin{pmatrix}
4 & -\frac{17}{2} & 1.5 & 1.5 & 1.5\\
-\frac{17}{2} & 4 & 1.5 & 1.5 & 1.5\\
1.5 & 1.5 & -1 & -1 & -1\\
1.5 & 1.5 & -1 & -1 & -1\\
1.5 & 1.5 & -1 & -1 & -1\\
\end{pmatrix}
\]
As a first step simple check, consider $\langle \v_{1,2}\,,\w_{1,2}\rangle$ which results $\frac{1}{25}(8+17)=1$ as desired. \\*

The proof of Lemma $1$ relies on showing that $\v_{i,j}$ is dual to $\w_{i,j}$. Since the dual basis is unique, establishing duality
will conclude the proof. The result will be shown considering different cases.\\*

\noindent
Case $1$: $\langle \v_{i,j}\,,\w_{i,j}\rangle$ 
\begin{align*}
\langle \v_{i,j}\,,\w_{i,j}\rangle = \langle \wt_{i,j}+\p_{i,j}+\q_{i,j}\,,\w_{i,j}\rangle
&= \langle \wt_{i,j}\,,\w_{i,j}\rangle + \langle \p_{i,j}\,,\w_{i,j}\rangle + \langle \q_{i,j}\,,\w_{i,j}\rangle \\
& = \frac{2(n-1)}{n^2}+\frac{2(n-1)^2+2}{2n^2} + 0 + 0 = 1
\end{align*}
The second to last equality follows since $\langle \p_{i,j}\,,\w_{i,j}\rangle = 0$.
and $\langle \q_{i,j}\,,\w_{i,j}\rangle =0$.\\*

\noindent
Case $2$: $\langle \v_{i,j}\,,\w_{\alpha,\,\beta}\rangle$ for $(i,j)\neq (\alpha,\beta)$
\begin{align*}
\langle \v_{i,j}\,,\w_{\alpha,\,\beta}\rangle = \langle \wt_{i,j}+\p_{i,j}+\q_{i,j}\,,\w_{\alpha,\,\beta}\rangle
&= \langle \wt_{i,j}\,,\w_{\alpha,\,\beta}\rangle + \langle \p_{i,j}\,,\w_{\alpha,\,\beta}\rangle + \langle \q_{i,j}\,,\w_{\alpha,\,\beta}\rangle \\
& = 0 + 0 + 0 = 0
\end{align*}

\noindent
Case $3$: $\langle \v_{i,j}\,,\w_{\alpha,\,\beta}\rangle$ for $\{i,j\}\cap\{\alpha,\beta\}\neq \varnothing$

\vspace{0.5em}

A. $i =\alpha$ and $j \neq \beta$
\begin{align*}
\langle \v_{i,j}\,,\w_{\alpha,\,\beta}\rangle = \langle \wt_{i,j}+\p_{i,j}+\q_{i,j}\,,\w_{\alpha,\,\beta}\rangle
&= \langle \wt_{i,j}\,,\w_{\alpha,\,\beta}\rangle + \langle \p_{i,j}\,,\w_{\alpha,\,\beta}\rangle + \langle \q_{i,j}\,,\w_{\alpha,\,\beta}\rangle \\
& = \frac{n-1}{n^2}-\frac{2n-4}{2n^2}-\frac{1}{n^2} = 0 
\end{align*}

B. $i\neq \alpha$ and $j = \beta$
\begin{align*}
\langle \v_{i,j}\,,\w_{\alpha,\,\beta}\rangle = \langle \wt_{i,j}+\p_{i,j}+\q_{i,j}\,,\w_{\alpha,\,\beta}\rangle
&= \langle \wt_{i,j}\,,\w_{\alpha,\,\beta}\rangle + \langle \p_{i,j}\,,\w_{\alpha,\,\beta}\rangle + \langle \q_{i,j}\,,\w_{\alpha,\,\beta}\rangle \\
& = \frac{n-1}{n^2}-\frac{2n-4}{2n^2}-\frac{1}{n^2} = 0
\end{align*}

C. $i =\beta$ and $j \neq  \alpha$
\begin{align*}
\langle \v_{i,j}\,,\w_{\alpha,\,\beta}\rangle = \langle \wt_{i,j}+\p_{i,j}+\q_{i,j}\,,\w_{\alpha,\,\beta}\rangle
&= \langle \wt_{i,j}\,,\w_{\alpha,\,\beta}\rangle + \langle \p_{i,j}\,,\w_{\alpha,\,\beta}\rangle + \langle \q_{i,j}\,,\w_{\alpha,\,\beta}\rangle \\
& = \frac{n-1}{n^2}-\frac{2n-4}{2n^2}-\frac{1}{n^2} = 0
\end{align*}

D. $i \neq \beta $ and $j = \alpha$
\begin{align*}
\langle \v_{i,j}\,,\w_{\alpha,\,\beta}\rangle = \langle \wt_{i,j}+\p_{i,j}+\q_{i,j}\,,\w_{\alpha,\,\beta}\rangle
&= \langle \wt_{i,j}\,,\w_{\alpha,\,\beta}\rangle + \langle \p_{i,j}\,,\w_{\alpha,\,\beta}\rangle + \langle \q_{i,j}\,,\w_{\alpha,\,\beta}\rangle \\
& = \frac{n-1}{n^2}-\frac{2n-4}{2n^2}-\frac{1}{n^2} = 0
\end{align*}

With this, it can be concluded that the basis $\{\v_{i,j}\}$ is dual to $\{\w_{i,j}\}$ and the proposed form is established. 
\end{proof}

\noindent
The next Lemma uses the result of the above Lemma to derive an explicit form of the matrix $\H^{-1}$. 

\begin{lemma}\label{H_inverse}
The inverse of the matrix $\H$, $\H^{-1}$, has the following explicit form
\[
\H^{\alphab,\,\betab} = \langle \v_{\alphab}\,,\v_{\betab}\rangle=
\begin{cases}
\displaystyle\frac{(n-1)^{2}+1}{2n^2} & \textrm{if }\alpha_1 = \beta_1 \,\&\, \alpha_2 = \beta_2 \\
\displaystyle\frac{1}{n^2} & \textrm{if } \alpha_1 \neq \beta_1 \,\&\, \alpha_1 \neq \beta_2\,\&\,\alpha_2 \neq \beta_1 \,\&\, \alpha_2 \neq \beta_2\\
\displaystyle\frac{4-2n}{4n^2} & \textrm{otherwise}
\end{cases}
\]
\end{lemma}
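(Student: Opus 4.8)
The plan is to read off $\H^{-1}$ directly from the closed form of the dual basis just established. Recall from Section \ref{sec:Dualbasis} that $\V=\W\H^{-1}$ and $\H=\W^T\W$, so $\V^T\V=\H^{-1}\W^T\W\H^{-1}=\H^{-1}$; since the $\alphab$-th column of $\V$ is the vectorized $\v_{\alphab}$, comparing $(\alphab,\betab)$ entries gives
\[
\H^{\alphab,\betab}=\langle\v_{\alphab}\,,\v_{\betab}\rangle .
\]
It therefore suffices to substitute $\v_{\alphab}=\wt_{\alphab}+\p_{\alphab}+\q_{\alphab}$ (and likewise $\v_{\betab}$) into this Frobenius inner product and evaluate it in the three cases $|\{\alpha_1,\alpha_2\}\cap\{\beta_1,\beta_2\}|\in\{2,1,0\}$, which correspond exactly to the three branches of the claimed formula.

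First I would handle $\alphab=\betab$, where $\H^{\alphab,\alphab}=\|\v_{\alphab}\|_F^2$ is a finite sum of squares of the known entries of $\v_{\alphab}$: two entries equal to $\tfrac{n-1}{n^2}$, two equal to $-\tfrac{(n-1)^2+1}{2n^2}$, the $\p_{\alphab}$-entries (value $\tfrac{n-2}{2n^2}$), and the $(n-2)^2$ entries of $\q_{\alphab}$ (value $-\tfrac1{n^2}$); collecting these with the correct multiplicities and simplifying yields $\tfrac{(n-1)^2+1}{2n^2}$. For the two remaining cases I would first use $\v_{i,j}=\v_{j,i}$ as matrices (the construction is symmetric under $i\leftrightarrow j$) to put the common index, if any, in a convenient position, and then enumerate the matrix positions at which $\v_{\alphab}$ and $\v_{\betab}$ are simultaneously nonzero, organized by their incidence with $\{\alpha_1,\alpha_2\}\cup\{\beta_1,\beta_2\}$: the bulk of the sum comes from the overlap of the two $\q$-blocks (pairs avoiding all of $\{\alpha_1,\alpha_2,\beta_1,\beta_2\}$), plus finitely many boundary positions touching one or two of those indices. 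Summing the products over these positions collapses to $\tfrac1{n^2}$ in the disjoint case and to $\tfrac{4-2n}{4n^2}$ in the one-common-index case.

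It is worth recording an independent route that avoids the explicit dual basis altogether and cross-checks the answer. Writing $\w_{\alphab}=\bm{g}_{\alphab}\bm{g}_{\alphab}^T$ with $\bm{g}_{\alphab}\in\real^{n}$ equal to $+1$ at $\alpha_1$, $-1$ at $\alpha_2$, and $0$ elsewhere, one gets $\H_{\alphab,\betab}=(\bm{g}_{\alphab}^T\bm{g}_{\betab})^2\in\{4,1,0\}$, so $\H=4\I+\bm{A}$ where $\bm{A}$ is the adjacency matrix of the triangular (Johnson) graph on $\Us$. Since $\I,\bm{A},\bm{1}\bm{1}^T-\I-\bm{A}$ span the associated Bose--Mesner algebra, $\H^{-1}=a\I+b\bm{A}+c(\bm{1}\bm{1}^T-\I-\bm{A})$ for scalars $a,b,c$ determined by imposing $\H\H^{-1}=\I$ on the three common eigenspaces of $\H$, on which $\H$ has eigenvalues $2n$, $n$, and $2$; solving the resulting $3\times 3$ linear system reproduces exactly the three constants in the statement. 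In either route the only genuine obstacle is careful combinatorial bookkeeping — enumerating, with multiplicities, the simultaneously-nonzero positions in the one-common-index case, or computing the Johnson-graph eigenvalues — after which nothing but elementary algebra remains.
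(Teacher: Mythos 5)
Your main line of argument is exactly the paper's: establish $\H^{\alphab,\betab}=\langle\v_{\alphab},\v_{\betab}\rangle$ from $\V=\W\H^{-1}$ and $\H=\W^T\W$, substitute the decomposition $\v_{\alphab}=\wt_{\alphab}+\p_{\alphab}+\q_{\alphab}$ from the preceding lemma, and evaluate the inner product by enumerating supports in the three cases of index overlap; the entry values and multiplicities you cite (two entries $\tfrac{n-1}{n^2}$, two entries $-\tfrac{(n-1)^2+1}{2n^2}$, $4(n-2)$ entries of $\p$ at $\tfrac{n-2}{2n^2}$, $(n-2)^2$ entries of $\q$ at $-\tfrac1{n^2}$) match the paper's bookkeeping, so this route is correct and essentially identical to the published proof.

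Your second route, however, is genuinely different from anything in the paper and is worth noting. Writing $\w_{\alphab}=\bm{g}_{\alphab}\bm{g}_{\alphab}^T$ gives $\H_{\alphab,\betab}=(\bm{g}_{\alphab}^T\bm{g}_{\betab})^2\in\{4,1,0\}$, hence $\H=4\I+\bm{A}$ with $\bm{A}$ the adjacency matrix of the triangular graph on $\Us$, whose eigenvalues $2(n-2)$, $n-4$, $-2$ put the spectrum of $\H$ at $2n$, $n$, $2$; inverting within the span of $\I$, $\bm{A}$, $\bm{1}\bm{1}^T-\I-\bm{A}$ then determines the three constants from a $3\times3$ linear system. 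This avoids the explicit dual basis entirely (it could in fact be used to \emph{derive} the form of $\v_{\alphab}$ rather than verify it), replaces the delicate support-intersection counts of the paper's Cases 2 and 3 with standard strongly-regular-graph data, and immediately sharpens Lemma \ref{H_prop}: it shows $\lambda_{\min}(\H)=2$ and $\lambda_{\max}(\H^{-1})=\tfrac12$ exactly, where the paper only obtains $\lambda_{\min}(\H)\ge 1$ via Gerschgorin. The only thing the paper's longer computation buys is that it needs no facts about association schemes. Both routes are sound.
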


\begin{proof} We consider three different cases. 

\noindent
Case $1$: $\langle \v_{i,j}\,,\v_{i,j}\rangle$
\begin{align*}
\langle \v_{i,j}\,,\v_{i,j}\rangle & = \langle \wt_{i,j}+\p_{i,j} +\q_{i,j}\,, \wt_{i,j}+\p_{i,j} +\q_{i,j} \rangle \\
                                                & =  \langle \wt_{i,j}\,,\wt_{i,j}\rangle + \langle \p_{i,j}\,,\p_{i,j}\rangle + \langle \q_{i,j}\,,\q_{i,j}\rangle\\
                                                & = 2\frac{(n-1)^2}{n^4}+2\bigg[\frac{(n-1)^2+1}{2n^2} \bigg]^2 +\left(\frac{2n-4}{4n^2} \right)^24(n-2)+\frac{1}{n^4}(n-2)^{2} \\
                                                & = \frac{2n^2-4n+2}{n^4}+2\left( \frac{n^4-4n^3+8n^2-8n+4}{4n^4} \right)+\frac{16n^3-96n^2+192n-128}{16n^4}+\frac{n^2-4n+4}{n^4}\\
                                                & = \frac{1}{2}-\frac{1}{n}+\frac{1}{n^2}=  \frac{(n-1)^{2}+1}{2n^2}
\end{align*} 
The second equality uses the fact that $\langle \wt_{i,j}\,,\p_{i,j}\rangle=0$, $\langle \p_{i,j}\,,\q_{i,j}\rangle=0$ and $\wt_{i,j}\,,\q_{i,j}=0$. 
The third inequality simply uses the definition of $\wt_{i,j}, \p_{i,j}$ and $\q_{i,j}$.
The second to last result follows after some algebraic calculations. \\*

\noindent
Case $2$: $\langle \v_{i,j}\,,\v_{\alpha,\,\beta}\rangle$ for $(i,j)\neq (\alpha,\beta)$, $\{i,j\}\cap \{\alpha,\,\beta\} = \varnothing$
\begin{align*} 
\langle \v_{i,j}\,,\v_{\alpha,\,\beta}\rangle  =& \langle \wt_{i,j}+\p_{i,j}+\q_{i,j}\,,  \wt_{\alpha,\,\beta}+\p_{\alpha,\,\beta}+\q_{\alpha,\,\beta}\\
                                                                  = & \langle \wt_{i,j}\,, \wt_{\alpha,\,\beta}\rangle +  \langle \wt_{i,j}\,,\p_{\alpha,\,\beta}\rangle +  \langle \wt_{i,j}\,,\q_{\alpha,\,\beta}\rangle+\\
                                                                         &\langle \p_{i,j}\,,\wt_{\alpha,\,\beta}\rangle +    \langle \p_{i,j}\,,\p_{\alpha,\,\beta}\rangle +
                                                                            \langle \p_{i,j}\,,\q_{\alpha,\,\beta}\rangle +    \langle \q_{i,j}\,,\wt_{\alpha,\,\beta}\rangle +   \langle \q_{i,j}\,,\p_{\alpha,\,\beta}\rangle +
                                                                               \langle \q_{i,j}\,,\q_{\alpha,\,\beta}\rangle 
\end{align*}
Each term will be evaluated separately. \\

\noindent
$1$. $ \langle \wt_{i,j}\,, \wt_{\alpha,\,\beta}\rangle  = 0$ since for $(i,j)\neq (\alpha,\beta)$, $ \wt_{i,j}$ and $\wt_{\alpha,\,\beta}$ have disjoint 
supports.  \\*

\noindent
$2$. $\langle \wt_{i,j}\,,\p_{\alpha,\,\beta}\rangle = 0$ since for $(i,j)\neq (\alpha,\beta)$, $ \wt_{i,j}$ and $\p_{\alpha,\,\beta}$ have disjoint 
supports.  \\*

\noindent
$3$. Consider $\langle \wt_{i,j}\,,\q_{\alpha,\,\beta}\rangle$. 
\begin{align*}
\langle \wt_{i,j}\,,\q_{\alpha,\,\beta}\rangle &= [\wt_{i,j}]_{i,i}[\q_{\alpha,\,\beta}]_{i,i}+[\wt_{i,j}]_{j,j}[\q_{\alpha,\,\beta}]_{j,j}+[\wt_{i,j}]_{i,j}[\q_{\alpha,\,\beta}]_{i,j}+
[\wt_{i,j}]_{j,i}[\q_{\alpha,\,\beta}]_{j,i} \\
&= -2\frac{(n-1)}{n^4}+\frac{(n-1)^2+1}{n^4}
\end{align*}

\noindent
$4$. $\langle \p_{i,j}\,,\wt_{\alpha,\,\beta}\rangle = 0$ follows by a similar argument as $2$. \\*

\noindent
$5$. Consider $\langle \p_{i,j}\,,\p_{\alpha,\,\beta}\rangle $. 
\begin{align*}
\langle \p_{i,j}\,,\p_{\alpha,\,\beta}\rangle = & \sum_{s,s\neq i, s \neq j } [\p_{i,j}]_{i,s}  [\p_{\alpha,\,\beta}]_{i,s} +  \sum_{t,t\neq i, t \neq j } [\p_{i,j}]_{j,t}  [\p_{\alpha,\,\beta}]_{j,t} +
\sum_{s,s\neq i, s \neq j } [\p_{i,j}]_{s,i}  [\p_{\alpha,\,\beta}]_{s,i} +  \sum_{t,t\neq i, t \neq j } [\p_{i,j}]_{t,j}  [\p_{\alpha,\,\beta}]_{t,j} \\
 = &   [\p_{i,j}]_{i,\,\alpha}[\p_{\alpha,\,\beta}]_{i,\,\alpha} + [\p_{i,j}]_{i,\,\beta}[\p_{\alpha,\,\beta}]_{i,\,\beta} +
 [\p_{i,j}]_{j,\,\alpha}  [\p_{\alpha,\,\beta}]_{j,\,\alpha}+[\p_{i,j}]_{j,\,\beta}  [\p_{\alpha,\,\beta}]_{j,\,\beta}+\\
   &[\p_{i,j}]_{\alpha,\,i}  [\p_{\alpha,\,\beta}]_{\alpha,\,i} +  [\p_{i,j}]_{\beta,\,i}  [\p_{\alpha,\,\beta}]_{\beta,\,i}+
    [\p_{i,j}]_{\alpha,\,j}  [\p_{\alpha,\,\beta}]_{\alpha,\,j} +  [\p_{i,j}]_{\beta,\,j}  [\p_{\alpha,\,\beta}]_{\beta,\,j}\\
    = & 8\left(\frac{2n-4}{4n^2}\right)^{2} = \frac{2(n-2)^2}{n^4}
\end{align*}

\noindent
$6$. Consider $\langle \p_{i,j}\,,\q_{\alpha,\,\beta}\rangle$. 
\begin{align*}
\langle \p_{i,j}\,,\q_{\alpha,\,\beta}\rangle &= \sum_{t,t \neq i,t\neq j} [\p_{i,j}]_{i,t} [\q_{\alpha,\,\beta}]_{i,t} + \sum_{s,s\neq i,s\neq j} [\p_{i,j}]_{j,s} [\q_{\alpha,\,\beta}]_{j,s}+
\sum_{t,t \neq i,t\neq j} [\p_{i,j}]_{t,i} [\q_{\alpha,\,\beta}]_{t,i} + \sum_{s,s\neq i,s\neq j} [\p_{i,j}]_{s,j} [\q_{\alpha,\,\beta}]_{s,j}\\
& = (4n-16)\frac{2n-4}{4n^2}\left(-\frac{1}{n^2}\right) = -\frac{2(n-4)(n-2)}{n^4}
\end{align*}
The first equality follows since it suffices to consider $\langle \p_{i,j}\,,\q_{\alpha,\,\beta}\rangle$ on the support of $\p_{i,j}$.
The second and last equality result from the following analysis. 
Restricted to the support of $\p_{i,j}$, the matrix $\q_{\alpha,\,\beta}$ is non-zero except at the entries
$[\q_{\alpha,\,\beta}]_{i,\,\alpha}, [\q_{\alpha,\,\beta}]_{i,\,\beta}, [\q_{\alpha,\,\beta}]_{\alpha,\,i}$, $[\q_{\alpha,\,\beta}]_{\beta,\,i},
[\q_{\alpha,\,\beta}]_{j,\,\alpha}, [\q_{\alpha,\,\beta}]_{j,\,\beta}, [\q_{\alpha,\,\beta}]_{\alpha,j}$ and $[\q_{\alpha,\,\beta}]_{\beta,j}$. 
Using this and the fact that $\p_{i,j}$ has $4(n-2)$ entries, $|\textrm{supp}(\p_{i,j})\cap \textrm{supp}(\q_{\alpha,\,\beta})| = 4(n-2)-8 = 4n-16$.
With this, the final form above follows. \\*

\noindent
$7$. $\displaystyle  \langle \q_{i,j}\,,\wt_{\alpha,\,\beta}\rangle =  -2\frac{(n-1)}{n^4}+\frac{(n-1)^2+1}{n^4}$ follows by a similar argument as  $3$. \\*
 
\noindent
$8$. $\displaystyle  \langle \q_{i,j}\,,\p_{\alpha,\,\beta}\rangle= -\frac{2(n-4)(n-2)}{n^4}$ follows by a similar argument as $6$. \\*

\noindent
$9$. Consider $\langle \q_{i,j}\,,\q_{\alpha,\,\beta}\rangle$. Since $(i,j)\neq (\alpha,\,\beta)$, both $\q_{i,j}$ and $\q_{\alpha,\,\beta}$
have non-zero entry at $(s,t)$ if and only if $s\neq i, s\neq j, s\neq \alpha, s\neq \beta$, $t\neq i, t\neq j, t\neq \alpha$ and $t\neq \beta$.
Therefore, $|\textrm{supp}(\q_{i,j})\cap \textrm{supp}(\q_{\alpha,\,\beta})| = (n-4)(n-4)=  (n-4)^2$ given the $n-4$ choices
for $s$ and $n-4$ choices for $t$.  $\langle \q_{i,j}\,,\q_{\alpha,\,\beta}\rangle$ can now be written as follows
\begin{align*}
\langle \q_{i,j}\,,\q_{\alpha,\,\beta}\rangle = \sum_{(s,t)\in\, \textrm{supp}(\q_{i,j})\cap \textrm{supp}(\q_{\alpha,\,\beta}) } [\q_{i,j}]_{s,t}[\q_{\alpha,\,\beta}]_{s,t}
=
(n-4)^2\left(-\frac{1}{n^2}\right)\left(-\frac{1}{n^2}\right)= \frac{(n-4)^2}{n^4}
\end{align*}
Therefore, $\langle \v_{i,j}\,,\v_{\alpha,\,\beta}\rangle$ is the sum of the above terms.
\[
\langle \v_{i,j}\,,\v_{\alpha,\,\beta}\rangle = -4\,\frac{n-1}{n^4}+2\,\frac{(n-1)^2+1}{n^4}+\frac{2(n-2)^2}{n^4}-\frac{4(n-4)(n-2)}{n^4}+\frac{(n-4)^2}{n^4}
= \frac{1}{n^2}
\]
The last equality follows after some algebraic manipulations. \\*

\noindent
Case $3$: $\langle \v_{i,j}\,,\v_{\alpha,\,\beta}\rangle$ for $\{i,j\}\cap \{\alpha,\,\beta\} \neq \varnothing$\\*

\noindent
A. $i=\alpha$ and $j \neq\beta$. 
\begin{align*} 
\langle \v_{i,j}\,,\v_{\alpha,\,\beta}\rangle  =& \langle \wt_{i,j}+\p_{i,j}+\q_{i,j}\,,  \wt_{\alpha,\,\beta}+\p_{\alpha,\,\beta}+\q_{\alpha,\,\beta}\rangle \\
                                                                  = & \langle \wt_{i,j}\,, \wt_{\alpha,\,\beta}\rangle +  \langle \wt_{i,j}\,,\p_{\alpha,\,\beta}\rangle +  \langle \wt_{i,j}\,,\q_{\alpha,\,\beta}\rangle+\\
                                                                         &\langle \p_{i,j}\,,\wt_{\alpha,\,\beta}\rangle +    \langle \p_{i,j}\,,\p_{\alpha,\,\beta}\rangle +
                                                                            \langle \p_{i,j}\,,\q_{\alpha,\,\beta}\rangle +    \langle \q_{i,j}\,,\wt_{\alpha,\,\beta}\rangle +   \langle \q_{i,j}\,,\p_{\alpha,\,\beta}\rangle +
                                                                               \langle \q_{i,j}\,,\q_{\alpha,\,\beta}\rangle 
\end{align*}
Each term will be evaluated separately. \\

\noindent
$1$. $\displaystyle \langle \wt_{i,j}\,, \wt_{\alpha,\,\beta}\rangle  = [\wt_{i,j}]_{i,i}[\wt_{\alpha,\,\beta}]_{i,i}= \frac{(n-1)^2}{n^4}$. \\*

\noindent
$2$. Consider $\displaystyle \langle \wt_{i,j}\,,\p_{\alpha,\,\beta}\rangle$.
\begin{align*}
\displaystyle \langle \wt_{i,j}\,,\p_{\alpha,\,\beta}\rangle  = [\wt_{i,j}]_{i,j} [\p_{i,j}]_{i,j}+ [\wt_{i,j}]_{j,i} [\p_{i,j}]_{j,i} &= -\left(\frac{2(n-1)^2+2}{2n^2}\right)\left( \frac{2n-4}{4n^2}\right)\\
&= -\frac{(n-2)[(n-1)^2+1]}{2n^4}  
\end{align*}

\noindent
$3$. $\displaystyle \langle \wt_{i,j}\,,\q_{\alpha,\,\beta}\rangle = [\wt_{i,j}]_{j,j} [\q_{\alpha,\,\beta}]_{j,j} = \frac{n-1}{n^2}\left(-\frac{1}{n^2}\right) = -\frac{n-1}{n^4}$.  \\*

\noindent
$4$. $\displaystyle \langle \p_{i,j}\,,\wt_{\alpha,\,\beta}\rangle =  -\frac{(n-2)[(n-1)^2+1]}{2n^4}$ follows by a similar argument as $2$. \\*

\noindent
$5$. Consider $\langle \p_{i,j}\,,\p_{\alpha,\,\beta}\rangle $. 
\begin{align*}
\langle \p_{i,j}\,,\p_{\alpha,\,\beta}\rangle = & \sum_{s,s\neq i, s \neq j } [\p_{i,j}]_{i,s}  [\p_{\alpha,\,\beta}]_{i,s} +  \sum_{t,t\neq i, t \neq j } [\p_{i,j}]_{j,t}  [\p_{\alpha,\,\beta}]_{j,t} +
\sum_{s,s\neq i, s \neq j } [\p_{i,j}]_{s,i}  [\p_{\alpha,\,\beta}]_{s,i} +  \sum_{t,t\neq i, t \neq j } [\p_{i,j}]_{t,j}  [\p_{\alpha,\,\beta}]_{t,j} \\
 = &   \sum_{s,s\neq i, s \neq j } [\p_{i,j}]_{i,s}  [\p_{\alpha,\,\beta}]_{i,s} +  [\p_{i,j}]_{j,\,\beta}[\p_{\alpha,\,\beta}]_{j,\,\beta} +
\sum_{s,s\neq i, s \neq j } [\p_{i,j}]_{s,i}  [\p_{\alpha,\,\beta}]_{s,i} +  [\p_{i,j}]_{t,\,\beta}  [\p_{\alpha,\,\beta}]_{t,\,\beta} \\
= &  (n-3)\left(\frac{2n-4}{4n^2}\right)^2+  \frac{2n-4}{4n^2} +
 (n-3)\left(\frac{2n-4}{4n^2}\right)^2 +  \frac{2n-4}{4n^2} \\
    = & 2(n-2) (n-3)\left(\frac{2n-4}{4n^2}\right)^2 = \frac{1}{2}\frac{(n-2)^{3}}{n^{4}}
\end{align*}
The second line follows since $[\p_{i,\,\beta}]_{j,t} =0$ for all $t$ except $t=\beta$ and $t=i$ and  $[\p_{i,\,\beta}]_{t,j} =0$ for all $t$ except $t=\beta$ and $t=i$.
The third line uses the fact that  $[\p_{i,\,\beta}]_{i,s} \neq 0$ for all $t$ except $t=i$ and $t=\beta$ and  $[\p_{i,\,\beta}]_{s,i} \neq 0$ for all $t$ except $t=i$ and $t=\beta$.
The final equality results after some algebraic manipulations.  \\*

\noindent
$6$. Consider $\langle \p_{i,j}\,,\q_{\alpha,\,\beta}\rangle$. 
\begin{align*}
\langle \p_{i,j}\,,\q_{\alpha,\,\beta}\rangle &= \sum_{t,t \neq i,t\neq j} [\p_{i,j}]_{i,t} [\q_{\alpha,\,\beta}]_{i,t} + \sum_{s,s\neq i,s\neq j} [\p_{i,j}]_{j,s} [\q_{\alpha,\,\beta}]_{j,s}+
\sum_{t,t \neq i,t\neq j} [\p_{i,j}]_{t,i} [\q_{\alpha,\,\beta}]_{t,i} + \sum_{s,s\neq i,s\neq j} [\p_{i,j}]_{s,j} [\q_{\alpha,\,\beta}]_{s,j}\\
& = 0 + \sum_{s,s\neq i,s\neq j} [\p_{i,j}]_{j,s} [\q_{\alpha,\,\beta}]_{j,s} + 0 +  \sum_{s,s\neq i,s\neq j} [\p_{i,j}]_{s,j} [\q_{\alpha,\,\beta}]_{s,j}\\
& = 2(n-3)\frac{2n-4}{4n^2}\left(-\frac{1}{n^2}\right) = -\frac{(n-3)(n-2)}{n^4}
\end{align*}
The first equality follow since it suffices to consider $\langle \p_{i,j}\,,\q_{\alpha,\,\beta}\rangle$ on the support of $\p_{i,j}$.
The second equality results since $[\q_{i,\,\beta}]_{i,t}=[\q_{i,\,\beta}]_{t,i}=0$ for any $t$ . 
The third and last equality result from the following analysis. 
Restricted to the support of $\p_{i,j}$, the matrix $\q_{i,\,\beta}$ is zero except at the entries
$[\q_{i,\,\beta}]_{j,s}$ and $[\q_{i,\,\beta}]_{s,j}$ for all $j\neq \beta$.
With this, $|\textrm{supp}(\p_{i,j})\cap \textrm{supp}(\q_{\alpha,\,\beta})| = 2[(n-2)-1] = 2(n-3)$
and the final form follows. \\*

\noindent
$7$. $\displaystyle \langle \q_{i,j}\,,\wt_{\alpha,\,\beta}\rangle =  -\frac{n-1}{n^4}$ follows by a similar argument as  $3$. \\*
 
\noindent
$8$. $\displaystyle \langle \q_{i,j}\,,\p_{\alpha,\,\beta}\rangle= -\frac{(n-3)(n-2)}{n^4}$ follows by a similar argument as $6$. \\*

\noindent
$9$. Consider $\langle \q_{i,j}\,,\q_{\alpha,\,\beta}\rangle$. Since $i=\alpha$ and $j\neq (\alpha,\,\beta)$, both $\q_{i,j}$ and $\q_{\alpha,\,\beta}$
have a non-zero entry at $(s,t)$ if and only if $s\neq i, s\neq \alpha, s\neq \beta$, $t\neq i, t\neq j$, and $t\neq \beta$.
Therefore, $|\textrm{supp}(\q_{i,j})\cap \textrm{supp}(\q_{\alpha,\,\beta})| = (n-3)(n-3)=  (n-3)^2$ given the $n-3$ choices
for $s$ and $n-3$ choices for $t$.  $\langle \q_{i,j}\,,\q_{\alpha,\,\beta}\rangle$ can now be written as follows
\begin{align*}
\langle \q_{i,j}\,,\q_{\alpha,\,\beta}\rangle = \sum_{(s,t)\in\, \textrm{supp}(\q_{i,j})\cap \textrm{supp}(\q_{\alpha,\,\beta}) } [\q_{i,j}]_{s,t}[\q_{\alpha,\,\beta}]_{s,t}
=
(n-3)^2\left(-\frac{1}{n^2}\right)\left(-\frac{1}{n^2}\right)= \frac{(n-3)^2}{n^4}
\end{align*}
Therefore, $\langle \v_{i,j}\,,\v_{\alpha,\,\beta}\rangle$ is the sum of the above terms.
\begin{align*}
\langle \v_{i,j}\,,\v_{\alpha,\,\beta}\rangle &= \frac{(n-1)^2}{n^4}+ -\frac{(n-2)[(n-1)^2+1]}{n^4}
-2\frac{n-1}{n^4}+  \frac{1}{2}\frac{(n-2)^{3}}{n^{4}} -2\frac{(n-3)(n-2)}{n^4}+\frac{(n-3)^2}{n^4}\\
&= \frac{2-n}{2n^2}
\end{align*}
The last equality follows after some algebraic manipulations. \\*

\noindent
Now the following three cases remain: $i\neq \alpha$ and $j = \beta$, $i =\beta$ and $j \neq  \alpha$ and $i \neq \beta $ and $j = \alpha$. 
However, since $\v_{i,j}$ is symmetric the above argument can be adapted to these three cases by interchanging indices
as appropriate. This concludes the proof. 
       
\end{proof}
\begin{remark}
A short proof of the form of $\H^{-1}$ might be plausible. The main technical challenge has been the locations of $1$ and $0$'s in $\H$ which does not lend itself to simple analysis.
\end{remark}

\noindent
Next, in Lemma \ref{H_prop}, we state and prove the spectral properties of the matrix $\H$. 

\begin{lemma} \label{H_prop}
  The matrix $\H\in \real^{L\times L}$ is symmetric and positive definite.
	The minimum eigenvalue of $\H$ is at least $1$ and the maximum eigenvalue is $2n$. 
	The absolute sum of each row of $\H^{-1}$ is given by $\displaystyle 2-\frac{15}{2n}+\frac{8}{n^2}$.
\end{lemma}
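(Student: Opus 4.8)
The plan is to establish three facts about $\H = \W^T\W$: symmetry and positive definiteness, the extremal eigenvalues, and the absolute row sum of $\H^{-1}$. Symmetry is immediate from $\H_{\alphab,\betab} = \langle \w_{\alphab},\w_{\betab}\rangle$, which is manifestly a symmetric bilinear form. Positive definiteness follows because $\{\w_{\alphab}\}$ is a basis of $\S$: for any nonzero coefficient vector $\c$, $\c^T\H\c = \|\sum_{\alphab} c_{\alphab}\w_{\alphab}\|_F^2 > 0$. Thus all eigenvalues of $\H$ are strictly positive, and $\H^{-1} = \V^T\V$ is well defined.

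For the extremal eigenvalues, first I would compute $\H$ explicitly. From the definition $\w_{\alphab} = \e_{\alpha_1,\alpha_1}+\e_{\alpha_2,\alpha_2}-\e_{\alpha_1,\alpha_2}-\e_{\alpha_2,\alpha_1}$, a short calculation gives $\langle\w_{\alphab},\w_{\alphab}\rangle = 4$, while for $\alphab\neq\betab$ the inner product is $1$ when the index pairs share exactly one coordinate and $0$ when they are disjoint. So $\H = 3\I + \bm{A}$ where $\bm{A}$ is (up to the diagonal) the adjacency-type matrix recording which pairs overlap in one index — in fact $\bm{A}+\I$ is related to the line graph of the complete graph $K_n$. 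To get the eigenvalues cleanly, the better route is to observe $\sum_{\alphab\in\Us}\w_{\alphab}^2 = 2n\I - 2\bm{1}\bm{1}^T$ (already used in the proof of Lemma \ref{golfing_size}) and, more to the point, that $\H$ acts on coefficient space in a way mirroring the frame operator $\sum_\alphab \langle\cdot,\w_\alphab\rangle\w_\alphab$ on $\S$. I would compute this frame operator $\S\to\S$: using the structure of $\w_\alphab$ one finds $\sum_{\alphab}\langle\X,\w_\alphab\rangle\w_\alphab = 2n\X$ for $\X\in\S$ (this is essentially the content of the inequality $\sum_\alphab\langle\X,\w_\alphab\rangle^2 \le 2n\|\X\|_F^2$ from Lemma \ref{norm_equivalent}, sharpened to an identity via the zero-row-sum constraint), which forces $\W\W^T$ restricted to $\S$ to equal $2n$ times the identity, hence the nonzero eigenvalues of $\W^T\W = \H$ are bounded above by $2n$; equality is attained, so $\lambda_{\max}(\H) = 2n$. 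For $\lambda_{\min}(\H)\ge 1$: since $\H^{-1} = \V^T\V$, it suffices to show $\lambda_{\max}(\H^{-1})\le 1$, equivalently $\|\v_\alphab\| \le 1$-type control; cleaner still, I would invoke Lemma \ref{H_inverse}, whose explicit entries of $\H^{-1}$ make $\lambda_{\max}(\H^{-1})$ directly computable, and check it equals $1$. (Lemma \ref{H_prop} is cited with $\lambda_{\max}(\H^{-1})=1$ and $\lambda_{\min}(\H^{-1}) = 1/(2n)$ in the main text, consistent with the above.)

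For the absolute row sum of $\H^{-1}$, I would read off the three entry values from Lemma \ref{H_inverse} and count multiplicities. Fix a row indexed by $\alphab = (i,j)$. There is exactly one diagonal entry with value $\frac{(n-1)^2+1}{2n^2}$. The pairs $\betab$ sharing exactly one index with $(i,j)$ number $2(n-2)$ (choose which of $i,j$ is kept, then the other coordinate from the remaining $n-2$), each contributing $\frac{4-2n}{4n^2} = \frac{2-n}{2n^2}$ in absolute value $\frac{n-2}{2n^2}$. The disjoint pairs number $\binom{n-1}{2} - \big[1 + 2(n-2)\big] = \binom{n-2}{2}$, each contributing $\frac{1}{n^2}$. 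Summing: $\frac{(n-1)^2+1}{2n^2} + 2(n-2)\cdot\frac{n-2}{2n^2} + \binom{n-2}{2}\cdot\frac{1}{n^2}$. I would simplify this rational expression in $n$ and check it collapses to $2 - \frac{15}{2n} + \frac{8}{n^2}$, which closes the proof. The main obstacle is purely bookkeeping: getting the overlap-multiplicity count $2(n-2)$ and the disjoint count right, and not dropping a factor of $2$ in the absolute value of the off-diagonal "otherwise" entries; the algebra itself is routine once the combinatorics is pinned down.
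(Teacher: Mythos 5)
Your treatment of symmetry, positive definiteness, and the absolute row sum of $\H^{-1}$ matches the paper's (the multiplicity count $1$, $2(n-2)$, $\binom{n-2}{2}$ and the resulting algebra are exactly what the paper does, though your intermediate expression $\binom{n-1}{2}-[1+2(n-2)]$ should read $\binom{n}{2}-[1+2(n-2)]$; the final count $\binom{n-2}{2}$ is right). The eigenvalue bounds, however, contain a genuine error and a genuine gap. The identity you propose, $\sum_{\alphab}\langle\X,\w_{\alphab}\rangle\w_{\alphab}=2n\X$ for $\X\in\S$, is false: it would force $\H=2n\,\I$, whereas $\H$ has diagonal $4$ and off-diagonal entries in $\{0,1\}$. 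Concretely, $\H=4\I+\bm{A}$ with $\bm{A}$ the adjacency matrix of the line graph of $K_n$, whose spectrum gives eigenvalues $2n$, $n$, and $2$ for $\H$ (for $n=3$, $\H=3\I+\bm{1}\bm{1}^T$ has eigenvalues $6,3,3$). The inequality in Lemma \ref{norm_equivalent} cannot be "sharpened to an identity," and invoking that lemma here is circular in any case, since its proof rests on Lemma \ref{H_prop}. The paper's route is elementary: each row of $\H$ contains one $4$ and exactly $2n-4$ ones, so $\bm{1}$ is an eigenvector with eigenvalue $2n$, and Gerschgorin shows no eigenvalue exceeds $4+(2n-4)=2n$.

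For $\lambda_{\min}(\H)\ge 1$, saying that $\lambda_{\max}(\H^{-1})$ is "directly computable" from the entries of $\H^{-1}$ and "equals $1$" is not an argument, and the asserted value is wrong: the true value is $\tfrac12$ (the paper remarks on this), and plain Gerschgorin applied to $\H^{-1}$ using your row sum only yields a bound near $2$, which is insufficient. The paper closes this step with a weighted Gerschgorin variant (weights $d_1=9$, $d_2=\dots=d_L=3$) to obtain $\lambda_{\max}(\H^{-1})\le 1$. Alternatively, the association-scheme computation above gives $\lambda_{\min}(\H)=2$ exactly, which would be a cleaner (and stronger) way to finish, but as written your proposal supplies neither argument.
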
 
\begin{proof}
The symmetry of $\H$ is trivial as $\H_{\alphab,\,\betab}=\langle \w_{\alphab}\,,\w_{\betab}\rangle
= \langle \w_{\betab}\,,\w_{\alphab}\rangle = \H_{\betab,\,\alphab}$. $\H$ is positive definite
since $\H = \W^{T}\W$ and $\W$ has linearly independent columns. With $\alphab=(\alpha_1,\alpha_2)$
and $\betab = (\beta_1,\beta_2)$, $\H$ is defined as follows
\[
\H_{\alphab,\,\betab} = \langle \w_{\alphab}\,,\w_{\betab}\rangle=
\begin{cases}
4 & \textrm{if }\alpha_1 = \beta_1 \,\&\, \alpha_2 = \beta_2 \\
0 & \textrm{if }\alpha_1 \neq \beta_1 \,\&\, \alpha_1 \neq \beta_2\,\&\,\alpha_2 \neq \beta_1 \,\&\, \alpha_2 \neq \beta_2\\
1 & \textrm{otherwise}
\end{cases}
\]
After minor analysis of $\H$, the number of zeros in any given row is
given by $\displaystyle \frac{(n-2)(n-3)}{2}$. The number of ones is simply
\[
L - \# \textrm{ number of zeros} - \# \textrm{ number of fours}  =
\frac{n(n-1)}{2} - \frac{(n-2)(n-3)}{2} - 1 = 2n-4
\]
To find the maximum eigenvalue
of $\H$, note that $\bm{1}$ is an eigenvector of $\H$ with eigenvalue $2n$. 
From Gerschgorin theorem, the upper bound for an eigenvalue is simply $4+(2n-4) = 2n$.
It follows that the maximum eigenvalue of $\H$ is $2n$. 
Using the form of $\H^{-1}$ from Lemma \ref{H_inverse}, 
consider the absolute sum of any row of $\H^{-1}$, $\sum_{i} |\H^{-1}_{ij}|$.
\[
\sum_{i} |\H^{-1}_{i,j}| = 
\frac{1}{n^2}\bigg[\frac{(n-1)^{2}+1}{2}\bigg]+ \frac{1}{n^2}\bigg[\frac{(n-2)(n-3)}{2}\bigg] + \frac{1}{n^2}(2n-4) \big|1-\frac{1}{2}n\big|
= 2 - \frac{15}{2n} + \frac{8}{n^2}
\]
Finally, we make use of a variant of Gerschgorin's theorem to bound the maximum eigenvalue of $\H^{-1}$.
If $\bm{P}$ is invertible, $\bm{P}^{-1}\H^{-1}\bm{P}$ and $\H^{-1}$ have the same eigenvalues.
For simplicity, let $\bm{P} = \textrm{diag } (d_1,d_2,...,d_L)$. 
In \cite[p. $347$]{horn1990matrix}, this fact was used to state the following variant of Gerschgorin's theorem.
Below, we restate this result, albeit minor changes, for ease of reference.  

\begin{corollary}[{\cite[p. $347$]{horn1990matrix}}]

Let $\bm{A} = [a_{ij}]\in \real^{L\times L}$ and  let $d_1,d_2,...,d_L$ be positive real numbers. Then
all eigenvalues of $\bm{A}$ lie in the region
\[
\bigcup_{i=1}^{L} \bigg\{ \bm{z}\in \real:|z-a_{ii}|\le \frac{1}{d_i}\sum_{j=1,j\neq i}^{L} d_{j}\,|a_{ij}|\bigg\}
\]
\end{corollary}

\noindent
Using the corollary, set $d_1 = 9$ and $d_2=d_3=...=d_L = 3$. Apply the corollary on the matrix $\H^{-1}$.
After minor calculation, we have $\lambda_{\max}(\H^{-1})\le 1$. We remark here that with suitable choice of
$(d_1,d_2,...,d_L)$, the bound can be tightened to $\lambda_{\max}(\H^{-1})=\frac{1}{2}$ but the current bound is sufficient for our analysis. 

\end{proof}

Assume that the underlying inner product matrix $\M$ has coherence $\nu$ with respect to the standard basis.
Let $\e_{i}\in \real^{n}$ be the standard vector, a vector of zeros except a $1$ in the $i$th position.
For all $i$, $1\le i\le n$, the  coherence definition \cite{candes2009exact} states that
\begin{equation}
\label{eqn:coherenceSB}
||\P_{\U}\,\e_{i}||_{2}^{2}\le \frac{\nu r}{n} 
\end{equation}
It suffices to consider the condition on $\U$ since $\M$ is symmetric. 
Given this, could one derive coherence conditions for the EDG problem? The answer is affirmative
and is given in Lemma \ref{coherence_conditions_derivation} below.

\begin{lemma} \label{coherence_conditions_derivation}
  If the underlying inner product matrix $\M$ has coherence $\nu$ with respect to the standard basis, i.e. $\M$ satisfies \eqref{eqn:coherenceSB}, then
  the following coherence conditions hold for the EDG problem. 
\[
||\P_{\T}\,\w_{\alphab}||_{F}^{2} \le  \frac{8\nu r}{n} \quad ; \quad
||\P_{\T}\,\v_{\alphab}||_{F}^{2} \le  \frac{8\nu r}{n}
\]
\end{lemma}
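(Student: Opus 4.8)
The plan is to bound $\|\P_{\T}\,\w_{\alphab}\|_F^2$ and $\|\P_{\T}\,\v_{\alphab}\|_F^2$ directly in terms of the standard-basis coherence quantity $\|\P_{\U}\e_i\|_2^2$, exploiting the very simple structure of $\w_{\alphab}$ as a combination of four elementary matrices $\e_{\alpha_1,\alpha_1},\e_{\alpha_2,\alpha_2},\e_{\alpha_1,\alpha_2},\e_{\alpha_2,\alpha_1}$.

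First I would recall from \eqref{eq:pt_defn} that $\P_{\T}\X = \P_{\U}\X + \X\P_{\U} - \P_{\U}\X\P_{\U}$, and note the elementary bound $\|\P_{\T}\X\|_F \le \|\P_{\U}\X\|_F + \|\X\P_{\U}\|_F + \|\P_{\U}\X\P_{\U}\|_F \le 2\|\P_{\U}\X\|_F + 2\|\X\P_{\U}\|_F$ (using $\|\P_{\U}\X\P_{\U}\|_F \le \|\P_{\U}\X\|_F$, etc.), or more cleanly $\|\P_{\T}\X\|_F \le 2(\|\P_{\U}\X\|_F + \|\P_{\U}\X^T\|_F)$ when $\X$ is not symmetric, and $\le 3\|\P_{\U}\X\|_F$ when $\X=\X^T$. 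Since $\w_{\alphab}=\w_{\alphab}^T$, it then suffices to bound $\|\P_{\U}\w_{\alphab}\|_F$. Writing $\w_{\alphab} = (\e_{\alpha_1}-\e_{\alpha_2})(\e_{\alpha_1}-\e_{\alpha_2})^T$ we get $\P_{\U}\w_{\alphab} = \big(\P_{\U}(\e_{\alpha_1}-\e_{\alpha_2})\big)(\e_{\alpha_1}-\e_{\alpha_2})^T$, hence $\|\P_{\U}\w_{\alphab}\|_F = \|\P_{\U}(\e_{\alpha_1}-\e_{\alpha_2})\|_2 \cdot \|\e_{\alpha_1}-\e_{\alpha_2}\|_2 = \sqrt{2}\,\|\P_{\U}(\e_{\alpha_1}-\e_{\alpha_2})\|_2$. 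By the triangle inequality and \eqref{eqn:coherenceSB}, $\|\P_{\U}(\e_{\alpha_1}-\e_{\alpha_2})\|_2 \le \|\P_{\U}\e_{\alpha_1}\|_2 + \|\P_{\U}\e_{\alpha_2}\|_2 \le 2\sqrt{\nu r/n}$, so $\|\P_{\U}\w_{\alphab}\|_F^2 \le 8\nu r/n$. Combining with the symmetric bound and tracking constants carefully yields $\|\P_{\T}\w_{\alphab}\|_F^2 \le \frac{8\nu r}{n}$ (adjusting the constant in the intermediate inequality as needed, e.g. using $\|\P_{\T}\X\|_F \le \|\P_{\U}\X\|_F + \|\P_{\U^\perp}\X\P_{\U}\|_F \le \|\P_{\U}\X\|_F + \|\X\P_{\U}\|_F$ for symmetric $\X$, which gives exactly $2\|\P_{\U}\w_{\alphab}\|_F$, and one can absorb the square into the stated constant or note the cross terms cancel).

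For the dual basis bound, I would use $\v_{\alphab} = \sum_{\betab} \H^{\alphab,\betab}\w_{\betab}$ together with Lemma \ref{H_prop}, which gives $\sum_{\betab}|\H^{\alphab,\betab}| = 2 - \frac{15}{2n} + \frac{8}{n^2} \le 2$. Then $\|\P_{\T}\v_{\alphab}\|_F \le \sum_{\betab}|\H^{\alphab,\betab}|\,\|\P_{\T}\w_{\betab}\|_F \le \big(\sum_{\betab}|\H^{\alphab,\betab}|\big)\max_{\betab}\|\P_{\T}\w_{\betab}\|_F \le 2\sqrt{8\nu r/n}$; squaring gives $32\nu r/n$, which is worse than the claimed $8\nu r/n$. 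To recover the sharper constant I would instead bound $\|\P_{\U}\v_{\alphab}\|_F$ directly: since $\v_{\alphab}$ is symmetric, $\|\P_{\T}\v_{\alphab}\|_F \le 2\|\P_{\U}\v_{\alphab}\|_F$ where the factor can be sharpened, and from the explicit form of $\v_{\alphab}$ in Lemma 1 one sees $\v_{\alphab}$ is a rank-bounded perturbation whose column space, when hit by $\P_{\U}$, again reduces to combinations of $\P_{\U}\e_i$'s controlled by \eqref{eqn:coherenceSB}. The honest route, matching the paper's philosophy, is: the constants in the lemma statement are stated loosely ("up to constants"), so I expect the author simply carries the $\sum|\H^{\alphab,\betab}| \le 2$ estimate and either states a weaker constant or folds the factor into $\nu$.

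The main obstacle is constant-tracking rather than any conceptual difficulty: the structural facts ($\w_{\alphab}$ is rank one up to sign, $\H^{-1}$ has bounded row sums, $\P_{\T}$ expands Frobenius norm by at most a fixed factor on symmetric matrices) are all in hand, and the only real work is deciding whether to bound $\|\P_{\T}\v_{\alphab}\|_F$ via the crude $\sum|\H^{\alphab,\betab}|\cdot\max\|\P_{\T}\w_{\betab}\|_F$ route (clean but loses a factor of $4$) or via a direct computation on the explicit form of $\v_{\alphab}$ from Lemma 1 (tighter but tedious). Given the "$\le 8\nu r/n$" in the statement, I expect the proof to take the direct route for $\v_{\alphab}$, or equivalently to observe that $\P_{\U}\v_{\alphab}$ inherits the same $\sqrt{2}\,\|\P_{\U}(\e_{\alpha_1}-\e_{\alpha_2})\|_2$-type control from the dominant $\wt_{\alphab}$ term with the remaining $\p_{\alphab},\q_{\alphab}$ pieces contributing lower-order corrections in $1/n$.
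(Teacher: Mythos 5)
Your approach is essentially the paper's. For $\w_{\alphab}$, the paper bounds $||\P_{\T}\,\w_{\alphab}||_{F}^{2} \le 2\langle \w_{\alphab}\,,\U\U^{T}\w_{\alphab}\rangle$ and controls the inner product entrywise via $|(\U\U^{T})_{i,j}|\le \nu r/n$ (Cauchy--Schwarz plus \eqref{eqn:coherenceSB}); this is the same quantity $||\P_{\U}\,\w_{\alphab}||_{F}^{2}=2\,||\P_{\U}(\e_{\alpha_1}-\e_{\alpha_2})||_{2}^{2}$ that you compute from the rank-one factorization, and both routes give $8\nu r/n$ for it and hence $16\nu r/n$ after the factor of $2$ coming from $\P_{\T}$. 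One genuine slip in your write-up: the claim that ``the cross terms cancel'' so that $||\P_{\T}\,\w_{\alphab}||_{F}^{2}\le ||\P_{\U}\,\w_{\alphab}||_{F}^{2}$ is not right --- the orthogonal decomposition $\P_{\T}\X=\P_{\U}\X+\P_{\U^{\perp}}\X\P_{\U}$ gives $||\P_{\T}\X||_{F}^{2}=||\P_{\U}\X||_{F}^{2}+||\P_{\U^{\perp}}\X\P_{\U}||_{F}^{2}\le 2\,||\P_{\U}\X||_{F}^{2}$ for symmetric $\X$, so the factor of $2$ is unavoidable by this argument. For $\v_{\alphab}$, the paper takes exactly the crude route you considered and set aside: $||\P_{\T}\,\v_{\alphab}||_{F}\le \sum_{\betab}|\H^{\alphab,\,\betab}|\,||\P_{\T}\,\w_{\betab}||_{F}\le 2\sqrt{16\nu r/n}$ via Lemma \ref{H_prop}, i.e.\ $64\nu r/n$ after squaring; no direct computation on the explicit form of $\v_{\alphab}$ is performed. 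Your suspicion about the constants is therefore well founded but resolved differently than you guessed: the paper's own proof ends with $16\nu r/n$ and $64\nu r/n$, which do not match the $8\nu r/n$ in the lemma statement. Since the lemma is invoked only to assert equivalence of the coherence notions ``up to constants,'' this discrepancy is harmless, and your proposal is as correct as the paper's argument.
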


\begin{proof}

  Using the definition of $\P_{\T}$ and the fact that $\w_{\alphab}$ is symmetric for any $\alpha$,
	we have
\begin{align*}
||\P_{\T}\,\w_{\alphab}||_{F}^{2} &=
\langle \w_{\alphab}\,,\U\U^{T}\w_{\alphab} \rangle +\langle \w_{\alphab}\,,\w_{\alphab}\U\U^{T}
\rangle - \langle \w_{\alphab}\,,\U\U^{T}\w_{\alphab}\U\U^{T}\rangle\\
& = 2\langle \w_{\alphab}\,,\U\U^{T}\w_{\alphab} \rangle - \langle \U\U^{T}\w_{\alphab}\U\U^{T}\,,\U\U^{T}\w_{\alphab}\U\U^{T}\rangle
 \le 2\langle \w_{\alphab}\,,\U\U^{T}\w_{\alphab} \rangle
\end{align*} 
Note that $\langle \w_{\alphab}\,,\U\U^{T}\w_{\alphab}\rangle = \langle \U\U^{T}\,\w_{\alphab}\,,\U\U^{T}\,\w_{\alphab}\rangle\ge 0$.
Using the definition of $\langle \X\,,\w_{\alphab}\rangle$ and the fact that $\w_{\alphab}^{2}=2\w_{\alphab}$ for
any $\alphab$, $\langle \w_{\alphab}\,,\U\U^{T}\w_{\alphab}\rangle  =
2[(\U\U^{T})_{\alpha_1,\,\alpha_1}+(\U\U^{T})_{\alpha_2,\,\alpha_2}-2(\U\U^{T})_{\alpha_1,\,\alpha_2}]\le
2[(\U\U^{T})_{\alpha_1,\,\alpha_1}+(\U\U^{T})_{\alpha_2,\,\alpha_2}+2|(\U\U^{T})_{\alpha_1,\,\alpha_2}|]\le
4 [(\U\U^{T})_{\alpha_1,\,\alpha_1}+(\U\U^{T})_{\alpha_2,\,\alpha_2}]$. The last inequality holds 
since $\U\U^{T}$ is positive semidefinite.
This motivates a bound on $ \underset{ij}\max \,|\U\U^{T}|_{i,j}$. 
\begin{align*}
|\U\U^{T}|_{i,j} = \bigg|\sum_{k=1}^{r} \U_{i,k} \U_{j,k}\bigg| \le \sum_{k=1}^{r} |\U_{i,k}|\, |\U_{j,k}|
\le \sqrt{\sum_{k=1}^{r} \U_{i,k}^{2}}  \sqrt{\sum_{k=1}^{r} \U_{j,k}^{2}} \le 
\sqrt{\frac{\nu r}{n}} \sqrt{\frac{\nu r}{n}} = \frac{\nu r}{n}
\end{align*}
Using the above bound, $\langle \w_{\alphab}\,,\U\U^{T}\w_{\alphab}\rangle \le 8\frac{\nu r}{n}$
resulting the following bound for $||\P_{\T}\,\w_{\alphab}||_{F}^{2}$.
\[
||\P_{\T}\,\w_{\alphab}||_{F}^{2} \le  \frac{16\nu r}{n}
\]
To bound $||\P_{\T}\,\v_{\alphab}||_{F}^{2}$, note that 
$||\P_{\T}\,\v_{\alphab}||_{F} \le \sum_{\betab\in \Us} ||\H^{\alphab,\,\betab}\,\P_{\T} \w_{\betab}||_{F} = \sum_{\betab\in \Us} |\H^{\alphab,\,\betab}|\,\,||\P_{\T} \w_{\betab}||_{F}$.
Using Lemma \ref{H_prop} and the bound for $||\P_{\T}\,\w_{\alphab}||_{F}$,
\[||\P_{\T}\,\v_{\alphab}||_{F}^{2} \le  \frac{64\nu r}{n}\]

\end{proof}

In addition, the standard matrix completion analysis assumes that 
$\displaystyle \underset{ij}{\max}\,|(\U\U^T)_{i,j}| \le \frac{\mu_1\sqrt{r}}{n}$
for some constant $\mu_1$ \cite{candes2009exact}. If this holds, for some $\alphab$, it follows
that
\begin{align*}
\langle \U\U^{T},\w_{\alphab}\rangle &= (\U\U^{T})_{\alpha_1,\,\alpha_1}+(\U\U^{T})_{\alpha_2,\,\alpha_2}-2(\U\U^{T})_{\alpha_1,\,\alpha_2}\\
& \le (\U\U^{T})_{\alpha_1,\,\alpha_1}+(\U\U^{T})_{\alpha_2,\,\alpha_2}+2|(\U\U^{T})_{\alpha_1,\,\alpha_2}|\\
&\le 2 [(\U\U^{T})_{\alpha_1,\,\alpha_1}+(\U\U^{T})_{\alpha_2,\,\alpha_2}] \le 4 \,\frac{\mu_1\sqrt{r}}{n}
\end{align*}
With this, it can be concluded that
\begin{equation}\label{eq:jointcoherencew}
\underset{\alphab\in \Us}{\max} \,\, \langle \w_{\alphab}\,,\U\U^{T}\rangle^{2} \le 16\,\frac{\mu_{1}^{2}r}{n^2}
\end{equation}
Another short calculation results a bound on $ \langle \v_{\alphab}\,,\U\U^{T}\rangle^{2}$
\[
 \langle \v_{\alphab}\,,\U\U^{T}\rangle^{2}	= \langle \sum_{\betab\in \Us} \H^{\alphab,\,\betab}\w_{\betab}\,,\U\U^{T}\rangle^{2}	\le
\underset{\betab\in \Us}{\max} \,\, \langle \w_{\betab}\,,\U\U^{T}\rangle^{2}\,\left(\sum_{\betab\in \Us} |\H^{\alphab,\,\betab}|\right)^{2}
\le 64\frac{\mu_1 r}{n^2}
\]
The last inequality uses \eqref{eq:jointcoherencew} and Lemma  \ref{H_prop}. From the above inequality, it follows that
\[
\underset{\alphab\in \Us}{\max} \,\, \langle \v_{\alphab}\,,\U\U^{T}\rangle^{2} \le 64\,\frac{\mu_{1}^{2}r}{n^2}
\]

Lemma \ref{coherence_conditions_derivation} and the discussion above show that the coherence conditions with respect
to standard basis lead to comparable EDG coherence conditions.
Specifically, we obtain conditions equivalent up to constants
to \eqref{eq:coherencew}, \eqref{eq:coherencev} and \eqref{eq:jointcoherence}.
We remark here that the condition \eqref{eq:coherencev1}
does not simply follow from the coherence conditions
with respect to the standard basis. We speculate that the equivalence is possible
under certain assumptions but a rigorous analysis is left for future work.

\begin{lemma}\label{norm_equivalent}
Given any $\X \in \S$, the following estimates hold. 
\[
\frac{1}{2n}||\X||_{F}^{2}\le \sum_{\alphab\in \Us}  \langle \X\,,\v_{\alphab}\rangle^{2} \le ||\X||_{F}^{2} \quad;
\quad
||\X||_{F}^{2} \le \sum_{\alphab\in \Us}  \langle \X\,,\w_{\alphab}\rangle^{2} \le 2n||\X||_{F}^{2}
\]
\end{lemma}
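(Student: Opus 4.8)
The plan is to translate all three quantities in the statement into quadratic forms in a single coefficient vector and then invoke the spectral bounds $1 \le \lambda_{\min}(\H)$ and $\lambda_{\max}(\H) \le 2n$ recorded in Lemma~\ref{H_prop}. Fix $\X \in \S$ and set $\bm{a} = \big(\langle \X\,,\w_{\alphab}\rangle\big)_{\alphab\in\Us}$ and $\bm{b} = \big(\langle \X\,,\v_{\alphab}\rangle\big)_{\alphab\in\Us}$. From \eqref{eq:x_expansion_dual} we have $\X = \sum_{\alphab} a_{\alphab}\v_{\alphab}$; dually, writing $\X = \sum_{\betab} c_{\betab}\w_{\betab}$ (possible since $\{\w_{\betab}\}$ spans $\S$) and pairing with $\v_{\alphab}$ using $\langle \w_{\betab},\v_{\alphab}\rangle = \delta_{\alphab,\betab}$ gives $c_{\alphab} = \langle \X\,,\v_{\alphab}\rangle = b_{\alphab}$, i.e. $\X = \sum_{\betab} b_{\betab}\w_{\betab}$.

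Next I would extract the three needed identities. Pairing the two expansions of $\X$ against each other and using biorthogonality yields $\|\X\|_F^2 = \sum_{\alphab} a_{\alphab}b_{\alphab} = \bm{a}^{T}\bm{b}$; pairing $\X = \sum_{\betab} b_{\betab}\w_{\betab}$ against $\w_{\alphab}$ gives $a_{\alphab} = \sum_{\betab}\H_{\alphab,\betab}b_{\betab}$, i.e. $\bm{a} = \H\bm{b}$. Hence $\|\X\|_F^2 = \bm{b}^{T}\H\bm{b}$, $\sum_{\alphab}\langle \X\,,\v_{\alphab}\rangle^2 = \bm{b}^{T}\bm{b}$, and $\sum_{\alphab}\langle \X\,,\w_{\alphab}\rangle^2 = \bm{a}^{T}\bm{a} = \bm{b}^{T}\H^{2}\bm{b}$, where symmetry of $\H$ is used in the last step.

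With this dictionary the dual-basis estimate is exactly $\tfrac{1}{2n}\bm{b}^{T}\H\bm{b} \le \bm{b}^{T}\bm{b} \le \bm{b}^{T}\H\bm{b}$, whose two halves follow from $\lambda_{\min}(\H)\ge 1$ and $\lambda_{\max}(\H)\le 2n$ respectively (Lemma~\ref{H_prop}). For the primal-basis estimate I would set $\bm{d} = \H^{1/2}\bm{b}$, so that $\sum_{\alphab}\langle \X\,,\w_{\alphab}\rangle^2 = \bm{d}^{T}\H\bm{d}$ and $\|\X\|_F^2 = \bm{d}^{T}\bm{d}$; applying the same two spectral bounds to $\bm{d}$ gives $\|\X\|_F^2 \le \sum_{\alphab}\langle \X\,,\w_{\alphab}\rangle^2 \le 2n\|\X\|_F^2$. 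There is no real obstacle; the only mild subtlety is the bookkeeping of the two dual expansions together with the relation $\bm{a} = \H\bm{b}$, and the observation that the $\w$-side is a quadratic form in $\H^{2}$ and so requires the $\H^{1/2}$-sandwiching rather than a direct eigenvalue comparison.
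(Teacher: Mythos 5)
Your proof is correct and takes essentially the same route as the paper's: both reduce the three quantities to quadratic forms controlled by $\H$ and invoke the spectral bounds $\lambda_{\min}(\H)\ge 1$ and $\lambda_{\max}(\H)\le 2n$ from Lemma~\ref{H_prop} (the paper phrases this through the vectorized matrix $\V\V^{T}$ and the eigenvalues of $\H^{-1}$, whereas you work directly with the coefficient vectors via $\bm{a}=\H\bm{b}$, which is if anything cleaner). One cosmetic slip: in the dual-basis estimate the word ``respectively'' swaps the two eigenvalue bounds --- the left inequality uses $\lambda_{\max}(\H)\le 2n$ and the right uses $\lambda_{\min}(\H)\ge 1$ --- but the mathematics is unaffected.
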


\begin{proof}
Vectorize the matrix $\X$ and each dual basis $\v_{\alphab}$. It follows that
\[ 
\sum_{\alphab\in \Us} \langle \X\,,\v_{\alphab}\rangle^{2} = \sum_{\alphab\in \Us}  \x^{T}\v_{\alphab}\v_{\alphab}^{T} \x 
= \x^{T} \V\V^{T} \x
\]
Orthogonalize $\V$, $\overline{\V} = \V(\sqrt{\H^{-1}})^{-1}$. With this,
$\sum_{\betab\in \Us}  \langle \X\,,\v_{\betab}\rangle^{2}   
= \x^{T}\, \overline{\V}\H^{-1}\overline{\V}^{T} \x$. It follows that
\[
\lambda_{\min}(\H^{-1})\,||\x||_{2}^{2}=\frac{1}{2n}\,||\x||_{2}^{2} \le 
\sum_{\betab\in \Us}  \langle \X\,,\v_{\betab}\rangle^{2} \le \lambda_{\max}(\H^{-1})\,||\x||_{2}^{2} \le ||\x||_{2}^{2}
\]
where the above result follows from the min-max theorem and Lemma \ref{H_prop}. Proceeding analogously
as above, noting that $\lambda_{\max}(\H)= n$ and $\lambda_{\min}(\H)\ge 1$ from Lemma \ref{H_prop}, we obtain
\[
||\X||_{F}^{2}\le \sum_{\alphab\in \Us}  \langle \X\,,\w_{\alphab}\rangle^{2} \le 2n||\X||_{F}^{2}
\]
This concludes the proof.
\end{proof}

\begin{lemma}\label{operator_norm_of_sum}
Let $\c_{\alpha}\ge 0$. Then, 
\[
\bigg|\bigg|\sum_{\alphab} \c_{\alpha}\, (\P_{\T^{\perp}}\,\w_{\alphab})^{2}\bigg|\bigg| 
\le \bigg|\bigg|\sum_{\alpha} \c_{\alpha}\, \w_{\alphab}^{2}\bigg|\bigg|
\]
\end{lemma}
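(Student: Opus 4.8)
The plan is to reduce this operator-norm inequality to a pointwise comparison of quadratic forms, exploiting three facts already available: that $\P_{\T^{\perp}}$ acts by $\X\mapsto\P_{\U^{\perp}}\X\P_{\U^{\perp}}$, that $\P_{\U^{\perp}}$ is an orthogonal projection and hence a contraction, and that each $\w_{\alphab}$ is symmetric. I expect no serious obstacle here; the only point requiring care is recognizing the correct reduction and keeping track of which factors of $\P_{\U^{\perp}}$ may be discarded without increasing norms.

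First I would observe that both matrices in the statement are positive semidefinite. Since $\c_{\alpha}\ge 0$ and $\w_{\alphab}^{2}=\w_{\alphab}^{T}\w_{\alphab}\succeq\bm{0}$, the matrix $\sum_{\alphab}\c_{\alpha}\w_{\alphab}^{2}$ is PSD; likewise $\P_{\T^{\perp}}\w_{\alphab}=\P_{\U^{\perp}}\w_{\alphab}\P_{\U^{\perp}}$ is symmetric, so $(\P_{\T^{\perp}}\w_{\alphab})^{2}\succeq\bm{0}$ and the weighted sum on the left is PSD. By the variational characterization of the largest eigenvalue, it therefore suffices to show that for every unit vector $\varphi$,
\[
\sum_{\alphab}\c_{\alpha}\,\big\langle\varphi,(\P_{\T^{\perp}}\w_{\alphab})^{2}\varphi\big\rangle\;\le\;\Big\|\sum_{\alphab}\c_{\alpha}\w_{\alphab}^{2}\Big\|.
\]

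The second step is the pointwise bound. Fixing a unit vector $\varphi$ and using symmetry of $\P_{\T^{\perp}}\w_{\alphab}$, the left-hand summand equals $\c_{\alpha}\|\P_{\T^{\perp}}\w_{\alphab}\varphi\|_{2}^{2}=\c_{\alpha}\|\P_{\U^{\perp}}\w_{\alphab}\P_{\U^{\perp}}\varphi\|_{2}^{2}$. Since $\P_{\U^{\perp}}$ has operator norm at most $1$, this is at most $\c_{\alpha}\|\w_{\alphab}\P_{\U^{\perp}}\varphi\|_{2}^{2}=\c_{\alpha}\langle\P_{\U^{\perp}}\varphi,\w_{\alphab}^{2}\P_{\U^{\perp}}\varphi\rangle$, the last equality using again that $\w_{\alphab}$ is symmetric. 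Summing over $\alphab$ with the nonnegative weights $\c_{\alpha}$, and then using $\sum_{\alphab}\c_{\alpha}\w_{\alphab}^{2}\succeq\bm{0}$ together with $\|\P_{\U^{\perp}}\varphi\|_{2}\le\|\varphi\|_{2}=1$, I obtain
\[
\sum_{\alphab}\c_{\alpha}\,\big\langle\varphi,(\P_{\T^{\perp}}\w_{\alphab})^{2}\varphi\big\rangle\;\le\;\Big\langle\P_{\U^{\perp}}\varphi,\Big(\sum_{\alphab}\c_{\alpha}\w_{\alphab}^{2}\Big)\P_{\U^{\perp}}\varphi\Big\rangle\;\le\;\Big\|\sum_{\alphab}\c_{\alpha}\w_{\alphab}^{2}\Big\|.
\]
Taking the supremum over all unit vectors $\varphi$ on the left-hand side then yields the claimed inequality. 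The whole argument is essentially one line once the reduction to quadratic forms is made; the role of the hypothesis $\c_{\alpha}\ge 0$ is precisely to guarantee that the two sums are PSD, so that their operator norms are their largest eigenvalues and the termwise contraction estimate can be summed directly.
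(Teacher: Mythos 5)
Your argument is correct; every step checks out. It is built on the same underlying fact as the paper's proof---that $\P_{\U^{\perp}}$ is a contraction---but the execution is genuinely different in the one place that matters. The paper stays at the matrix level: it rewrites the left-hand side as $\P_{\U^{\perp}}\left(\sum_{\alphab}\c_{\alpha}\w_{\alphab}\P_{\U^{\perp}}\w_{\alphab}\right)\P_{\U^{\perp}}$, drops the two outer projections via $\|\P\X\P\|\le\|\X\|$, and is then left with the \emph{inner} $\P_{\U^{\perp}}$ sandwiched between the two copies of $\w_{\alphab}$. Disposing of that inner projection is the delicate step, and the paper handles it by splitting $\sum_{\alphab}\c_{\alpha}\w_{\alphab}^{2}=\sum_{\alphab}\c_{\alpha}\w_{\alphab}\P_{\U^{\perp}}\w_{\alphab}+\sum_{\alphab}\c_{\alpha}\w_{\alphab}\P_{\U}\w_{\alphab}$, checking that both summands are positive semidefinite, and invoking $\|\bm{A}+\bm{B}\|\ge\max(\|\bm{A}\|,\|\bm{B}\|)$ for PSD $\bm{A},\bm{B}$. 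You avoid that splitting entirely: by passing to quadratic forms, the term $\langle\varphi,(\P_{\T^{\perp}}\w_{\alphab})^{2}\varphi\rangle=\|\P_{\U^{\perp}}\w_{\alphab}\P_{\U^{\perp}}\varphi\|_{2}^{2}$ lets you discard the leftmost projection by contraction and absorb the remaining one into the test vector $\psi=\P_{\U^{\perp}}\varphi$, which still satisfies $\|\psi\|_{2}\le 1$. What your route buys is economy: it needs only the variational characterization of the norm of a PSD matrix and the contraction property, with no auxiliary norm inequality for sums of PSD matrices; the paper's route keeps the statement purely operator-algebraic, which generalizes more directly if one wanted to replace the outer $\P_{\U^{\perp}}$'s by a non-projection. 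One cosmetic point: in your final display the middle quantity should be compared to $\|\sum_{\alphab}\c_{\alpha}\w_{\alphab}^{2}\|\,\|\P_{\U^{\perp}}\varphi\|_{2}^{2}$ before bounding $\|\P_{\U^{\perp}}\varphi\|_{2}^{2}$ by $1$; you state this correctly in the prose, so nothing is missing.
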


\begin{proof}
Using the definition of $\P_{\T^{\perp}}\,\w_{\alphab}$, 
$\bigg|\bigg|\sum_{\alphab} \c_{\alpha}\, (\P_{\T^{\perp}}\,\w_{\alphab})^{2}\bigg|\bigg|$ can be written
as follows.
\[
\bigg|\bigg|\sum_{\alphab} \c_{\alpha}\, (\P_{\T^{\perp}}\,\w_{\alphab})^{2}\bigg|\bigg| = 
\bigg|\bigg|\sum_{\alphab} \c_{\alpha}\, \P_{\U^{\perp}}\,\w_{\alphab}\,\P_{\U^{\perp}}\,\w_{\alphab}\,\P_{\U^{\perp}}\bigg|\bigg|
= \bigg|\bigg|\P_{\U^{\perp}}\left( \sum_{\alphab} \c_{\alpha}\, \w_{\alphab}\,\P_{\U^{\perp}}\,\w_{\alphab}\right)\P_{\U^{\perp}}\bigg|\bigg|
\]
Using the fact that the operator norm is unitarily invariant and $||\P\X\P||\le ||\X||$ for any $\X$ and a projection $\P$,
$\bigg|\bigg|\sum_{\alphab} \c_{\alpha}\, (\P_{\T^{\perp}}\,\w_{\alphab})^{2}\bigg|\bigg|$ can be upper bounded as follows
\[
\bigg|\bigg|\sum_{\alphab} \c_{\alpha}\, (\P_{\T^{\perp}}\,\w_{\alphab})^{2}\bigg|\bigg| 
\le \bigg|\bigg|\sum_{\alphab} \c_{\alpha}\, \w_{\alphab}\,\P_{\U^{\perp}}\,\w_{\alphab}\bigg|\bigg|
= \bigg|\bigg|\sum_{\alphab} \c_{\alpha}\, \w_{\alphab}\,\left(\w_{\alphab}-\P_{\U}\,\w_{\alphab}\right)\bigg|\bigg|
= \bigg|\bigg|\sum_{\alphab} [\c_{\alpha}\, \w_{\alphab}^{2}-\c_{\alpha}\,\w_{\alphab}\,\P_{\U}\,\w_{\alphab}]\bigg|\bigg|
\]
where the first equality follows from the relation $\P_{\U^{\perp}} = \I -\P_{\U}$. Since $\w_{\alphab}=\w_{\alphab}^{T}$
and $\c_{\alpha}\ge 0$, $\sum_{\alphab} \c_{\alpha}\, \w_{\alphab}^{2}$ is positive semidefinite.
Using the relation $\P_{\U}^{2} = \P_{\U}$ and the assumption that $\c_{\alpha}\ge 0$, 
$\sum_{\alphab} \c_{\alpha}\w_{\alphab}\,\P_{\U}\,\w_{\alphab}=\sum_{\alphab}\c_{\alpha}  \w_{\alphab}\,\P_{\U}\,\P_{\U}\,\w_{\alphab}$ is 
also positive semidefinite. Repeating the same argument, 
$\sum_{\alphab} \c_{\alpha}\, \w_{\alphab}\,\P_{\U^{\perp}}\,\w_{\alphab}$ is also
positive semidefinite. Finally, the norm inequality, $||\bm{A}+\bm{B}||\ge \max(||\bm{A}||,||\bm{B}||)$, for
positive semidefinite matrices $\bm{A}$ and $\bm{B}$, concludes the proof. 
\end{proof}

\begin{lemma} \label{joint_coherence_size_golfing}
	Define $\displaystyle \eta(\X) = \max_{\betab\in\Us} \,|\langle \X\,,\v_{\betab} \rangle|$.
  For a fixed $\X \in \T$, the following estimate holds. 
  \begin{equation}
    \textrm{Pr}\,(\max_{\betab\in\Us}\,\,|\langle \P_{\T}\R^{*}_{j}\X-\X\,,\v_{\betab}\rangle|\ge t) \le n^{2} \exp\left(-\frac{3t^2\kappa_j}{16\eta(\X)^{2}\left(\nu+\frac{1}{nr}\right)}\right)
  \end{equation}
  for all $t\le \eta(\X)$ with $\displaystyle\kappa_{j} = \frac{m_{j}}{nr}$. 
  \end{lemma}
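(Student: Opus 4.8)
The plan is to express $\P_{\T}\R^{*}_{j}\X - \X$ as a sum of $m_j$ i.i.d. mean-zero matrices and then control, for each fixed dual basis element $\v_{\betab}$, the scalar random variable $\langle \P_{\T}\R^{*}_{j}\X - \X\,,\v_{\betab}\rangle$ via the scalar Bernstein inequality; a union bound over the $|\Us| = L \le n^2$ choices of $\betab$ then yields the stated $n^2$ prefactor. Concretely, since $\X \in \T$ we have $\R^{*}_j\X = \frac{L}{m_j}\sum_{\alphab\in\Omega_j}\langle \X\,,\v_{\alphab}\rangle\,\w_{\alphab}$ (using $\langle \X,\w_{\alphab}\rangle = \langle \X,\v_{\alphab}\rangle$ is \emph{not} what we want — rather we keep $\R_i = \frac{L}{m_i}\sum \langle\cdot,\w_{\alphab}\rangle\v_{\alphab}$, so $\R^{*}_i = \frac{L}{m_i}\sum\langle\cdot,\v_{\alphab}\rangle\w_{\alphab}$), hence
\[
\P_{\T}\R^{*}_{j}\X - \X = \sum_{\alphab\in\Omega_j}\left[\frac{L}{m_j}\langle \X\,,\v_{\alphab}\rangle\,\P_{\T}\w_{\alphab} - \frac{1}{m_j}\X\right],
\]
where the centering identity $E\big[\frac{L}{m_j}\langle \X,\v_{\alphab}\rangle\P_{\T}\w_{\alphab}\big] = \frac{1}{m_j}\P_{\T}\X = \frac{1}{m_j}\X$ uses $\X\in\T$ and the dual-basis expansion $\X = \sum_{\alphab}\langle\X,\w_{\alphab}\rangle\v_{\alphab}$ paired appropriately. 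Fix $\betab$ and set $Z_{\alphab} = \frac{L}{m_j}\langle\X,\v_{\alphab}\rangle\langle\P_{\T}\w_{\alphab},\v_{\betab}\rangle - \frac{1}{m_j}\langle\X,\v_{\betab}\rangle$, a mean-zero scalar; then $\langle\P_{\T}\R^{*}_j\X - \X,\v_{\betab}\rangle = \sum_{\alphab\in\Omega_j}Z_{\alphab}$.

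Next I would estimate the two Bernstein parameters. For the almost-sure bound: using $|\langle\X,\v_{\alphab}\rangle|\le\eta(\X)$ and Cauchy–Schwarz together with the coherence bounds \eqref{eq:coherencew}, \eqref{eq:coherencev} — specifically $|\langle\P_{\T}\w_{\alphab},\v_{\betab}\rangle| = |\langle\w_{\alphab},\P_{\T}\v_{\betab}\rangle|\le\|\w_{\alphab}\|_F\|\P_{\T}\v_{\betab}\|_F$, or more sharply $\sum_{\betab}\langle\P_{\T}\w_{\alphab},\v_{\betab}\rangle^2 \le 2\nu r/n$ via \eqref{eq:coherencew} and Lemma \ref{norm_equivalent} — one gets $|Z_{\alphab}|\lesssim \frac{L}{m_j}\eta(\X)\sqrt{\nu r/n}$, i.e. on the order of $R = \frac{n}{m_j}\eta(\X)\sqrt{\nu n r}\cdot(\text{const})$, which after writing $\kappa_j = m_j/(nr)$ becomes $R \sim \frac{\eta(\X)}{m_j}\sqrt{(\nu + \tfrac{1}{nr})}\cdot(\text{poly})$; I will track constants so that $R$ matches $\frac{\sigma^2}{\kappa_j\eta(\X)^{-2}}$-type scaling. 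For the variance: $\sum_{\alphab\in\Omega_j}E[Z_{\alphab}^2] \le \frac{L}{m_j}\sum_{\alphab\in\Us}\langle\X,\v_{\alphab}\rangle^2\langle\P_{\T}\w_{\alphab},\v_{\betab}\rangle^2 \le \frac{L}{m_j}\eta(\X)^2\max_{\betab}\sum_{\alphab}\langle\P_{\T}\w_{\alphab},\v_{\betab}\rangle^2$. Here is where I need a symmetric version of the coherence estimate: $\sum_{\alphab}\langle\P_{\T}\w_{\alphab},\v_{\betab}\rangle^2 = \sum_{\alphab}\langle\w_{\alphab},\P_{\T}\v_{\betab}\rangle^2 \le 2n\|\P_{\T}\v_{\betab}\|_F^2 \le 2n\cdot 8\nu r/n = 16\nu r$ by \eqref{eq:coherencev} and Lemma \ref{norm_equivalent}. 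Combined with $L\le n^2/2$ this gives $\sigma^2 \lesssim \frac{n^2}{m_j}\eta(\X)^2\cdot\nu r = \frac{\eta(\X)^2 \nu n r}{m_j}\cdot n$; after absorbing the $\frac{1}{nr}$ correction term from the centering contribution one lands at $\sigma^2$ on the order of $\frac{n\eta(\X)^2(\nu + \frac{1}{nr})}{m_j}\cdot nr = \eta(\X)^2(\nu+\tfrac{1}{nr})\cdot\frac{1}{\kappa_j}$ up to constants.

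Finally, apply the scalar Bernstein inequality \eqref{eq:bern} in the regime $t\le\sigma^2/R$, which the hypothesis $t\le\eta(\X)$ is designed to guarantee (checking $\sigma^2/R \gtrsim \eta(\X)$ reduces to $\kappa_j \lesssim n r$-type inequalities that hold since $m_j \le m$), obtaining a bound of the form $n\exp(-3t^2/(8\sigma^2))$ for each fixed $\betab$; substituting the value of $\sigma^2$ gives $n\exp\!\big(-\tfrac{3t^2\kappa_j}{16\eta(\X)^2(\nu+\frac{1}{nr})}\big)$ once constants are chosen to absorb into the $16$. A union bound over at most $L < n^2$ values of $\betab$ multiplies the prefactor by $n$, yielding the claimed $n^2\exp(\cdot)$. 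I expect the main obstacle to be bookkeeping the constants: the factor-of-$16$ in the exponent and the exact $(\nu + \frac{1}{nr})$ combination must come out precisely, which requires being careful that the contribution of the centering term $\frac{1}{m_j}\langle\X,\v_{\betab}\rangle$ to both $R$ and $\sigma^2$ is exactly what produces the additive $\frac{1}{nr}$ (via $\|\X\|_F^2 \le $ something and $\langle\X,\v_{\betab}\rangle^2\le\|\X\|_F^2$), mirroring the analogous step in Lemma \ref{injective}. The only genuinely new ingredient beyond Lemma \ref{injective} is the extra $\langle\cdot,\v_{\betab}\rangle$ pairing and the union bound over $\betab$, for which the symmetric coherence estimate $\sum_{\alphab}\langle\P_{\T}\w_{\alphab},\v_{\betab}\rangle^2 \le 16\nu r$ (derived above from \eqref{eq:coherencev} and Lemma \ref{norm_equivalent}) is the key lemma.
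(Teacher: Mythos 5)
Your overall architecture is exactly the paper's: write $\P_{\T}\R^{*}_{j}\X-\X$ as a sum of $m_j$ i.i.d.\ centered terms, fix $\betab$, apply scalar Bernstein to $\sum_{\alphab\in\Omega_j}Z_{\alphab}$, and union bound over the $L<n^2$ choices of $\betab$. However, two of your quantitative estimates are too weak to yield the stated bound, and both slips occur at exactly the places where the paper leans on the coherence hypotheses in their full strength.

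First, the variance. You bound $\sum_{\alphab\in\Us}\langle \P_{\T}\w_{\alphab},\v_{\betab}\rangle^{2}=\sum_{\alphab\in\Us}\langle \w_{\alphab},\P_{\T}\v_{\betab}\rangle^{2}\le 2n\|\P_{\T}\v_{\betab}\|_{F}^{2}\le 16\nu r$ via Lemma \ref{norm_equivalent} and \eqref{eq:coherencev}. The paper instead invokes the coherence condition \eqref{eq:coherencev1} directly, which gives $\sum_{\alphab\in\Us}\langle \P_{\T}\v_{\betab},\w_{\alphab}\rangle^{2}\le 4\nu r/n$ --- smaller by a factor of $4n$. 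With your route, $\sigma^{2}$ is of order $\frac{L}{m_j}\eta(\X)^{2}\nu r\sim\frac{n^{2}}{m_j}\eta(\X)^{2}\nu r$, and the exponent becomes $-\frac{3t^{2}\kappa_j}{C\,n\,\nu\,\eta(\X)^{2}}$, i.e.\ an extra factor of $n$ in the denominator; your sentence ``one lands at $\sigma^2$ on the order of $\eta(\X)^2(\nu+\tfrac{1}{nr})\kappa_j^{-1}$'' silently drops this factor of $n$ (your own displayed intermediate value $\frac{\eta(\X)^2\nu nr}{m_j}\cdot n$ equals $n\cdot\eta(\X)^2\nu\kappa_j^{-1}$, not $\eta(\X)^2\nu\kappa_j^{-1}$). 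A tail of $n^{2}\exp(-c\kappa_j/n)$ would force $m_j$ of order $n^{2}r$, destroying the main theorem. This is precisely why \eqref{eq:coherencev1} is stated as a separate coherence assumption rather than derived from the Frobenius-norm bound \eqref{eq:coherencev}.

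Second, the almost-sure bound $R$. Your Cauchy--Schwarz $|\langle\w_{\alphab},\P_{\T}\v_{\betab}\rangle|\le\|\w_{\alphab}\|_{F}\|\P_{\T}\v_{\betab}\|_{F}$ gives only $O(\sqrt{\nu r/n})$, whereas using $\langle\P_{\T}\w_{\alphab},\v_{\betab}\rangle=\langle\P_{\T}\w_{\alphab},\P_{\T}\v_{\betab}\rangle$ and projecting \emph{both} factors gives $|\langle\P_{\T}\w_{\alphab},\P_{\T}\v_{\betab}\rangle|\le\|\P_{\T}\w_{\alphab}\|_{F}\|\P_{\T}\v_{\betab}\|_{F}\le\sqrt{2\nu r/n}\cdot\sqrt{8\nu r/n}=4\nu r/n$. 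The paper's choice yields $R=\frac{1}{m_j}\eta(\X)(1+2nr\nu)$ and hence $\sigma^{2}/R=\frac{2\eta(\X)(1+nr\nu)}{1+2nr\nu}>\eta(\X)$, so the subgaussian branch of Bernstein covers the entire claimed range $t\le\eta(\X)$. With your $R$, which is larger by roughly $\sqrt{n/(\nu r)}$, one gets $\sigma^{2}/R$ of order $\eta(\X)\sqrt{\nu r/n}\ll\eta(\X)$, so the regime check $t\le\sigma^{2}/R$ fails exactly where the lemma is used downstream (at $t=\tfrac12\eta(\Q_{i-1})$ in Lemma \ref{construction_of_y}); your claim that this check ``reduces to $\kappa_j$-type inequalities that hold'' is not correct for your value of $R$. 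Both defects are repairable by substituting the sharper coherence estimates, after which your argument coincides with the paper's.
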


\begin{proof}

  For some $\v_{\betab}$, expand $\langle \P_{\T}\R^{*}_{j}\X-\X\,,\v_{\betab}\rangle$
  in the following way: $$\langle \P_{\T}\R^{*}_{j}\X-\X\,,\v_{\betab}\rangle = 
	\langle \sum_{\alphab\in\Omega_j} \frac{L}{m_j}\langle \X\,,\v_{{\alphab}}\rangle \P_{\T}\,\w_{\alphab}-\X\,,\v_{\betab}\rangle
  = \sum_{\alphab\in\Omega_j} \left(\frac{L}{m_j}\langle \X\,,\v_{\alphab}\rangle \langle \P_{\T}\,\w_{\alphab}\,,\v_{\betab}\rangle -\frac{1}{m_j} \langle \X\,,\v_{\betab}\rangle\right)$$
  Note that the summand can be written as $Y_{\alphab} = X_{\alphab}-E[X_{\alphab}]$. 
	By construction, $E[Y_{\alphab}]=0$. 
To apply Bernstein inequality, it remains to compute a suitable bound for $|Y_{\alphab}|$ and $|E[Y_{\alphab}^{2}]|$.
$|Y_{\alphab}|$ is bounded as follows.
  \begin{align*}
  |Y_{\alphab}| &= \bigg|\frac{L}{m_j}\langle \X\,,\v_{\alphab}\rangle \langle \P_{\T}\,\w_{\alphab}\,,\v_{\betab}\rangle -\frac{1}{m_j} \langle \X\,,\v_{\betab}\rangle\bigg|
  \le \frac{L}{m_j} \eta(\X)\,4\frac{\nu r}{n}+\frac{1}{m_j}\eta(\X)\\
  & < \frac{1}{m_j}\eta(\X)\left(1+2nr\nu\right) 
  \end{align*}
  Above, the first inequality follows from the coherence estimates \eqref{eq:coherencew} and \eqref{eq:coherencev}.
	Next, we consider a bound $|E[Y_{\alphab}^{2}]|$. Since
  $E[Y_{\alphab}^{2}] = E[X_{\alphab}^{2}]-(E[X_{\alphab}])^{2}$,  $E[Y_{\alphab}^{2}]$ can be upper bounded as follows
  \begin{align*}
  E[Y_{\alphab}^{2}]& \le   E\bigg[\frac{L^2}{m_{j}^2}\langle \X\,,\v_{\alphab}\rangle^{2} \langle \P_{\T}\,\w_{\alphab}\,,\v_{\betab}\rangle^{2}\bigg]+
                                     \frac{1}{m_j^2} \langle \X\,,\v_{\betab}\rangle^{2}\\
                                     & = \frac{L}{m_{j}^2}\sum_{\alphab\in \Us} \langle \X\,,\v_{\alphab}\rangle^{2} \langle \P_{\T}\,\w_{\alphab}\,,\v_{\betab}\rangle^{2}
                                     +     \frac{1}{m_j^2} \langle \X\,,\v_{\betab}\rangle^{2}\\
                                     & \le  \eta(\X)^{2} \frac{n^2}{2m_{j}^2}\sum_{\alphab\in \Us} \langle \P_{\T}\,\w_{\alphab}\,,\v_{\betab}\rangle^{2}
                                     +  \frac{1}{m_j^2} \eta(\X)^{2}  \\                                     
                                     & \le \eta(\X)^{2} \frac{2nr\nu}{m_j^2} + \frac{2}{m_j^2} \eta(\X)^{2}     = \frac{2\eta(\X)^2}{m_j^2}\left(1+nr\nu\right)                                                 
    \end{align*}
Above, the last inequality follows from the coherence estimate \eqref{eq:coherencev1}.
 Finally, apply the scalar Bernstein inequality
with $\displaystyle R= \frac{\eta(\X)\left(1+2nr\nu\right)}{m_j} $ and $\displaystyle \sigma^2=\frac{2\eta(\X)^2\left(1+nr\nu\right)}{m_j}  $. For $t\le \frac{\sigma^{2}}{R}>\eta(\X)$, 
\begin{equation}
    \textrm{Pr}(|\langle \P_{\T}\R^{*}_{j}\X-\X\,,\v_{\betab}\rangle|\ge t) \le  \exp\left(-\frac{3t^2\kappa_j}{16\eta(\X)^{2}\left(\nu+\frac{1}{nr}\right)}\right)
  \end{equation}
The proof of Lemma \ref{joint_coherence_size_golfing} concludes by simply applying the
union bound.
\end{proof}

\end{document}